\documentclass[9pt]{article}

\usepackage{color}
\usepackage{comment}

\newif\ifusepackageA
\usepackageAtrue      
\usepackage[utf8]{inputenc}
\usepackage[T1]{fontenc}
\usepackage{lmodern}
\usepackage{float} 
\usepackage{booktabs}
\usepackage{xcolor}
\usepackage{stmaryrd} 
\usepackage[flushmargin,ragged,symbol*]{footmisc}
\usepackage{enumitem} 
\usepackage{multirow}

\usepackage{stmaryrd}
\usepackage{algorithm}
\usepackage{algpseudocode}

\usepackage[dvipsnames]{xcolor}
\usepackage{authblk}
\usepackage{amsmath}
\usepackage{amssymb}
\usepackage{amsthm}

\usepackage{mathtools}

\newtheorem{theorem}{Theorem}[section]
\newtheorem{lemma}[theorem]{Lemma}
\newtheorem{corollary}[theorem]{Corollary}
\newtheorem{definition}[theorem]{Definition}
\newtheorem{proposition}[theorem]{Proposition}

\theoremstyle{remark}
\newtheorem*{example}{Example}
\newtheorem{remark}{Remark}
\usepackage{etoc}

\theoremstyle{definition}
\newtheorem{assumption}[theorem]{Assumptions} 

\newcommand{\Z}{\textbf{\textit{Z}} }

\makeatletter
\newcommand{\Bitem}[1]{\item[\textnormal{(#1)}]\def\@currentlabel{\textnormal{#1}}}
\makeatother

\ifusepackageA
    \usepackage[left=2.4cm, right=2.4cm, marginparwidth=3.2cm]{geometry}
    \usepackage[colorlinks=true,allcolors=NavyBlue]{hyperref}
    \usepackage{threeparttable}
    \usepackage{tablefootnote}
    \usepackage{parskip} 
    \usepackage{natbib}
    \usepackage[capitalize,noabbrev]{cleveref}
\else
    \usepackage{geometry}
\fi

\usepackage{comment}
\usepackage{lipsum}

\usepackage{comment}
\usepackage{lipsum}

\newcommand{\algname}{DExtrI}
\usepackage{tikz-cd}

\newcommand{\C}{\mathbb{C}}
\newcommand{\eps}{\varepsilon}
\newcommand{\Th}{\Theta}
\renewcommand{\th}{\tilde h}

\newcommand{\tf}{\tilde f}
\newcommand{\tg}{\tilde g}
\newcommand{\tb}{\tilde\beta}
\newcommand{\tB}{\widetilde{\mathcal B}}
\newcommand{\tK}{\widetilde K}
\newcommand{\Btr}{\mathcal B}
\newcommand{\Xtr}{\mathcal{X}_{\mathrm{tr}}}

\DeclareMathOperator{\supp}{supp}

\providecommand{\R}{\mathbb{R}}
\providecommand{\tS}{\widetilde S}
\providecommand{\cP}{\mathcal P}
\providecommand{\cR}{\mathcal R}

\usepackage{subcaption}   
\usepackage{caption}      

\title{Distributional Extrapolation for Interactions}
\author[1]{Marin \v{S}ola\thanks{Co-corresponding author: \texttt{marin.sola@stat.math.ethz.ch}}}
\author[2]{Xinwei Shen}
\author[1]{Peter B\"uhlmann\thanks{Co-corresponding author: \texttt{buhlmann@stat.math.ethz.ch}}}
\date{}

\affil[1]{Seminar for Statistics, ETH Z\"urich}
\affil[2]{Department of Statistics, University of Washington}

\begin{document}

\maketitle

\begin{abstract}
Predicting combinatorial effects from limited-range observations is a fundamental challenge in many scientific domains, including drug discovery and hyperparameter optimization. We study combinatorial extrapolation, where training data consists of axis-aligned samples with only one active covariate, while test-time inputs involve multiple simultaneously active covariates.
We introduce DExtrI, a method for extrapolating interaction effects beyond the support of the training data. We provide theoretical guarantees characterizing when such extrapolation is possible. Empirical results on synthetic and real-world datasets demonstrate that \algname\ successfully generalizes to unseen combinations of covariates. Our approach enables applications such as predicting previously untested drug combinations and improving the efficiency of hyperparameter optimization.
\end{abstract}


\ifusepackageA
    \section{Introduction}
\else
\fi

Extrapolation beyond the training distribution is a central challenge in science and engineering, arising in applications such as predicting effective drug combinations \cite{liu2020drugcombdb} or silencing genes in genomics. While many modern models achieve strong in-distribution performance, their behavior outside the support of the training data is often unstable and poorly understood \cite{shen2024engression}. This limitation is particularly severe in settings where acquiring fully combinatorial data is prohibitively expensive or infeasible.

In this work, we focus on combinatorial extrapolation, a regression setting in which training data are axis-aligned, meaning that each sample contains only a single active covariate, while test-time inputs involve multiple simultaneously active covariates. This scenario naturally arises in several domains: in drug discovery, one aims to predict combination effects from single-drug responses; in genomics, multi-gene interactions must be inferred from single-gene perturbations; and in hyperparameter optimization of complex models, joint configurations must be predicted from marginal parameter sweeps (see Section~\ref{sec:empiricalvalidation}). Despite their diversity, these problems share a common structure: models are trained on isolated covariate activations but are required to generalize to unseen combinations at test time.


In this paper, we introduce \textbf{\algname}\ (\textbf{D}istributional \textbf{Extr}apolation for \textbf{I}nteractions), a method that enables combinatorial extrapolation over unseen covariate combinations within the Cartesian product of observed marginal supports. By modeling a random variable $Y\in\mathbb{R}$ as a sum of additive components and projection-pursuit regression functions for interactions, and using a pre-additive error structure,
\algname\ enables generative-style extrapolation to unseen covariate combinations.
We establish identification results showing that, within an identifiable structural model class, the entire response surface can be recovered from the conditional distributions on the axis-aligned training support.
As we will illustrate in Section \ref{sec:empiricalvalidation}, this represents a significant advance over standard machine learning and statistical algorithms based on neural networks, such as regularized empirical risk minimization or the Engression approach \citep{shen2024engression} for (non-combinatorial) extrapolation, which struggle to generalize to combinatorial extrapolation.  
\ifusepackageA    
    \begin{figure}[htbp]
        \centering
        \includegraphics[width=0.75\linewidth]{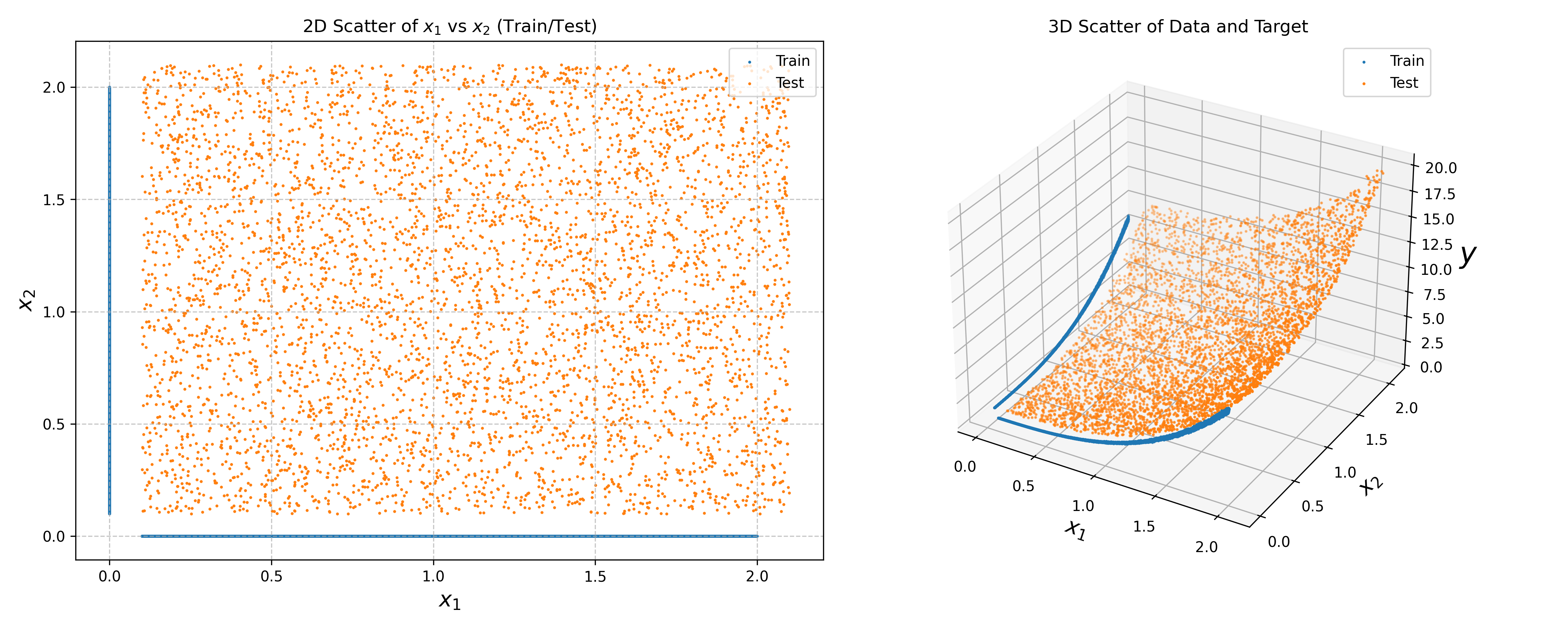}
        \caption{
        Combinatorial extrapolation (conceptual): the data available during training is aligned with the axes (blue) and test data (orange) consists of their combinations. 
        The left subplot displays the distribution of input features $x_1, x_2$ for both training (blue) and test (orange) data. The right subplot extends this to 3-dimensional visualization, showing the relationship between $x_1, x_2$, and the target variable $y$. }\label{fig:first_scatterplots}
    \end{figure}
\else
    \begin{figure}[htbp]
        \begin{minipage}[t]{0.3\textwidth}
            \centering
            \includegraphics[width=0.5\linewidth, height=0.09\textheight, keepaspectratio]{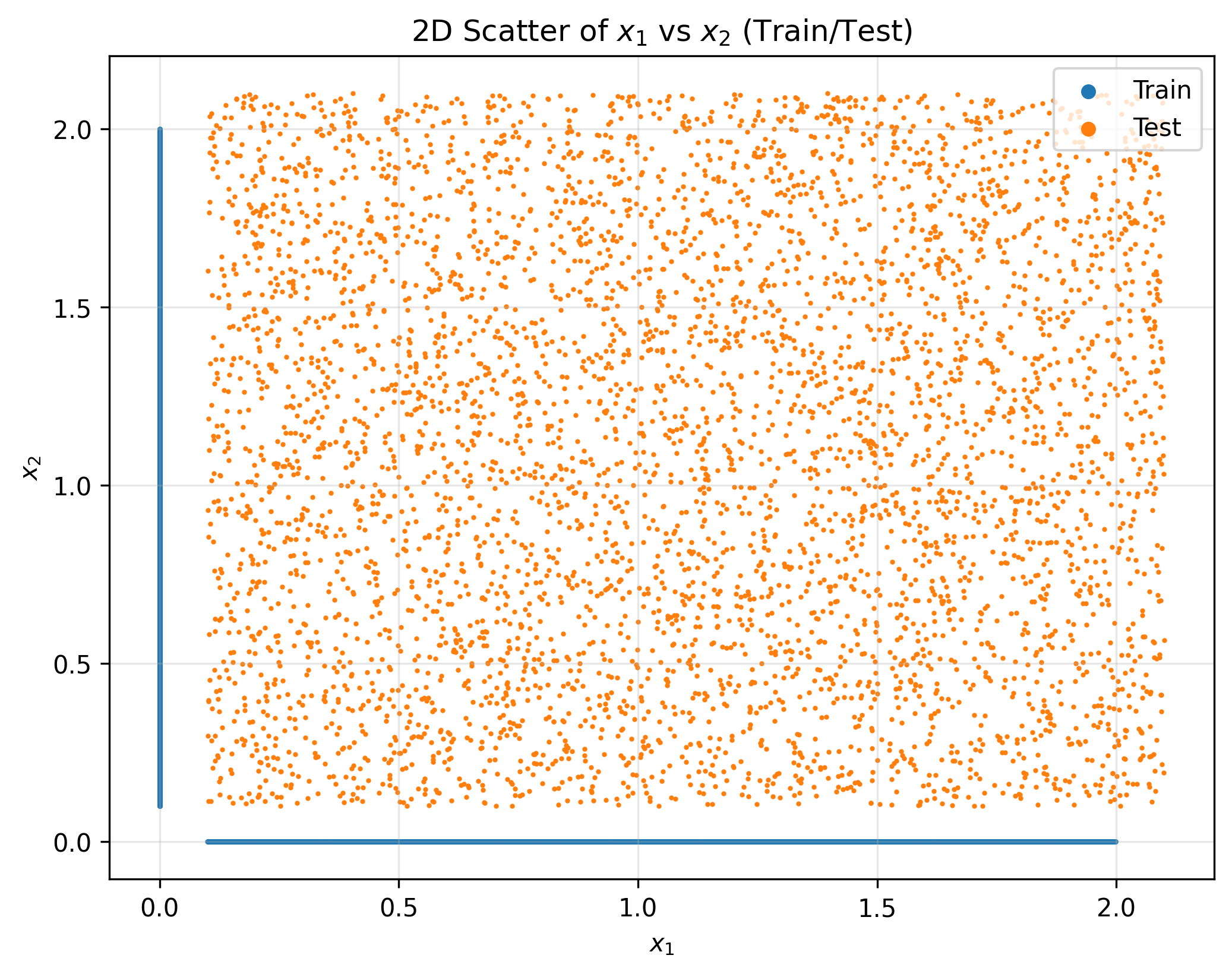}
            \includegraphics[width=0.5\linewidth, height=0.09\textheight, keepaspectratio]{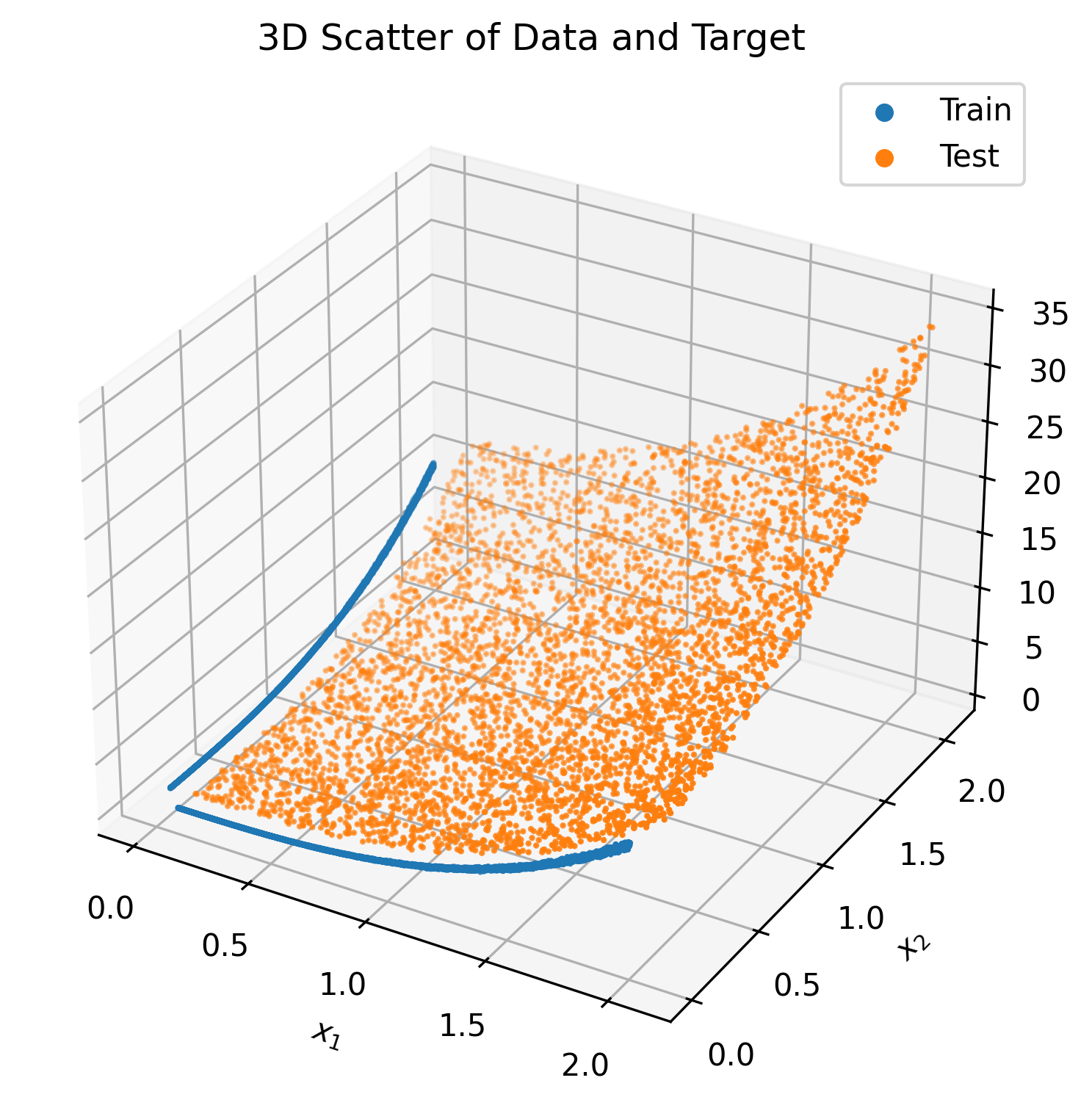}
            \captionof{figure}{Combinatorial extrapolation (conceptual): the data available during training is aligned with the axes (blue) and the test data (orange) consists of their combinations. The upper subplot displays the distribution of input features $x_1, x_2$ for both training (blue) and test (orange) data. The lower subplot shows 3-dimensional visualization of the relationship between $x_1, x_2$, and the target variable $y$.}
            \label{fig:first_scatterplots}
        \end{minipage}
    \end{figure}
\fi

\ifusepackageA \subsection{Related Work}\label{sec:relatedwork}
\else 
\fi

In the next subsection, we discuss closely related work while more general connections are given in the following.

\ifusepackageA \subsubsection{Combinatorial and Compositional Extrapolation}
\else \section{Contributions and Connections to the Literature}\subsection{Combinatorial and Compositional Extrapolation}
\fi

Our work is directly related to combinatorial and compositional generalization, where the goal is to predict outcomes for novel combinations of features that were not jointly observed during training. Matrix completion and missing-not-at-random formulations address related problems under low-rank or spectral decay assumptions \citep{simchowitz2023combinatorial}, while transductive approaches reformulate extrapolation as an imputation problem within an expanded support \citep{netanyahu2023learning}. 

Theoretical work on compositional generalization studies when learned components can be recombined to yield correct predictions on unseen combinations. In particular, recent analyses characterize structural conditions under which such extrapolation is possible, as well as fundamental failure modes when these conditions are violated \citep{fu2024general,lippl2024does}. 
In \cite{ma2026deeplearningmissingdata}, the authors derive optimal statistical performance for regression when modeling the data missingness mechanism with neural networks. 

\ifusepackageA \subsubsection{General relations}
\else \subsection{General relations}
\fi
Empirical and theoretical work has shown that extrapolation behavior strongly depends on model structure or inductive bias. For instance, empirical studies in protein engineering demonstrate that different model classes, ranging from simple parametric models to deep architectures, can vary substantially in their ability to generalize far beyond the training distribution \citep{freschlin2024neural}. Similarly, several works highlight the role of model scale and architecture in determining generalization outside the training support \citep{kawaguchi2017generalization,xu2020neural,chatterjee2022generalization}. 

Another line of research addresses extrapolation through robustness to distributional shifts. Approaches based on risk extrapolation and invariant learning introduce regularization principles designed to reduce sensitivity to domain shifts \citep{nagarajan2020understanding,krueger2021out}. Related ideas include anchor regression and invariant risk minimization, which aim to enforce stability across environments \citep{rothenhausler2021anchor, arjovsky2019invariant}. A complementary perspective emphasizes the curse of dimensionality: in high-dimensional settings, most test points lie outside the convex hull of the training data, effectively turning generalization into an extrapolation problem \citep{balestriero2021learning,nagarajan2020understanding,freschlin2024neural}. This viewpoint motivates methods that explicitly target out-of-support prediction, see also \citep{pfister2024extrapolation}.

A prominent approach to extrapolation is to model the full conditional distribution rather than only the conditional mean. The engression framework \citep{shen2024engression} and related distributional regression methods learn conditional distributions via proper scoring rules and provide guarantees under structural assumptions on the data-generating process, especially by using pre-additive noise instead of the usual post-additive version. These methods can extrapolate beyond the observed training support due to the pre-additive noise structure, something we crucially exploit in our approach as well. 
Related approaches include latent-variable and autoencoder-based models \citep{shen2024distributional}, which aim to learn representations that generalize beyond observed inputs. While these methods are flexible, they do not impose structure tailored to combinatorial extrapolation from highly constrained training supports with axis-aligned data.

\ifusepackageA \subsubsection{Positioning of Our Work}
\else \subsection{Positioning of Our Contribution}
\fi
Our contribution is a model class combining the pre-additive noise structure of engression
\citep{shen2024engression} with the ridge representation of interactions used in
projection pursuit regression \citep{Friedman1981projectionpursuit,
huber1985projection}. The pre-additive noise enlarges the range of the index
arguments beyond that induced by the covariates alone, which is what makes the
axis-aligned design informative about the interaction components.
Section~\ref{sec:identifiabilityExtrapolation} shows that the model is thereby
identified from axis-aligned data and that the conditional law of $Y$ given $X=x$
is determined on the Cartesian product of the marginal supports; neither
conclusion holds for the classical models with post-additive noise, and we briefly discuss some theoretical challenges at the beginning of that section.
Section~\ref{sec:empiricalvalidation} compares the resulting estimator with
engression, neural additive models, and empirical risk minimization on synthetic
designs, drug combination screens, and hyperparameter optimization benchmarks.

\subsection{Outline and Notation}
The paper is organized as follows: Section \ref{sec:Methodology} presents the \algname\ framework and Section \ref{sec:identifiabilityExtrapolation} its identification and extrapolation theory; Section \ref{sec:empiricalvalidation} 
reports empirical results; and Section \ref{sec:discussion} concludes.  

Regarding notation: For a random variable $X$, we write $X\overset{d}{=}\tilde{X}$
to denote equality in distribution, meaning $P[X\leq t]=P[\tilde{X}\leq t]$ for a
probability measure $P$ and all $t$. We denote the $\alpha$-quantile of a random
variable $X$ by $Q_\alpha(X)=\inf\{x:P[X\leq x]\geq\alpha\}$ for $X\sim P$. The
index support set of $\beta_k\in\mathbb{R}^p$ is denoted by $\mathcal{B}_k$, in
symbols, $\mathcal{B}_k=\{i:\beta_{ki}\neq 0\}$. $\mathcal{X}_{\mathrm{tr}}$ and
$\mathcal{X}_i$ denote the training support and marginal coverage of $X_i$,
respectively. For $x\in\mathbb{R}^p$, $\bar{x}_k\in\mathbb{R}$ denotes the
argument of the $k$-th interaction function $g_k$, with
$\bar{x}_k=\beta_k^\top x+\eta_k, \eta_k = h_k(\varepsilon)$. $\varepsilon$ is a random variable distributed
as $\mathcal{N}(0,1)$, with cumulative distribution function
$\Phi(t)=\frac{1}{\sqrt{2\pi}}\int_{-\infty}^t e^{-s^2/2}\,ds$ and quantile
function $\Phi^{-1}:(0,1)\to\mathbb{R}$. For a positive integer $K$, $[K]$
denotes the set $\{1,2,\dots,K\}$. $P_{\mathrm{tr}}$ denotes the distribution of
the training data. $U[a,b]$ denotes the uniform distribution on $[a,b]$ and
$\mathcal{N}(\mu,\sigma^2)$ the Gaussian distribution with mean $\mu$ and variance
$\sigma^2$. $P_{Tx}$ and $P_{\tilde{T}x}$ denote the pointwise laws of
$T(x,\varepsilon)$ and $\tilde{T}(x,\varepsilon)$, respectively. We denote by $D$
the square root of the energy distance as defined in \cite{szekely2003estatistics}. $\mathbb{C}[u]:=\{\text{polynomials in }u\text{ with coefficients in }\mathbb{C}\}$.

\section{The DExtrI Model and Algorithm}\label{sec:Methodology}

We model the response variable $Y \in \mathbb{R}$ using an additive structure combined with $K$ interaction components, each represented by a ridge function. For covariates $X = (X_j)_{j=1}^p \in \mathbb{R}^p$, the model is given by
\begin{align}\label{adeng_truemodel}
    Y \;=\; \sum_{j=1}^p f_j(X_j) \;+\; \sum_{k=1}^K g_k(\beta_k^\top X + h_k(\varepsilon)),
\end{align}
where $f_j, g_k , h_k : \mathbb{R} \to \mathbb{R}$ are differentiable univariate functions, $\beta_k$ are coefficient vectors, 
and $\varepsilon \sim \mathcal{N}(0,1)$ is independent of $X$. The same $\varepsilon$ is used for all $k$, which induces dependence among the $h_k(\varepsilon)$. We always assume that each $g_k$ is nonlinear; otherwise, the model reduces to an additive model with the usual post-additive noise, for which combinatorial extrapolation is straightforward but does not permit interactions. An important additional assumption is that $\sum_{k=1}^K g_k(\beta_k^\top X + h_k(\varepsilon))$ is strictly monotone in $\varepsilon$. We will discuss this assumption below in several places.

We refer to $h_k(\varepsilon)$ (sometimes also written $\eta_k$) as \emph{pre-additive noise} as in \cite{shen2024engression}, since it is added to the argument of the interaction function prior to applying $g_k$. This feature is crucial for extrapolation: it enlarges the effective support of the argument of each $g_k$. For instance, if $h_k(\varepsilon)$ has full support on $\mathbb{R}$, then the argument $\beta_k^\top X + h_k(\varepsilon)$ also spans $\mathbb{R}$, even when $\beta_k^\top X$ itself has limited support \cite{shen2024engression}.

Model \eqref{adeng_truemodel} is related to projection pursuit regression \citep{Friedman1981projectionpursuit, hall1989onprojectionpursuit} and single-index models \citep{ICHIMURA199371}. Projection pursuit regression is known to provide flexible function approximation as $K$ increases; thus, moderate values of $K > 1$ already yield a substantially richer function class than single-index models. There are key distinctions though: due to the fact that existing work mentioned above does not consider combinatorial extrapolation, the classical approaches do not need to make a monotonicity assumption. Also, in contrast to \eqref{adeng_truemodel}, these models assume post-additive errors. And finally, we use adjustment with additive functions which enables combinatorial extrapolation with corresponding monotonicity assumptions only for the interaction component: this is a natural separation into parts with no (additive) and with combinatorial interaction, something which also gave better empirical results.
We also note that pre-additive noise models have been studied in the context of causal inference \citep{zhang2012identifiability}.

Identifiability results for model \eqref{adeng_truemodel} are provided in Section~\ref{sec:identifiabilityExtrapolation}. To ensure identifiability, constraints must be imposed on the functions $f_j$ and $g_k$, for example, by requiring zero mean and including a global intercept term. We also assume that the representation in \eqref{adeng_truemodel} is \emph{minimal} in the sense that no two interaction components have the same covariate support. This minimality condition is a structural assumption required for identifiability.
Furthermore, as formalized in Assumption~\hyperref[assumptions]{A4}, each interaction function $g_k$ is assumed to depend on at least two covariates, ensuring that it represents a genuine interaction rather than a univariate effect; and finally, we will assume that $\sum_{k=1}^K g_k(\beta_k^\top X + h_k(\eps))$ is strictly monotone in $\varepsilon$ (through $h_k(\varepsilon)$), which holds, for example, if each $g_k$ and $h_k$ is strictly increasing. 

We note that model \eqref{adeng_truemodel} with $K=1$ is closely related to Engression \citep{shen2024engression}, with an additive function component here replacing the linear component used in Engression. Engression develops the theory of how distributional information permits non-combinatorial extrapolation of a nonlinear function in the univariate setting $p=1$. In contrast, we consider combinatorial extrapolation from axis-aligned data in the multivariate setting $p \ge 2$, with $K \ge 1$, and establish identification of the multivariate index directions.

\ifusepackageA\paragraph{Some intuition for why model \eqref{adeng_truemodel} enables combinatorial extrapolation.}
\else \subsection{Some intuition for why model \eqref{adeng_truemodel} enables combinatorial extrapolation}
\fi 

A crucial component  for identifiability from axis-aligned training distributions is given by strict monotonicity of $\sum_{k=1}^K g_k(\beta_k^\top X + h_k(\varepsilon))$ in $\varepsilon$. Consider for simplicity the following special case of \eqref{adeng_truemodel} with no additive functions and $K=1$:
\begin{eqnarray*}
    Y = g(\beta^\top X + h(\varepsilon))
\end{eqnarray*}
One can learn the conditional training distribution $P_{\text{tr}}(Y|X=x)$ for all axis-aligned $x$. For fixed axis-aligned $x$, we can do distributional matching of $P_{\text{tr}}(Y|X=x)$ and $g(\beta^\top x + h(\varepsilon))$, e.g.\ matching of all quantiles . When varying across all axis-aligned $x$ within $g(\beta^\top x + h(\varepsilon))$ and matching with $P_{\text{tr}}(Y|X=x)$, due to the linear argument $\beta^\top x$ and monotonicity one can then identify $\beta$ and $g$ (and $h$) from axis-aligned data. The mathematical details are given in Section \ref{sec:identifiabilityExtrapolation}. 
In summary, we see that two ingredients are crucial: learning the conditional training distribution $P_{\text{tr}}(Y|X=x)$ for the axis-aligned $x$ and the monotonicity of $g(\beta^\top X + h(\varepsilon))$. We will ensure the first by optimizing a distributional loss (Section \ref{subsec:energyscore}) and the second by assumption. 

Obviously, as with projection pursuit regression or single-index models, due to non-linearity of $g$, we obtain interactions in $X$, e.g.\ with $g(u) = \exp(u)$.

\ifusepackageA \paragraph{Generative interpretation.}
\else \subsection{Generative interpretation}
\fi
Model \eqref{adeng_truemodel} admits a natural generative interpretation of the conditional distribution of $Y$ given $X$. For any fixed $x \in \mathbb{R}^p$, define
\[
F(x) + G(x,\varepsilon)
\;:=\;
\sum_{j=1}^p f_j(x_j)
\;+\;
\sum_{k=1}^K g_k\big(\beta_k^\top x + h_k(\varepsilon)\big),
\]
where $\varepsilon \sim \mathcal{N}(0,1)$ is independent of $X$. Then
\[
Y \mid X = x \;\overset{d}{=}\; F(x) + G(x,\varepsilon),
\]
that is, the model specifies the full conditional distribution of $Y$ given $X=x$ via a deterministic transformation of a low-dimensional noise variable.

This representation makes the model directly generative: given $x$, one can sample $\varepsilon$ and obtain draws from the conditional distribution $Y \mid X = x$. While such a generative view is not unique to our setting, it is particularly convenient here because the noise distribution is fixed and known, and the stochasticity enters through a one-dimensional latent variable passed through smooth transformations.

This perspective is central to our estimation approach (see Section~\ref{subsec:energyscore}), where we compare conditional distributions rather than point predictions. Moreover, it provides a useful lens for combinatorial extrapolation: once the structural components $f_j$, $g_k$, and $h_k$ are learned, the model can generate plausible responses at covariate values outside the training support by propagating the same latent noise $\varepsilon$ through the learned transformations.

\subsection{Distributional regression via energy score optimization}\label{subsec:energyscore}

Our goal is to estimate the true model components using a flexible parameterization, such as a neural network. The parameters are learned by minimizing the empirical counterpart of the expected energy score, which is a strictly proper scoring rule for probability distributions \citep{gneiting2007strictly}. In the following, we present the population formulation, corresponding to the limit of infinitely many data samples.

Let $Z \sim P$ be a random vector with realization $z$. The energy score is defined as
\begin{align*}
    \mathrm{ES}(P, z) = \frac{1}{2}\mathbb{E}_P \| Z - Z' \| - \mathbb{E}_P \| Z - z \|,
\end{align*}
where $Z, Z' \stackrel{\text{i.i.d.}}{\sim} P$, and $\|\cdot\|$ denotes the Euclidean norm. A higher energy score indicates a better agreement between the observation $z$ and the distribution $P$. Moreover, the energy score is strictly proper \citep{szekely2003estatistics, SzekelyRizzo2023}, meaning that for any distribution $P'$,
\begin{align*}
    \mathbb{E}_{Z \sim P}[\mathrm{ES}(P, Z)] \geq \mathbb{E}_{Z \sim P}[\mathrm{ES}(P', Z)],
\end{align*}
with equality if and only if $P' = P$.
In the univariate case, which is the setting considered here, the energy score coincides with the negative continuous ranked probability score (CRPS).
To estimate the conditional distribution, we minimize the negative energy score using samples from a generative model. Specifically, consider the model
\[
T(X,\varepsilon) = F(X) + G(X,\varepsilon) 
= \sum_{j=1}^p f_j(X_j) + \sum_{k=1}^K g_k\big(\beta_k^\top X + h_k(\varepsilon)\big),
\]
where $\varepsilon \sim \mathcal{N}(0,1)$. The population objective function is given by
\begin{align}\label{obj_fct} 
    \hspace*{-0.5cm}\mathcal{L}(T) 
    = \mathbb{E} \bigg[ \|Y - T(X,\varepsilon) \| 
    - \frac{1}{2}\| T(X,\varepsilon) - T(X,\varepsilon') \| \bigg],
\end{align}
where $\varepsilon, \varepsilon' \stackrel{\text{i.i.d.}}{\sim} \mathcal{N}(0,1)$, and $(X,Y) \sim P_{\mathrm{tr}}$ denotes the training distribution (e.g., obtained from axis-aligned data).
We define the minimizer of \eqref{obj_fct} over a model class $\mathcal{M}$ as
\[
\tilde{T} \in \arg \min_{T \in \mathcal{M}} \mathcal{L}(T),
\]
and refer to $\tilde{T}$ as the \emph{engression solution}. In the population limit, this solution recovers the conditional distribution $P_{\mathrm{tr}}(y \mid x)$ in the sense of equality in distribution. This is formalized in the following result.

\begin{lemma}[\cite{shen2024engression}, Proposition 1]\label{lemma_engression_prop1}
Let $\mathcal{X}_{\mathrm{tr}}$ denote the support of the training distribution $P_{\mathrm{tr}}$. Assume that there exists a function $T \in \mathcal{M}$ such that, for all $x \in \mathcal{X}_{\mathrm{tr}}$, the random variable $T(x,\varepsilon)$ (with $\varepsilon \sim \mathcal{N}(0,1)$) follows the conditional distribution $P_{\mathrm{tr}}(y \mid x)$. Then, any population minimizer $\tilde{T}$ of \eqref{obj_fct} satisfies
\[
\tilde{T}(x,\varepsilon) \sim P_{\mathrm{tr}}(y \mid x)
\]
for $P_X-$a.e. $x\in \mathcal{X}_{\mathrm{tr}}$.
\end{lemma}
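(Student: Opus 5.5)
The plan is to reduce the population objective \eqref{obj_fct} to a pointwise statement about the energy score, conditional on $X=x$, and then use strict propriety. Write $P_x := P_{\mathrm{tr}}(y\mid x)$ and let $P_{Tx}$ be the law of $T(x,\varepsilon)$. Since $\varepsilon,\varepsilon'$ are independent of $(X,Y)$, conditioning on $X$ (tower property) gives
\[
\mathcal{L}(T) \;=\; \mathbb{E}_{X}\Big[\, \mathbb{E}_{Y\sim P_X}\big[-\mathrm{ES}(P_{TX},Y)\big]\Big].
\]
The inner term equals $\mathbb{E}\|Y-T(X,\varepsilon)\| - \tfrac12\mathbb{E}\|T(X,\varepsilon)-T(X,\varepsilon')\|$ for fixed $X$.

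To avoid $\infty-\infty$ issues, I assume finite first moments of $Y$ and of $T(x,\varepsilon)$, as in \cite{shen2024engression}. Under this assumption the inner expectation can be rewritten as
\[
\tfrac12 D^2(P_{TX},P_X) + \tfrac12\mathbb{E}\|Y-Y'\|, \qquad Y,Y'\overset{\text{i.i.d.}}{\sim}P_X,
\]
which is the standard energy-distance identity. Hence
\[
\mathcal{L}(T) = \tfrac12\,\mathbb{E}_X\big[D^2(P_{TX},P_X)\big] + C,
\]
where $C$ does not depend on $T$.

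Next I use the hypothesis. There is a $T^\ast\in\mathcal{M}$ with $P_{T^\ast x}=P_x$ for all $x\in\mathcal{X}_{\mathrm{tr}}$, so $\mathcal{L}(T^\ast)=C$. Since $D^2\ge 0$, the value $C$ is the minimum of $\mathcal{L}$ over $\mathcal{M}$.

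Now take any population minimizer $\tilde T$. It must satisfy $\mathcal{L}(\tilde T)\le \mathcal{L}(T^\ast)=C$, and therefore $\mathbb{E}_X[D^2(P_{\tilde T X},P_X)]=0$. A nonnegative function with zero expectation vanishes $P_X$-almost surely, so $D(P_{\tilde Tx},P_x)=0$ for $P_X$-a.e.\ $x$, and these points lie in $\mathcal{X}_{\mathrm{tr}}$ up to a null set. Since the energy distance is a metric on distributions with finite first moment (equivalently, the energy score is strictly proper), this gives $P_{\tilde T x}=P_x$, i.e.\ $\tilde T(x,\varepsilon)\sim P_{\mathrm{tr}}(y\mid x)$.

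The only delicate step is the measurability and integrability bookkeeping. First, the regular conditional distribution $P_x$ must exist; this holds because $Y$ is real-valued. Second, $x\mapsto D^2(P_{Tx},P_x)$ must be measurable, which follows from joint measurability of $T$. Third, the moment assumptions are needed so that the decomposition into $D^2$ plus a constant is legitimate. I would state these moment conditions explicitly, in line with the assumptions in \cite{shen2024engression}, rather than attempt to remove them.
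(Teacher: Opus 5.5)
Your argument is correct and is essentially the proof behind the cited result in \cite{shen2024engression}; the paper gives no separate proof of its own. That argument conditions on $X$, uses strict propriety of the energy score (your decomposition $\mathcal{L}(T)=\tfrac12\mathbb{E}_X[D^2(P_{TX},P_X)]+C$ restates this), observes that the well-specified $T^\ast$ attains the lower bound, and concludes that any minimizer has zero energy distance $P_X$-a.e. Making the finite-first-moment condition explicit is appropriate rather than a deviation: it is the implicit standing assumption under which \eqref{obj_fct} is well defined and strict propriety holds.
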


In other words, minimizing $\mathcal{L}$ enforces that $\tilde{T}$ matches the true conditional distribution in law. This property is crucial, as it enables the identification of the structural components of $T$ within the model class $\mathcal{M}$.

In the sequel, we assume that $\tilde{T}$ is well-approximated through training and thus effectively attains the minimum of $\mathcal{L}$. By Lemma~\ref{lemma_engression_prop1}, it therefore reproduces the true conditional distribution $P_{\mathrm{tr}}(y \mid x)$ for $P_X-$almost every $x\in \Xtr$. Finally, $\mathcal{M}$ represents a flexible class of generative models; in practice, we choose 
$\mathcal{M}$ as a sufficiently expressive class of neural networks, see "Architecture and implementation" in Section \ref{sec:algo}.

\subsection{The DExtrI algorithm}\label{sec:algo}

We now describe the practical implementation of the proposed DExtrI framework, including architectural choices, optimization, and hyperparameter settings.

\emph{Architecture and implementation.}
All component functions $f_j$, $g_k$, and $h_k$ are parameterized using fully connected neural networks. 
Unless stated otherwise, each hidden layer consists of 100 neurons. The networks representing $f_j$ have four layers, 
while those corresponding to $g_k$ are similar, with four layers and a residual connection.

\emph{Optimization.}
Model parameters are optimized using the Adam algorithm~\cite{kingma2014adam} with a fixed learning rate of $10^{-4}$ 
across all experiments. The batch size is typically set to approximately 1000, which in many settings corresponds 
to full-batch training. The number of training epochs depends on the experiment: in synthetic settings, models are trained for roughly 
300 epochs, whereas in real-data applications training may extend to several thousand epochs. In most cases, 
we train for approximately 1000 epochs. Empirically, provided the batch size is sufficiently large, we do not 
observe noticeable overfitting even when training for longer.

\emph{Hyperparameters.}
Hyperparameter selection is guided by simplicity and empirical robustness. In particular, the number of interaction 
components $K$ is typically fixed to $K=1$, as increasing $K$ rarely leads to substantial performance gains. 
While domain knowledge may motivate alternative choices in specific applications, the method appears largely 
insensitive to moderate misspecification of this parameter. 
Importantly, such insensitivity is not easily detectable from axis-aligned training data alone and is better 
assessed through out-of-sample performance, for example via mean squared error on a test set. This observation 
is consistent with findings in the engression literature~\citep{shen2024engression}, where performance is often 
robust to hyperparameter choices. One possible explanation is the use of the energy score loss, which is 
substantially different from classical objectives such as squared error.

Finally, the proposed framework naturally induces a full conditional distribution through its dependence on the 
noise variable $\boldsymbol{\varepsilon}$. This enables straightforward estimation of functionals such as the 
conditional mean and conditional quantiles:
\[
\tilde{\mu}(x) = \mathbb{E}_{\boldsymbol{\varepsilon}}[\tilde{T}(x, \boldsymbol{\varepsilon})], 
\qquad 
\tilde{q}_\alpha(x) = Q_\alpha(\tilde{T}(x, \boldsymbol{\varepsilon})).
\]
In the finite-sample setting, these quantities are approximated empirically using samples from the learned 
conditional distribution. 

Note that the population result in Lemma~\ref{lemma_engression_prop1} concerns recovery of the conditional distribution on the training support for a minimizer over a chosen model class $\mathcal{M}$. The identifiability and extrapolation results in Section~\ref{sec:identifiabilityExtrapolation} apply when the model components in $\mathcal{M}$ satisfy the corresponding assumptions specified there. The neural-network implementation described above does not explicitly enforce all of these assumptions; the same is true for the implementation of the original engression algorithm \cite{shen2024engression}.

\section{Model identifiability and extrapolation properties}\label{sec:identifiabilityExtrapolation}
We develop here precise results on identifiability and extrapolation properties, from training data with axis-aligned support. 

The theoretical problem differs from classical single-index and projection-pursuit models. The covariate support is only a union of coordinate axes and has empty interior, so interaction directions cannot be identified using ordinary multivariate variation or mixed derivatives. In addition, the noise enters inside the nonlinear ridge functions, requiring the ridge directions, profiles, and noise transformations to be disentangled jointly from conditional distributions on the restricted support.

\subsection{Definition of axis-aligned support}  
We consider the \emph{combinatorial extrapolation} setting, where the training data support of the covariates is
\begin{align}\label{defn:train_supports}
\mathcal{X}_{\text{tr}} = \bigcup_{j=1}^p \mathcal{X}_j, \enspace \text{with} \enspace \mathcal{X}_j = \{t e_j : t \in [a_j, A_j]\} \subseteq \mathbb{R}^p,
\end{align}
where $e_j = (0,0,\ldots , 0, 1, 0 \ldots , 0)^\top \in \R^p$ denotes the $j$-th standard basis vector, with the $j$-th component being 1 and 0 elsewhere; also, $[a_j, A_j] \subset \mathbb{R}$ are (nonempty) intervals containing zero. That is, the training data lies along the coordinate axis only, see figure \ref{fig:first_scatterplots}. The test data support for the covariates is the full product space \begin{align}\label{defn:test_support}\mathcal{X}_{\text{test}} = \prod_{j=1}^p [a_j, A_j].\end{align}

\subsection{Identifiability}\label{subsec:identif}
Denote by $T(X,\varepsilon)$ the right-hand side of \eqref{adeng_truemodel}. We define the model class as 
\ifusepackageA
    \begin{align*}
        \mathcal{M} &=\big\{ T(x,\varepsilon) = F(x) + G(x,\varepsilon) : F\in\mathcal{F},\ G\in\mathcal{G} \big\},\ \mbox{where} \\[4pt]
        \mathcal{F} &= \big\{F : F(x)=\sum_{j=1}^p f_j(x_j)\ \text{for differentiable } f_1,\dots,f_p\big\},\\[4pt]
        \mathcal{G} &= \Big\{G : G(x,\varepsilon)=\sum_{k=1}^K g_k(\beta_k^\top x + \eta_k),\ \varepsilon\mapsto G(x,\varepsilon) \ \text{strictly increasing for every }x, \\ & \hspace{2cm}\text{for } g_k, h_k \text{ differentiable}  
         \Big\} ,
    \end{align*}
\else 
    \begin{align*}
        \mathcal{M} &=\big\{ T(x,\varepsilon) = F(x) + G(x,\varepsilon) : F\in\mathcal{F},\ G\in\mathcal{G} \big\},\ \mbox{where} \\[4pt]
        \mathcal{F} &= \big\{F : F(x)=\sum_{i=1}^p f_i(x_i)\ \text{for differentiable } f_1,\dots,f_p\big\},\\[4pt]
        \mathcal{G} &= \Big\{G : G(x,\varepsilon)=\sum_{k=1}^K g_k(\beta_k^\top x + \eta_k),\ \varepsilon\mapsto G(x,\varepsilon) \\ & \text{strictly increasing for every }x; g_k, h_k \text{ differentiable}
        \Big\} ,
    \end{align*}
\fi
Requiring $G(x,\cdot)$ to be increasing fixes the global orientation of the latent variable. Since $\varepsilon\overset{d}{=}-\varepsilon$, the opposite orientation is observationally equivalent under the reflection $\varepsilon\mapsto -\varepsilon$. We additionally adopt the following normalization for the individual warps;
without loss of generality we assume each $h_k$ is strictly increasing, meaning, if
$h_k$ is decreasing, replace $(\beta_k,g_k,h_k)$ by
$(-\beta_k,\,g_k(-\cdot),\,-h_k)$, which leaves $g_k(\beta_k^\top
x+h_k(\varepsilon))$ unchanged. This ensures that each warp $h_k$ preserves the
ordering of $\varepsilon$.

We first give a general definition of identifiability that naturally arises in our model class.
\begin{definition}\label{DEF_model_is_identifiable}
    We say the model is \emph{identifiable in the model class ${\cal M} = ({\cal F},{\cal G})$} if the following holds. Suppose two parameterizations
\ifusepackageA
    \[
      \big((\beta_k)_{k=1}^K,\ (f_j)_{j=1}^p,\ (g_k)_{k=1}^K,\ (h_k)_{k=1}^K\big)
      \quad\text{and} 
      \quad
      \big((\tilde{\beta}_k)_{k=1}^{\tilde K},\ (\tilde f_j)_{j=1}^p,\ (\tilde g_k)_{k=1}^{\tilde K},\ (\tilde h_k)_{k=1}^{\tilde K}\big),
      \]
\else
    \begin{align*}
      \big((\beta_k)_{k=1}^K,\ (f_i)_{i=1}^p,\ (g_k)_{k=1}^K,\ (h_k)_{k=1}^K\big)
      \quad\text{and} \\
      \quad
      \big((\tilde{\beta}_k)_{k=1}^{\tilde K},\ (\tilde f_i)_{i=1}^p,\ (\tilde g_k)_{k=1}^{\tilde K},\ (\tilde h_k)_{k=1}^{\tilde K}\big),
      \end{align*}
\fi
where $(f_j,\tilde{f}_j)_{j=1}^p \in {\cal F}, (g_k,h_k)_{k=1}^K \in {\cal G}, (\tilde{g}_k,\tilde{h}_k)_{k=1}^{\tilde{K}} \in {\cal G}$ yield the same observational conditional distribution \(P_{\mathrm{tr}}(y\mid x)\) for all \(x\in\mathcal X_{\text{tr}}\). Then \(K=\tilde K\), and there exist constants \(b_1,\dots,b_p\), \(\gamma_1,\dots,\gamma_K\) and nonzero scalars \(r_1,\dots,r_K\) such that for all \(j\in[p]\) and \(k\in[K]\) after relabeling
    \ifusepackageA
        \begin{align*}
          f_j(x_j) &= \tilde f_j(x_j) + b_j, \enspace
          g_k(\bar x_k) = \tilde g_k\!\big(\tfrac{1}{r_k}\bar x_k\big) + \gamma_k, \enspace 
          h_k(\varepsilon) = r_k\,\tilde h_k(\varepsilon),\enspace 
          \beta_k = r_k\,\tilde\beta_k,
        \end{align*}
    \else
    \begin{align*}
      f_j(x_j) &= \tilde f_j(x_j) + b_j, \enspace
      g_k(\bar x_k) = \tilde g_k\!\big(\tfrac{1}{r_k}\bar x_k\big) + \gamma_k, \enspace 
      \\ &h_k(\varepsilon) = r_k\,\tilde h_k(\varepsilon),\enspace 
      \beta_k = r_k\,\tilde\beta_k,
    \end{align*}
    \fi
where \(\bar x_k=\beta_k^\top x + h_k(\varepsilon)\). Moreover, $\sum_{k=1}^K \gamma_k + \sum_{j=1}^p b_j=0$.
\end{definition}
It follows that the support of equivalent models remains the same. That is, denote the $k$-th support of the true coefficients by
\begin{eqnarray*}
{\cal B}_k = \mathrm{supp}(\beta_k) = \{j;\ (\beta_k)_j \neq 0\},\ \ k =1, \ldots ,K.
\end{eqnarray*} 
For a $\tilde{\beta}_k$ from an equivalent model, identifiability implies that 
\begin{eqnarray*}
    \mathrm{supp}(\tilde{\beta}_k) = {\cal B}_k\ \mbox{for all}\ k=1,\ldots K,
\end{eqnarray*}
and for $r_k$ in the Definition \ref{DEF_model_is_identifiable} we have that \(r_k=\beta_{k j}/\tilde\beta_{k j}\) for any \(j\in\mathcal B_k\). 
In words, the model is identifiable up to the obvious affine or scale indeterminacies (additive constants for the \(f_j\), and an additive redistribution among the component functions and per-term rescaling of $(\beta_k, h_k)$ with the corresponding inverse argument rescaling of $g_k$). Under mild initial or normalization-type constraints, the additive constants vanish and the scaling degrees of freedom are fixed. 
Identifiability is in terms of an alternative model $\tilde{T}(\cdot,\cdot)$, which can be interpreted as arising from the minimization of the energy score in~\eqref{obj_fct}. Moreover, it is assumed that the models specified by the two parametrizations, $(f_j)_{j=1}^p, (\beta_k,g_k,h_k)_{k=1}^K$ and $(\tilde{f}_j)_{j=1}^p, (\tilde{\beta}_k,\tilde{g}_k,\tilde{h}_k)_{l=1}^{\tilde{K}}$, are observationally equivalent, or $T(x,\varepsilon)\overset{d}{=}\tilde{T}(x,\varepsilon)$ for every $x\in \mathcal{X}_{\mathrm{tr}}$, and $\varepsilon \sim \mathcal{N}(0,1)$.

\begin{assumption}\label{assumptions}
We impose the following conditions on every parameterization in the model
class $\mathcal{M}$. 

\begin{enumerate}[label=\textnormal{(A\arabic*)},ref=\textnormal{A\arabic*},leftmargin=3.2em]
    \item\label{as:A1} \textbf{Functions.}
    Each $g_k$ is continuous and not a polynomial; each $f_j$ is continuous.

    \item\label{as:A2} \textbf{Warp normalization.}
    Each $h_k$ is strictly monotone with median zero, so that $h_k(0)=0$,
    and
    \[
        h_k(\varepsilon)\to\pm\infty
        \qquad\text{as}\qquad
        \varepsilon\to\pm\infty.
    \]

    \item\label{as:AR} \textbf{Analyticity.}
    Each $g_k$, $f_j$, and $h_k$ is real analytic.

    \item\label{as:Supp} \textbf{Supports.}
    $|\mathcal{B}_k|\geq 2$ for every $k$, and the supports
    $\{\mathcal{B}_k\}_{k=1}^K$ are pairwise distinct.

    \item\label{as:C} \textbf{Non-proportional warps.} The warps $\{h_k\}_{k=1}^K$ are pairwise non--proportional.

    \item\label{as:G} \textbf{Distinct growth orders.}
    Any two non-proportional warp functions that occur in parameterizations
belonging to $\mathcal M$ have distinct asymptotic growth orders:
for any such $h,\widetilde h$,
\[
    \left|\frac{h(\varepsilon)}
    {\widetilde h(\varepsilon)}\right|
    \longrightarrow 0
    \qquad\text{or}\qquad
    \left|\frac{h(\varepsilon)}
    {\widetilde h(\varepsilon)}\right|
    \longrightarrow \infty
    \qquad\text{as }\varepsilon\to+\infty.
\]

    \item\label{as:E} \textbf{Exponential--polynomial functions.}
    Every $g_k$ is an exponential polynomial,
    \[
        g_k(u)=\sum_i p_i(u)e^{\mu_i u},
    \]
    with $p_i\in\mathbb{C}[u]$, where $\mathbb{C}[u]=\{\text{polynomials in }u\text{ with coefficients in }\mathbb{C}\}$, and distinct $\mu_i\in\mathbb{R}$.
\end{enumerate}
\end{assumption}

\ref{as:A1}--\ref{as:Supp} are structural and regularity assumptions. Assumption \ref{as:A1} excludes polynomial profiles $g_k$ which are known to be unidentifiable without further assumptions \cite{pinkus2013ridge}. Assumption \ref{as:A2} requires each warp strictly monotone, median-zero, and unbounded. Median-zero is a normalization choice. Assumption \ref{as:AR} is standard for identifiability results of this kind. Assumption \ref{as:Supp} requires $|\Btr_k|\ge2$, since a single-coordinate interaction is a main effect, and $\Btr_k\neq\Btr_l$. 

Assumption \ref{as:C} rules out parameterizations containing two proportional warps. In particular, two median-zero affine warps $h_k(\eps)=a_k\eps$ are 
proportional and hence
cannot occur together under \ref{as:C}. The affine-warp regime is treated separately in Section \ifusepackageA~\ref{ss:affine-id}, \else D.5,\fi where the profiles rather than distinct warp growth rates facilitate identification.
Assumption \ref{as:G} is a model-class condition, and in a comparison of two parameterizations it therefore applies to their pooled warps. We note that one can relax the assumption to pairs of warps which belong to two parameterizations from distributionally equivalent models. The identifiability rests on the noise $\varepsilon$ entering distinct interactions differently. \ifusepackageA Remark~\ref{res:fail} \else Remark 2 in supplementary material \fi gives an unidentifiable example satisfying \ref{as:E} but not \ref{as:G}, so \ref{as:E} alone is insufficient. \ifusepackageA Remark~\ref{rem:polyG-A6-fails} \else Remark 3, in supplementary material \fi discusses a counterexample for this setting when functions $g_k$ are polynomial.  

We state the main identifiability result in a general setting.

\begin{theorem}[{Identifiability}]\label{THM:identify_main} 
Suppose that the model class $\mathcal{M}$ satisfies Assumptions~\ref{assumptions}. Then, the model is identifiable in the sense of Definition \ref{DEF_model_is_identifiable}. 
\end{theorem}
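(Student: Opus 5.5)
The plan is to turn equality in law into a pathwise identity, using strict monotonicity in $\varepsilon$. For each $x\in\Xtr$, both $T(x,\cdot)$ and $\tilde T(x,\cdot)$ are strictly increasing: $F(x)$ does not depend on $\varepsilon$, and $G(x,\cdot)$ is increasing by the definition of $\mathcal G$. Both are pushforwards of the same $\mathcal N(0,1)$ law. Matching quantiles therefore gives $T(x,\varepsilon)=\tilde T(x,\varepsilon)$ for every $\varepsilon\in\R$ and every $x\in\Xtr$. By continuity and analyticity (A1, \ref{as:AR}), this identity holds on every axis.

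Restrict to the axis $x=te_j$ with $t\in[a_j,A_j]$. This yields
\begin{align*}
f_j(t)+\sum_{i\neq j}f_i(0)+\sum_{k=1}^K g_k\big(\beta_{kj}t+h_k(\varepsilon)\big)
=\tilde f_j(t)+\sum_{i\neq j}\tilde f_i(0)+\sum_{k=1}^{\tilde K}\tilde g_k\big(\tilde\beta_{kj}t+\tilde h_k(\varepsilon)\big).
\end{align*}
Both sides are real analytic in $(t,\varepsilon)$. Hence the identity extends from $[a_j,A_j]\times\R$ to $t$ in a complex neighbourhood and then, by analytic continuation, to all real $t$ where the functions are defined. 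Setting $t=0$ gives the $\varepsilon$-only identity $\sum_k g_k(h_k(\varepsilon))=\sum_k\tilde g_k(\tilde h_k(\varepsilon))+c$. Differentiating in $t$ at fixed $\varepsilon$ gives
\[
f_j'(t)+\sum_k\beta_{kj}g_k'(\beta_{kj}t+h_k(\varepsilon))=\tilde f_j'(t)+\sum_k\tilde\beta_{kj}\tilde g_k'(\tilde\beta_{kj}t+\tilde h_k(\varepsilon)).
\]

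The core step separates the interaction terms via asymptotics in $\varepsilon\to+\infty$, using (A6) and (\ref{as:E}). I write each $g_k$ as an exponential polynomial. The pooled warps $\{h_k\}\cup\{\tilde h_l\}$ fall into proportionality classes, and by (A6) distinct classes have strictly ordered growth. In the class of fastest growth, the terms $p_i(u)e^{\mu_i u}$ evaluated at $u=\beta_{kj}t+h_k(\varepsilon)$ behave like $e^{\mu_i h_k(\varepsilon)}e^{\mu_i\beta_{kj}t}\,p_i(\cdot)$. Functions of $\varepsilon$ of the form $h^m e^{\mu h}$ coming from different growth classes are asymptotically linearly independent over functions of $t$. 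This is a standard linear-independence argument for exponential polynomials composed with warps of distinct order, carried out by dividing by the dominant term and letting $\varepsilon\to\infty$, and separately for $\varepsilon\to-\infty$ to capture negative $\mu_i$.

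Doing this peels off the dominant class. Within it, for each frequency $\mu\neq0$, I match the $t$-dependence $e^{\mu\beta_{kj}t}$. This forces $\tilde\beta_{lj}=\beta_{kj}/r$ whenever $\tilde h_l=h_k/r$, for all $j$. Non-proportionality within a parameterization (A5) ensures that each class contains at most one warp from each side. Non-polynomiality (A1) guarantees a nonzero frequency $\mu$, so $\beta_{kj}$ is actually visible in the $t$-dependence. Matching coefficients then shows that $g_k$ and $\tilde g_l(\cdot/r)$ have the same nonconstant exponential-polynomial part, that is, they agree up to a constant $\gamma_k$.

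Subtracting these terms and inducting over classes gives $K=\tilde K$ and the relabeling. Equality of supports follows from $\beta_k=r_k\tilde\beta_k$, and (\ref{as:Supp}) ensures no component is absorbed into an additive $f_j$. What remains is $f_j(t)-\tilde f_j(t)=$ const $=b_j$. Evaluating the full identity at $x=0$, $\varepsilon=0$ and using $h_k(0)=0$ gives $\sum\gamma_k+\sum b_j=0$.

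The main obstacle I expect is the asymptotic separation step. Within a single growth class, exponential and polynomial factors interact: for example, a term $e^{\mu h}$ could be mimicked by $e^{\mu' \tilde h}$ when $\mu h$ and $\mu'\tilde h$ are comparable. Here (A6) must be used carefully to show that comparable exponents force exact proportionality of warps with matching frequencies, rather than mere asymptotic equivalence. I would try first to take logarithms of the dominant terms to compare $\mu h_k(\varepsilon)$ with $\tilde\mu\tilde h_l(\varepsilon)$. The ratio of the warps must tend to a finite nonzero limit, and (A6) then forces the warps to be proportional outright.
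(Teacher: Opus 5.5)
\textbf{There is a genuine gap in the core separation step.} Your reduction to a pointwise identity on each axis is fine; it is Lemma~\ref{initial_step_lemma}. The problems come after that.

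\emph{Negative frequencies.} Assumption~\ref{as:G} only orders warps as $\varepsilon\to+\infty$, so sending $\varepsilon\to-\infty$ to reach frequencies $\mu_i<0$ has no assumption behind it. Take $h_1(\varepsilon)=\varepsilon$ and $h_2(\varepsilon)=e^{\varepsilon}-1+\varepsilon$. They satisfy \ref{as:A2}, \ref{as:AR}, \ref{as:C} and \ref{as:G}, yet $h_2/h_1\to1$ and $e^{-h_2}/e^{-h_1}\to e$ as $\varepsilon\to-\infty$. At $+\infty$, negative-frequency terms are exponentially small next to the polynomial parts of the profiles and next to $A_j'(t)$. So $\varepsilon$-asymptotics alone cannot isolate them.

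\emph{Independence across growth classes.} The same pair refutes your claim that functions $h^m e^{\mu h}$ from different growth classes are asymptotically independent: $h_2^2=e^{2h_1}+2(h_1-1)e^{h_1}+(h_1-1)^2$ holds exactly. Hence the dominant $\varepsilon$-behaviour need not sit in the fastest class, since a polynomial part on a fast warp can tie with an exponential on a slow one. The scheme ``peel off the fastest class and induct'' is therefore not justified.

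\emph{Polynomial residual.} Even granting the frequency matching, you only obtain $g_k-\tilde g_{\pi(k)}(\cdot/r_k)\in\mathcal P$, not a constant. Your asymptotics cannot remove this residual, because \ref{as:G} does not control powers of warps (Remark~\ref{rem:polyG-A6-fails}).

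\textbf{The missing idea is to separate in $t$ before doing any $\varepsilon$-asymptotics.} First merge cofoliar terms into single profiles. Then apply $\partial_t\partial_\varepsilon$ to the axis identity, which kills $A_j(t)+B_j(\varepsilon)$. For fixed $\varepsilon$ what remains is a finite combination of $t^ae^{\nu t}$, and each coefficient must vanish.

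At a fixed $\nu\neq0$, every surviving term has $\varepsilon$-exponent $(\nu/c_m)h_m$. Integrating in $\varepsilon$, using a polynomial solution of $V'+\lambda V=R$, gives $\sum_m V_m(h_m)e^{(\nu/c_m)h_m}=\mathrm{const}$. Order the terms by $\nu(h_m/c_m-h_{m'}/c_{m'})\to\pm\infty$ as $\varepsilon\to+\infty$ and divide by the maximal term, even if it decays. This handles both signs of $\nu$ while using \ref{as:G} only at $+\infty$ (Lemmas~\ref{res:dissect} and~\ref{lem:Gprime-rigid}).

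The polynomial residual is then removed as in Proposition~\ref{res:pin}. The top $t$-degree coefficient of the mixed derivative gives $\sum_k\beta_{kj}^{D+1}\lambda_kh_k'\equiv0$, and the affine remainder gives $\sum_kc_kh_k\equiv\mathrm{const}$. Both are ruled out by linear independence of the warps, which follows from \ref{as:G} and \ref{as:A2} (Lemma~\ref{res:growth}).
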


The proof of Theorem~\ref{THM:identify_main} is provided in \ifusepackageA Appendix~\ref{proofs}. \else Appendix A. \fi 
Pinkus \cite{pinkus2013ridge} studies a similar identification problem under the more permissive assumption that $x$ ranges over a set with nonempty interior, so that mixed partial derivatives of arbitrary order are directly available for recovering the ridge directions $\beta_k$.
When data are observed only on $\mathcal{X}_{\mathrm{tr}}$, however, it is not immediately clear whether the ridge directions can still be disentangled.

Some form of the nonlinearity condition~\ref{as:G} is necessary.
If the functions $h_k$ are affine, two models with distinct supports may induce
the same conditional law on $\mathcal{X}_{\mathrm{tr}}$, as discussed in Appendix \ifusepackageA ~\ref{ss:affine-id}\else D.5\fi.
Theorem \ref{THM:identify_main} analyses the exponential--polynomial profiles. However, identifiability results are not confined to this case; \ifusepackageA Appendix \ref{proofs} \else supplementary material, Section A \fi shows that they hold for a broader class of interaction profiles. 
Allowing the interaction functions $g_k$ to be polynomial (but nonlinear) requires stronger assumptions, even if the full $\mathcal{X}_{\text{test}}$ were observed, and is related to uniqueness questions for Waring decompositions of symmetric tensors \citep{comon2008symmetric}. An example of such a (strong) assumption is discussed in \ifusepackageA Section \ref{sec:polynomial-identification} \else the supplementary material, Section D.4 \fi, and a counterexample for our setting is given in Remark \ifusepackageA \ref{rem:polyG-A6-fails}. \else 3, Section D.1 of supporting information. \fi

Identifiability results form the basis for combinatorial extrapolation, as described next.
\begin{theorem}[Extrapolability]\label{thm:extrapolation}
Consider the model \eqref{adeng_truemodel} with corresponding true $T(x,\varepsilon) {\in {\cal M}}$, where $x \in {\cal X}_{\mathrm{test}} \supset {\cal X}_{\mathrm{tr}}$, see Equations \eqref{defn:train_supports}, \eqref{defn:test_support} and $\varepsilon \in \mathbb{R}$.
    Consider $\tilde{T} \in \mathcal{M}$ such that $\forall x\in \mathcal{X}_{\mathrm{tr}}, \enspace T(x, \varepsilon) \overset{d}{=} \tilde{T}(x, \varepsilon)$.

    Such a $\tilde{T}$ is obtained, for example, by a population minimizer of \eqref{obj_fct} under the conditions of Lemma \ref{lemma_engression_prop1} and Corollary \ref{cor:ae2all} (whose continuity conditions are automatic in $\cal M$ by Remark \ref{rem:continuity-laws}).
    Furthermore, suppose that the model ${\cal M}$ is identifiable in the sense of Definition \ref{DEF_model_is_identifiable}. 
    Then, \begin{enumerate}
        \item for all $x \in {\cal X}_{\mathrm{tr}} \cup{\cal X}_{\mathrm{test}}$ and for all fixed real $\varepsilon$, 
        \begin{eqnarray*}
            T(x,\varepsilon) = \tilde{T}(x, \varepsilon)\ \ \mbox{(deterministic equality)}.
            \end{eqnarray*}
        \item Moreover, when extrapolating further than ${\cal X}_{\mathrm{test}}$, for $x\in \mathbb{R}^p$ and for all fixed real $\varepsilon$, 
        \ifusepackageA
            \begin{eqnarray}
                & &G(x,\varepsilon) = \tilde{G}(x,\varepsilon) + \sum_{k=1}^K \gamma_k\ \ (\mbox{deterministic equality}) \label{eq:interaction} %
            \end{eqnarray}
        \else
            \small{
            \begin{align}
                & G(x,\varepsilon) = \tilde{G}(x,\varepsilon) + \sum_{k=1}^K \gamma_k\ \ (\mbox{deterministic equality}) \label{eq:interaction} 
            \end{align}
             }
        \fi
        where $G(.,.)$ and $\tilde{G}(.,.)$ are the interaction part of $T(.,.)$ and $\tilde{T}(.,.)$, respectively
    \end{enumerate} 
\end{theorem}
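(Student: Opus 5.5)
The plan is to let identifiability do all the work and then simply substitute. By hypothesis $T(x,\varepsilon)\overset{d}{=}\tilde T(x,\varepsilon)$ for every $x\in\Xtr$, so the two parameterizations produce the same conditional law $P_{\mathrm{tr}}(y\mid x)$ on $\Xtr$. Definition~\ref{DEF_model_is_identifiable} then gives, after relabeling, $K=\tilde K$ together with constants $b_j,\gamma_k$ and nonzero scalars $r_k$ such that
\[
f_j=\tilde f_j+b_j,\qquad g_k(u)=\tilde g_k(u/r_k)+\gamma_k,\qquad h_k=r_k\tilde h_k,\qquad \beta_k=r_k\tilde\beta_k,
\]
with $\sum_k\gamma_k+\sum_j b_j=0$. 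These relations are identities between functions on all of $\R$, not merely on the ranges seen in training. In $\mathcal M$ they hold everywhere, because all components are real analytic by \ref{as:AR}, or because the definition is stated for all arguments.

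\textbf{Step 1: claim 2.} Fix any $x\in\R^p$ and any real $\varepsilon$. Substituting, $\beta_k^\top x+h_k(\varepsilon)=r_k(\tilde\beta_k^\top x+\tilde h_k(\varepsilon))$, so
\[
g_k\bigl(\beta_k^\top x+h_k(\varepsilon)\bigr)=\tilde g_k\bigl(\tilde\beta_k^\top x+\tilde h_k(\varepsilon)\bigr)+\gamma_k .
\]
Summing over $k$ gives $G(x,\varepsilon)=\tilde G(x,\varepsilon)+\sum_k\gamma_k$ pointwise. This is a deterministic equality, since the same $\varepsilon$ enters both sides.

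\textbf{Step 2: claim 1.} For $x\in\mathcal X_{\mathrm{test}}$, combine claim 2 with $F(x)=\sum_j f_j(x_j)=\tilde F(x)+\sum_j b_j$. Each $x_j\in[a_j,A_j]$, and on this interval $f_j$ and $\tilde f_j$ are related by the identity above. Adding the two pieces,
\[
T(x,\varepsilon)=\tilde T(x,\varepsilon)+\sum_j b_j+\sum_k\gamma_k=\tilde T(x,\varepsilon).
\]
The argument does not extend the additive part beyond $\mathcal X_{\mathrm{test}}$, because the $f_j$ are only pinned down on the marginal ranges $[a_j,A_j]$ seen in training. This explains why claim 2 is stated only for $G$.

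\textbf{Main obstacle.} The delicate point is justifying that the identities from Definition~\ref{DEF_model_is_identifiable} hold pointwise for every real $\varepsilon$ and every argument of $g_k$, rather than only in law or almost everywhere. Two facts are needed.
\begin{itemize}
\item \emph{Almost-everywhere to everywhere.} The passage from a.e.\ $x$ to all $x\in\Xtr$ uses Corollary~\ref{cor:ae2all} and Remark~\ref{rem:continuity-laws}, via continuity of the laws in $x$.
\item \emph{Equality in law to pointwise equality.} This uses the common monotone orientation of $\varepsilon$ in $\mathcal G$. Both $T(x,\cdot)$ and $\tilde T(x,\cdot)$ are strictly increasing, so equality in law forces $T(x,\varepsilon)=\tilde T(x,\varepsilon)$ via quantile matching, $T(x,\varepsilon)=Q_{\Phi(\varepsilon)}(Y\mid x)$.
\end{itemize}
With these in hand, the identification identities hold for every $\varepsilon$, and the unbounded warps of \ref{as:A2} make $\bar x_k$ range over all of $\R$. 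Hence the relation between $g_k$ and $\tilde g_k$ holds globally, which is exactly what Step 1 needs for arbitrary $x\in\R^p$.
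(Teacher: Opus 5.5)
Your proposal is correct and follows essentially the same route as the paper's proof. You obtain pointwise equality on $\Xtr$ via Corollary~\ref{cor:ae2all} and quantile matching (Lemma~\ref{initial_step_lemma}), apply the identifiability relations, use the unbounded warps of \ref{as:A2} to extend $g_k(u)=\tilde g_k(u/r_k)+\gamma_k$ to all $u\in\mathbb{R}$, and then substitute directly for claim~2 and use the additive identity on each $[a_j,A_j]$ for claim~1.

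A minor aside: if you invoke real analyticity (\ref{as:AR}), the same continuation would also extend $f_j=\tilde f_j+b_j$ beyond $[a_j,A_j]$. So your explanation of why claim~2 is stated only for $G$ holds for the \ref{as:A2}-based argument, not the analytic one.
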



We collect several concise clarifications and consequences of Theorem~\ref{thm:extrapolation}.

\emph{Identification and the role of unbounded pre-additive noise.}  
    The assumption that each $h_k(\varepsilon)$ is unbounded above and below (as $\varepsilon$ ranges over $\mathbb{R}$) ensures that the argument $\beta_k^\top x + h_k(\varepsilon)$ of each interaction function spans an effectively unbounded range as $x$ varies along coordinate axes and $\varepsilon$ varies. Combined with the identifiability condition (Definition~\ref{DEF_model_is_identifiable}) and the monotonicity of $G(x,\cdot)$, this yields~\eqref{eq:interaction}.

\emph{Interpretation of deterministic equality.}
The deterministic equality in statement~1 has a natural interpretation in generative modeling. When sampling independent noise variables $\varepsilon,\varepsilon' \sim \mathcal{N}(0,1)$, it implies that for every $x \in \mathcal{X}_{\mathrm{test}}$,
\[T(x,\varepsilon) \;\overset{d}{=}\; \tilde T(x,\varepsilon').
\]
Thus, the two models generate identically distributed outputs.

\emph{Placement of the axes.}  
The one-dimensional axes $\mathcal{X}_j$ (for the $j$th component) used for training need not lie at the boundary of the joint support. The result continues to hold as long as each axis provides marginal coverage of the corresponding coordinate, even if it lies in the interior of the support. The key requirement is that the training data span each coordinate marginally, enabling identification of both the univariate components and the interaction functions along these axes.

\section{Empirical results across application domains}\label{sec:empiricalvalidation}

We present empirical results on both synthetic and real-world datasets to evaluate the performance of \algname. The section is organized as follows; Section~\ref{subsec:baselines_architectures} describes the baseline models used in our experiments, Sections~\ref{subsec:almanac} and~\ref{sec:hpobench} report results on real datasets exhibiting axis-aligned structure, Section~\ref{subsec:almostaxisaligned} extends the analysis to datasets that are not strictly axis-aligned but instead concentrate around coordinate axes. Experiments on synthetic datasets with known data-generating mechanisms are deferred to \ifusepackageA Appendix~\ref{app:synthetic_generation}. \else Supporting Information, Section B. \fi

\subsection{Baseline methods and architectures}\label{subsec:baselines_architectures}

We compare \textbf{\algname} against three baseline approaches:
\begin{enumerate}
    \item \textbf{Neural Additive Model (NAM):} an additive model that learns separate neural networks for individual features and combines their outputs additively \cite{agarwal2021neural};
    \item \textbf{Empirical Risk Minimization (ERM):} a standard feedforward multilayer perceptron trained using mean squared error loss;
    \item \textbf{Engression}~\cite{shen2024engression}: a generative regression method that models the conditional distribution using an ensemble of two-layer perceptrons.
\end{enumerate}

For a fair comparison, the architectures of ERM and Engression are designed such that their number of parameters approximately matches that of \algname. Full architectural specifications and hyperparameter settings are provided in \ifusepackageA Appendix~\ref{sec:models_hyperparameters}. \else Supporting Information, Section B. \fi 

\paragraph{Low sample size and optimizing the energy score.}
NAM and ERM are trained by optimizing the Mean Squared Error (MSE), while \algname\ and Engression are trained by minimizing the negative energy score. Empirically, this objective appears surprisingly robust in moderate- and low-sample-size settings. A potential reason is that the energy score matches distributions through pairwise distances, avoiding direct optimization of pointwise prediction or likelihood-based losses. In contrast, neural networks trained via empirical risk minimization with standard squared-error or logistic losses may be more prone to overfitting when data are scarce. The results in Table~\ref{tab:hpobench} are consistent with this phenomenon: even with only 39 training observations, the neural-network-based methods trained using the energy score exhibit strong performance.

\subsection{Axis-aligned real data from drug combination screens}\label{subsec:almanac}

\paragraph{O'Neil data}
We evaluate our method on the Merck OncoPolyPharmacology Screen \cite{twarog2021data}, 
a large-scale drug combination screen measuring the viability of $39$ human cancer cell 
lines under $38$ anticancer compounds, spanning $583$ pairwise drug combinations across 
a $4 \times 4$ concentration grid. Single-agent measurements are available for each 
compound at approximately $8$ concentrations per cell line, yielding $11{,}856$ 
single-drug observations, while the full combination screen produces $358{,}560$ 
dose-response measurements. Each observation reports cell viability relative to an 
untreated control, where a value of $1$ indicates no growth effect and values below $1$ 
indicate cytotoxicity. 

\paragraph{ALMANAC data}
We consider an experimental setup based on a subset of the ALMANAC collection from DrugCombDB \citep{liu2020drugcombdb} to evaluate combinatorial extrapolation in corresponding drug combination screens.
ALMANAC comprises high-throughput measurements of cell viability under drug perturbations across multiple values of concentration. 
The response variable is percent cell viability normalized to an untreated control (100 = no inhibition; $<100$ = reduced viability; $>100$ = overgrowth). 

For both datasets, to assess combinatorial extrapolation, we adopt an axis-aligned data split. The training set consists of axis-aligned experiments, i.e., single-drug concentration sweeps in which exactly one drug is perturbed at a time, yielding clean marginal response curves. For O'Neil, this yields $N = 11{,}856$ single-agent observations across 38 drugs; for ALMANAC, $N = 1{,}363{,}939$ single-agent observations across 101 drugs. The test set comprises true combinatorial experiments with multiple simultaneously nonzero drug concentrations, thereby probing interaction effects; for O'Neil the test set contains $M = 358{,}560$ combination observations, and for ALMANAC $M = 2{,}045{,}907$. Missing responses are excluded and targets are normalized by the training standard deviation.
\ifusepackageA
    \begin{table}[t]
    \centering
    \caption{Test \textsc{mse} (mean $\pm$ std over 10 seeds) on drug combination screens.
             Bold indicates the method is within
             $2\,\hat\sigma/\sqrt{10}$ of the best mean.
             LM is a single regularized linear regression (no std).  }
    \label{tab:biodata}
    \begin{tabular}{lccccc}
    \toprule
    \textbf{Dataset} & \textbf{DExtrI} & \textbf{NAM} & \textbf{Engression} & \textbf{ERM} & \textbf{LM} \\
    \midrule
    O'Neil         & $\mathbf{1.045} \pm 0.042$ & $1.675 \pm 0.101$ & $1.477 \pm 0.059$ & $2.519 \pm 0.031$ & 1.7592 \\
    ALMANAC        & $ 0.820 \pm 0.031 $ & $\mathbf{0.731} \pm 0.002$ & $ 0.863 \pm 0.015$ & $ 0.807 \pm 0.013 $ & $0.8691$ \\
    \bottomrule
    \end{tabular}%
    \end{table}
\else
    \begin{table}[t]
    \centering
    \caption{Test \textsc{mse} (mean $\pm$ std over 10 seeds) on drug combination screens.
             Bold indicates the method is within
             $2\,\hat\sigma/\sqrt{10}$ of the best mean.
             LM is a single regularized linear regression (no std).  }
    \label{tab:biodata}
    \resizebox{\columnwidth}{!}{%
    \begin{tabular}{lccccc}
    \toprule
    \textbf{Dataset} & \textbf{DExtrI} & \textbf{NAM} & \textbf{Engression} & \textbf{ERM} & \textbf{LM}\\
    \midrule
    O'Neil         & $\mathbf{1.045} \pm 0.042$ & $1.675 \pm 0.101$ & $1.477 \pm 0.059$ & $2.519 \pm 0.031$ & 1.7592 \\
    ALMANAC        & $ 0.820 \pm 0.031 $ & $\mathbf{0.731} \pm 0.002$ & $ 0.863 \pm 0.015$ & $ 0.807 \pm 0.013 $ & $0.8691$ \\
    \bottomrule
    \end{tabular}%
    }
    \end{table}
\fi
\subsection{MLP hyperparameter optimization on tabular benchmarks}
\label{sec:hpobench}

We further evaluate combinatorial extrapolation on the 
\textsc{mlp}
benchmark family from \textsc{HPOBench} for tabular data \citep{eggensperger2021hpobench}, a
standardized collection of hyperparameter optimization problems with
pre-computed, exhaustively evaluated configuration grids.
The benchmark covers eight binary classification tasks drawn from the
\textsc{OpenML} AutoML benchmark \citep{vanschoren2013openml}:
\texttt{australian}, \texttt{car}, \texttt{phoneme}, \texttt{vehicle},
\texttt{kc1}, \texttt{segment}, \texttt{blood\_transfusion}, and
\texttt{credit\_g}.
Each task defines a five-dimensional hyperparameter space; all five
hyperparameters are continuous or integer-valued and are fully described
in \ifusepackageA Appendix~\ref{app:hpobench}\else Section B of supporting information\fi.
The response variable is the validation error rate (one minus validation
accuracy), so that lower values indicate better-performing configurations.

Each hyperparameter is varied individually across its full grid while the
remaining four are held fixed at their respective midpoint values, yielding
$39$ training observations per benchmark problem. 
The test set consists of joint configurations, in which all five
hyperparameters vary simultaneously; by construction, every test point lies
outside the convex hull of the training support (see \ifusepackageA Appendix~\ref{app:hpobench} \else Section B in supporting information \fi for details).

Results are reported in Table~\ref{tab:hpobench}.
\textsc{DExtrI} achieves the lowest test \textsc{mse} on seven of the eight
tasks, with particularly pronounced improvements over the baselines on
\texttt{phoneme}, \texttt{kc1}, and \texttt{credit\_g}.
The single exception is \texttt{blood\_transfusion}, where Engression
performs best; we attribute this to the unusually flat response surface of
that task, which reduces the advantage of modelling interaction effects. We also considered standard linear models, which are often regarded as appropriate in low-sample-size settings. Their performance was roughly comparable to that of ERM.

\ifusepackageA
    \begin{table}[t]
    \centering
    \caption{Test \textsc{mse} (mean $\pm$ std over 10 seeds) on the
             \textsc{HPOBench} tabular \textsc{mlp} benchmark family.
             Bold indicates the method is within
             $2\,\hat\sigma/\sqrt{10}$ of the best mean.
             LM is a single regularized linear regression (no std).  }
    \label{tab:hpobench}
    \begin{tabular}{lccccc}
    \toprule
    \textbf{Dataset} & \textbf{DExtrI} & \textbf{NAM} & \textbf{Engression} & \textbf{ERM} & \textbf{LM} \\
    \midrule
    australian         & $\mathbf{3.4795} \pm 0.1728$ & $3.5753 \pm 0.0018$ & $5.3283 \pm 0.5796$ & $8.5440 \pm 1.1389$ & $7.8936$ \\
    car                & $\mathbf{1.2641} \pm 0.1026$ & $1.3539 \pm 0.0067$ & $2.3852 \pm 0.4472$ & $4.0730 \pm 0.3778$ & $5.0950$ \\
    phoneme            & $\mathbf{1.5743} \pm 0.1404$ & $1.9858 \pm 0.0161$ & $1.9753 \pm 0.5779$ & $3.4248 \pm 0.5403$ & $3.1109$ \\
    vehicle            & $\mathbf{1.2634} \pm 0.0621$ & $1.4437 \pm 0.0072$ & $1.6756 \pm 0.4406$ & $3.0933 \pm 0.5342$ & $2.9708$ \\
    kc1                & $\mathbf{1.4146} \pm 0.1075$ & $1.8258 \pm 0.0112$ & $1.9640 \pm 0.2982$ & $2.9794 \pm 0.4868$ & $3.1330$ \\
    segment            & $\mathbf{1.3022} \pm 0.0560$ & $1.3697 \pm 0.0041$ & $\mathbf{1.3038} \pm 0.4503$ & $2.6107 \pm 0.0590$ & $2.0371$ \\
    blood\_transfusion & $3.3805 \pm 0.3141$ & $3.1196 \pm 0.0182$ & $\mathbf{1.8262} \pm 0.3681$ & $2.6514 \pm 0.2144$ & $2.2050$ \\
    credit\_g          & $\mathbf{1.6573} \pm 0.1264$ & $2.2589 \pm 0.0093$ & $2.1006 \pm 0.3471$ & $4.0761 \pm 0.3632$ & $3.9223$ \\
    \bottomrule
    \end{tabular}
    \end{table}
\else
    \begin{table}[ht]
    \centering
    \caption{Test \textsc{mse} (mean $\pm$ std over 10 seeds) on the
             \textsc{HPOBench} tabular \textsc{mlp} benchmark family.
             Bold indicates the method is within
             $2\,\hat\sigma/\sqrt{10}$ of the best mean.
             LM is a single regularized linear regression (no std).}
    \label{tab:hpobench}
    \resizebox{\columnwidth}{!}{%
    \begin{tabular}{lccccc}
    \toprule
    \textbf{Dataset} & \textbf{DExtrI} & \textbf{NAM} & \textbf{Engression} & \textbf{ERM} & \textbf{LM} \\
    \midrule
    australian         & $\mathbf{3.4795} \pm 0.1728$ & $3.5753 \pm 0.0018$ & $5.3283 \pm 0.5796$ & $8.5440 \pm 1.1389$ & $7.8936$ \\
    car                & $\mathbf{1.2641} \pm 0.1026$ & $1.3539 \pm 0.0067$ & $2.3852 \pm 0.4472$ & $4.0730 \pm 0.3778$ & $5.0950$ \\
    phoneme            & $\mathbf{1.5743} \pm 0.1404$ & $1.9858 \pm 0.0161$ & $1.9753 \pm 0.5779$ & $3.4248 \pm 0.5403$ & $3.1109$ \\
    vehicle            & $\mathbf{1.2634} \pm 0.0621$ & $1.4437 \pm 0.0072$ & $1.6756 \pm 0.4406$ & $3.0933 \pm 0.5342$ & $2.9708$ \\
    kc1                & $\mathbf{1.4146} \pm 0.1075$ & $1.8258 \pm 0.0112$ & $1.9640 \pm 0.2982$ & $2.9794 \pm 0.4868$ & $3.1330$ \\
    segment            & $\mathbf{1.3022} \pm 0.0560$ & $1.3697 \pm 0.0041$ & $\mathbf{1.3038} \pm 0.4503$ & $2.6107 \pm 0.0590$ & $2.0371$ \\
    blood\_transfusion & $3.3805 \pm 0.3141$ & $3.1196 \pm 0.0182$ & $\mathbf{1.8262} \pm 0.3681$ & $2.6514 \pm 0.2144$ & $2.2050$ \\
    credit\_g          & $\mathbf{1.6573} \pm 0.1264$ & $2.2589 \pm 0.0093$ & $2.1006 \pm 0.3471$ & $4.0761 \pm 0.3632$ & $3.9223$ \\
    \bottomrule
    \end{tabular}%
    }
    \end{table}
\fi

\subsection{Approximately axis-aligned real data}\label{subsec:almostaxisaligned}

We consider settings in which covariates are not strictly axis-aligned, but lie close to coordinate axes. That is, the covariates are not required to have exactly nonzero components provided that at most one deviates substantially from a baseline.
Across all datasets, we construct a train--test split reflecting this structure. Covariates are first centered and standardized relative to a baseline (e.g., mean or median). The training set consists of \emph{approximately axis-aligned} samples, defined as those for which all but at most one coordinate lie within a small tolerance of not being axis-aligned. The remaining samples, which exhibit simultaneous deviations in multiple coordinates, form the test set and capture combinatorial effects. The response is scaled using its training-set variance.

\paragraph{Seoul bike sharing demand}
The Seoul Bike Sharing dataset \cite{seoul_bike_sharing_demand_560} contains hourly bike rental counts together with weather and temporal covariates. We use hour of day, temperature, humidity, wind speed, visibility, dew point temperature, and solar radiation as covariates, and log-transform the response (rental count).

\paragraph{Combined cycle power plant}
The Combined Cycle Power Plant dataset \cite{combined_cycle_power_plant_294} contains 9{,}568 observations of plant operation. The task is to predict net hourly electrical energy output from temperature, ambient pressure, relative humidity, and exhaust vacuum. 

\paragraph{Airfoil self-noise}
The Airfoil self-noise dataset \cite{airfoil_self-noise_291} consists of approximately 1{,}500 observations from wind tunnel experiments. Covariates include frequency, angle of attack, chord length, free-stream velocity, and displacement thickness. The response is sound pressure level.

\paragraph{Gas turbine}
The gas turbine dataset \cite{gas_turbine_co_and_nox_emission_data_set_551} contains approximately 36{,}000 observations of turbine operation with multiple sensor measurements. We use all seven covariates and take turbine energy yield (TEY) as the response.

\paragraph{Abalone}
The Abalone dataset \cite{abalone_1} contains physical measurements used to predict age (number of rings) from six covariates. We discard the categorical sex covariable and log-transform the response. All remaining preprocessing follows the procedure described above.

\paragraph{California housing}
The California Housing dataset \cite{pace1997californiahousing} contains eight census-based covariates for predicting median house value; the categorical ocean proximity variable is excluded. 
Observations with missing values are removed and the remainings sample size is around 20k samples. The response is log-transformed, and skewed count-based covariates are log-transformed. 

\begin{table*}[ht]
  \centering
  \caption{Test mean squared error for approximately axis-aligned real data.
  Values are mean $\pm$ std across 10 random seeds.
  The percentage of training samples being approximately axis-aligned
  is denoted in the subcolumns.
  Bold indicates the method is within $2\,\hat\sigma/\sqrt{10}$ of the best mean per column.}
  \label{tab:almost_axis_aligned_mse}
  \resizebox{\textwidth}{!}{%
  \begin{tabular}{l|ccc|ccc|ccc}
    \toprule
    \multicolumn{1}{c|}{\textbf{Dataset}} &
      \multicolumn{3}{c|}{\textbf{Seoul Bike Sharing Demand} \cite{seoul_bike_sharing_demand_560}} &
      \multicolumn{3}{c|}{\textbf{Combined Cycle Power Plant} \cite{combined_cycle_power_plant_294}} &
      \multicolumn{3}{c}{\textbf{Airfoil Self-Noise} \cite{airfoil_self-noise_291}} \\
    \textbf{\% used for train}
      & \textbf{10\%} & \textbf{16\%} & \textbf{25\%}
      & \textbf{10\%} & \textbf{20\%} & \textbf{25\%}
      & \textbf{10\%} & \textbf{16\%} & \textbf{24\%} \\
    \midrule
    ERM
      & $1.76 \pm 0.36$ & $1.12 \pm 0.20$ & $1.13 \pm 0.29$ & $21.3 \pm 11.8$ & $12.9 \pm 6.5$ & $4.93 \pm 1.81$ & $4.54 \pm 2.45$ & $\mathbf{3.98} \pm 1.86$ & $2.51 \pm 1.28$ \\
    NAM
      & $1.19 \pm 0.45$ & $0.93 \pm 0.12$ & $0.52 \pm 0.05$ & $70.2 \pm 28.6$ & $25.1 \pm 8.2$ & $4.37 \pm 1.92$ & $15.0 \pm 5.0$ & $16.6 \pm 6.1$ & $11.4 \pm 5.2$ \\
    Engression
      & $0.70 \pm 0.03$ & $0.68 \pm 0.03$ & $0.61 \pm 0.02$ & $13.7 \pm 3.5$ & $6.61 \pm 1.60$ & $2.35 \pm 0.39$ & $12.9 \pm 2.9$ & $13.8 \pm 4.0$ & $6.32 \pm 1.87$ \\
    \algname\
      & $\mathbf{0.60} \pm 0.06$ & $\mathbf{0.53} \pm 0.03$ & $\mathbf{0.44} \pm 0.01$ & $\mathbf{5.05} \pm 1.78$ & $\mathbf{1.28} \pm 0.49$ & $\mathbf{0.52} \pm 0.37$ & $\mathbf{2.99} \pm 1.16$ & $\mathbf{3.06} \pm 0.96$ & $\mathbf{1.35} \pm 0.28$ \\
    \midrule
    \multicolumn{1}{c|}{\textbf{Dataset}} &
      \multicolumn{3}{c|}{\textbf{Gas Turbine CO and NOx Emission} \cite{gas_turbine_co_and_nox_emission_data_set_551}} &
      \multicolumn{3}{c|}{\textbf{California Housing} \cite{pace1997californiahousing}} &
      \multicolumn{3}{c}{\textbf{Abalone} \cite{abalone_1}} \\
    \textbf{\% used for train}
      & \textbf{10\%} & \textbf{15\%} & \textbf{25\%}
      & \textbf{5\%}  & \textbf{11\%} & \textbf{20\%}
      & \textbf{12\%} & \textbf{16\%} & \textbf{26\%} \\
    \midrule
    ERM
      & $7.48 \pm 2.99$ & $3.27 \pm 1.74$ & $\mathbf{0.65} \pm 0.53$ & $\mathbf{5.92} \pm 2.98$ & $4.14 \pm 1.45$ & $2.36 \pm 0.76$ & $\mathbf{3.26} \pm 1.75$ & $3.06 \pm 1.27$ & $4.62 \pm 1.81$ \\
    NAM
      & $47.0 \pm 20.3$ & $15.1 \pm 3.5$ & $4.08 \pm 1.47$ & $8.81 \pm 2.18$ & $9.01 \pm 2.26$ & $2.73 \pm 0.66$ & $10.4 \pm 2.0$ & $4.97 \pm 1.06$ & $1.76 \pm 0.35$ \\
    Engression
      & $36.0 \pm 5.5$ & $14.6 \pm 2.2$ & $4.31 \pm 0.84$ & $6.82 \pm 2.08$ & $2.72 \pm 0.51$ & $1.80 \pm 0.32$ & $6.41 \pm 1.84$ & $3.77 \pm 0.79$ & $4.66 \pm 0.97$ \\
    \algname\
      & $\mathbf{1.28} \pm 0.56$ & $\mathbf{0.89} \pm 0.33$ & $\mathbf{0.47} \pm 0.16$ & $\mathbf{5.28} \pm 0.89$ & $\mathbf{2.26} \pm 0.48$ & $\mathbf{1.23} \pm 0.19$ & $\mathbf{2.64} \pm 0.64$ & $\mathbf{1.88} \pm 0.65$ & $\mathbf{1.41} \pm 0.37$ \\
    \bottomrule
  \end{tabular}%
  }
\end{table*}

\section{Discussion}\label{sec:discussion}

We introduced DExtrI, a framework for combinatorial extrapolation from (almost) axis-aligned data. By combining distributional regression with a projection-pursuit-type interaction structure, DExtrI can recover and extrapolate interaction effects that are not directly observed during training. 
We established population identifiability and extrapolation guarantees under structural model assumptions,
and demonstrated good
empirical performance across drug combination studies, hyperparameter optimization tasks, and approximately axis-aligned real-world datasets. 

A key implication is that one-factor-at-a-time experiments may contain substantially more information about interactions than is commonly assumed. This is particularly relevant in applications such as drug discovery, perturbation genomics, and hyperparameter optimization, where exhaustive combinatorial experimentation is often infeasible.

More broadly, our results suggest that combinatorial extrapolation can be treated as an identifiable statistical problem rather than solely as a prediction challenge.
Several directions merit further investigation. First, finite-sample convergence rates would provide guidance for experimental design. Second, it would be valuable to understand the robustness of the method when the model assumptions are only approximately satisfied.

\section*{Acknowledgments}
 M. \v{S}ola was supported by the Swiss National Science Foundation, grant no. 214865. 
We thank Markus Ulmer for insightful discussions on experimental design and dataset choice, and for thoughtful feedback that helped make the manuscript clearer and more accessible. The authors used ChatGPT (OpenAI) and Claude (Anthropic) during manuscript preparation for language editing, proofreading, and critical feedback on mathematical arguments and exposition. The authors reviewed and verified all resulting suggestions and take full responsibility for the content of the manuscript.

\bibliography{library.bib}

\begin{thebibliography}{10}

\bibitem{gas_turbine_co_and_nox_emission_data_set_551}
{Gas Turbine CO and NOx Emission Data Set}.
\newblock UCI Machine Learning Repository, 2019.
\newblock {DOI}: https://doi.org/10.24432/C5WC95.

\bibitem{seoul_bike_sharing_demand_560}
{Seoul Bike Sharing Demand}.
\newblock UCI Machine Learning Repository, 2020.
\newblock {DOI}: https://doi.org/10.24432/C5F62R.

\bibitem{aczel1966}
J.~Acz{\'e}l.
\newblock {\em Lectures on Functional Equations and Their Applications}.
\newblock Academic Press, New York, 1966.

\bibitem{agarwal2021neural}
Rishabh Agarwal, Levi Melnick, Nicholas Frosst, Xuezhou Zhang, Ben Lengerich, Rich Caruana, and Geoffrey~E Hinton.
\newblock Neural additive models: Interpretable machine learning with neural nets.
\newblock {\em Advances in Neural Information Processing Systems}, 34:4699--4711, 2021.

\bibitem{arjovsky2019invariant}
Martin Arjovsky, L{\'e}on Bottou, Ishaan Gulrajani, and David Lopez-Paz.
\newblock Invariant risk minimization.
\newblock {\em arXiv preprint arXiv:1907.02893}, 2019.

\bibitem{balestriero2021learning}
Randall Balestriero, Jerome Pesenti, and Yann LeCun.
\newblock Learning in high dimension always amounts to extrapolation.
\newblock {\em arXiv preprint arXiv:2110.09485}, 2021.

\bibitem{airfoil_self-noise_291}
Thomas Brooks, D.~Pope, and Michael Marcolini.
\newblock {Airfoil Self-Noise}.
\newblock UCI Machine Learning Repository, 1989.
\newblock {DOI}: https://doi.org/10.24432/C5VW2C.

\bibitem{chatterjee2022generalization}
Satrajit Chatterjee and Piotr Zielinski.
\newblock On the generalization mystery in deep learning.
\newblock {\em arXiv preprint arXiv:2203.10036}, 2022.

\bibitem{comon2008symmetric}
Pierre Comon, Gene Golub, Lek-Heng Lim, and Bernard Mourrain.
\newblock Symmetric tensors and symmetric tensor rank.
\newblock {\em SIAM Journal on Matrix Analysis and Applications}, 30(3):1254--1279, 2008.

\bibitem{dudley2002real}
R.~M. Dudley.
\newblock {\em Real Analysis and Probability}.
\newblock Cambridge University Press, 2002.

\bibitem{eggensperger2021hpobench}
Katharina Eggensperger, Philipp M{\"u}ller, Neeratyoy Mallik, Matthias Feurer, Ren{\'e} Sass, Aaron Klein, Noor Awad, Marius Lindauer, and Frank Hutter.
\newblock {HPOB}ench: A collection of reproducible multi-fidelity benchmark problems for {HPO}.
\newblock In Joaquin Vanschoren and Sai-Kit Yeung, editors, {\em Proceedings of the Neural Information Processing Systems Track on Datasets and Benchmarks 1 (NeurIPS Datasets and Benchmarks 2021)}, 2021.

\bibitem{freschlin2024neural}
Chase~R Freschlin, Sarah~A Fahlberg, Pete Heinzelman, and Philip~A Romero.
\newblock Neural network extrapolation to distant regions of the protein fitness landscape.
\newblock {\em Nature Communications}, 15(1):6405, 2024.

\bibitem{Friedman1981projectionpursuit}
Jerome~H. Friedman and Werner Stuetzle.
\newblock Projection pursuit regression.
\newblock {\em Journal of the American Statistical Association}, 76(376):817--823, 1981.

\bibitem{fu2024general}
Jingwen Fu, Zhizheng Zhang, Yan Lu, and Nanning Zheng.
\newblock A general theory for compositional generalization.
\newblock {\em arXiv preprint arXiv:2405.11743}, 2024.

\bibitem{gneiting2007strictly}
Tilmann Gneiting and Adrian~E. Raftery.
\newblock Strictly proper scoring rules, prediction, and estimation.
\newblock {\em Journal of the American Statistical Association}, 102(477):359--378, 2007.

\bibitem{hall1989onprojectionpursuit}
Peter Hall.
\newblock On projection pursuit regression.
\newblock {\em The Annals of Statistics}, 17(2):573--588, 1989.

\bibitem{huber1985projection}
Peter~J. Huber.
\newblock {Projection Pursuit}.
\newblock {\em The Annals of Statistics}, 13(2):435 – 475, 1985.

\bibitem{ICHIMURA199371}
Hidehiko Ichimura.
\newblock Semiparametric least squares (sls) and weighted sls estimation of single-index models.
\newblock {\em Journal of Econometrics}, 58(1):71--120, 1993.

\bibitem{kawaguchi2017generalization}
Kenji Kawaguchi, Yoshua Bengio, and Leslie~Pack Kaelbling.
\newblock Generalization in deep learning.
\newblock In Philipp Grohs and Gitta Kutyniok, editors, {\em Mathematical Aspects of Deep Learning}, pages 112--148. Cambridge University Press, Cambridge, 2022.

\bibitem{kingma2014adam}
Diederik~P Kingma and Jimmy Ba.
\newblock Adam: A method for stochastic optimization.
\newblock {\em International Conference on Learning Representations}, 2015.

\bibitem{krueger2021out}
David Krueger, Ethan Caballero, Joern-Henrik Jacobsen, Amy Zhang, Jonathan Binas, Dinghuai Zhang, Remi Le~Priol, and Aaron Courville.
\newblock Out-of-distribution generalization via risk extrapolation (rex).
\newblock In {\em International Conference on Machine Learning}, pages 5815--5826. PMLR, 2021.

\bibitem{lippl2024does}
Samuel Lippl and Kim Stachenfeld.
\newblock When does compositional structure yield compositional generalization? a kernel theory.
\newblock {\em International Conference on Learning Representations}, 2025.

\bibitem{liu2020drugcombdb}
Hui Liu, Wenhao Zhang, Bo~Zou, Jinxian Wang, Yuanyuan Deng, and Lei Deng.
\newblock Drugcombdb: a comprehensive database of drug combinations toward the discovery of combinatorial therapy.
\newblock {\em Nucleic acids research}, 48(D1):D871--D881, 2020.

\bibitem{ma2026deeplearningmissingdata}
Tianyi Ma, Tengyao Wang, and Richard~J Samworth.
\newblock Deep learning with missing data.
\newblock {\em Journal of the Royal Statistical Society Series B: Statistical Methodology}, page qkag114, 07 2026.

\bibitem{nagarajan2020understanding}
Vaishnavh Nagarajan, Anders Andreassen, and Behnam Neyshabur.
\newblock Understanding the failure modes of out-of-distribution generalization.
\newblock {\em International Conference on Learning Representations}, 2021.

\bibitem{abalone_1}
Warwick Nash, Tracy Sellers, Simon Talbot, Andrew Cawthorn, and Wes Ford.
\newblock {Abalone}.
\newblock UCI Machine Learning Repository, 1994.
\newblock {DOI}: https://doi.org/10.24432/C55C7W.

\bibitem{netanyahu2023learning}
Aviv Netanyahu, Abhishek Gupta, Max Simchowitz, Kaiqing Zhang, and Pulkit Agrawal.
\newblock Learning to extrapolate: A transductive approach.
\newblock {\em International Conference on Learning Representations}, 2023.

\bibitem{pace1997californiahousing}
Kelley Pace and Ronald Barry.
\newblock Sparse spatial autoregressions.
\newblock {\em Statistics \& Probability Letters}, 33(3):291--297, 1997.

\bibitem{pfister2024extrapolation}
Niklas Pfister and Peter Bühlmann.
\newblock Extrapolation-aware nonparametric statistical inference.
\newblock {\em arXiv preprint arXiv:2402.09758}, 2024.

\bibitem{pinkus2013ridge}
A.~Pinkus.
\newblock Smoothness and uniqueness in ridge function representation.
\newblock {\em Indagationes Mathematicae}, 24(4):725–738, 2013.
\newblock In memory of N.G. (Dick) de Bruijn (1918–2012).

\bibitem{rothenhausler2021anchor}
Dominik Rothenh{\"a}usler, Nicolai Meinshausen, Peter B{\"u}hlmann, and Jonas Peters.
\newblock Anchor regression: Heterogeneous data meet causality.
\newblock {\em Journal of the Royal Statistical Society Series B: Statistical Methodology}, 83(2):215--246, 2021.

\bibitem{shen2024distributional}
Xinwei Shen and Nicolai Meinshausen.
\newblock Distributional principal autoencoders.
\newblock {\em arXiv preprint arXiv:2404.13649}, 2024.

\bibitem{shen2024engression}
Xinwei Shen and Nicolai Meinshausen.
\newblock Engression: extrapolation through the lens of distributional regression.
\newblock {\em Journal of the Royal Statistical Society Series B: Statistical Methodology}, 87(3):653--677, 07 2025.

\bibitem{simchowitz2023combinatorial}
Max Simchowitz, Abhishek Gupta, and Kaiqing Zhang.
\newblock Tackling combinatorial distribution shift: A matrix completion perspective.
\newblock In {\em Proceedings of Thirty Sixth Conference on Learning Theory}, volume 195 of {\em Proceedings of Machine Learning Research}, pages 3356--3468. PMLR, 12--15 Jul 2023.

\bibitem{szekely2003estatistics}
G{\'a}bor~J. Sz{\'e}kely.
\newblock {E}-statistics: {T}he {E}nergy of {S}tatistical {S}amples.
\newblock Technical Report 02-16, Department of Mathematics and Statistics, Bowling Green State University, 2003.

\bibitem{szekelyhidi1991}
L.~Sz{\'e}kelyhidi.
\newblock {\em Convolution Type Functional Equations on Topological Abelian Groups}.
\newblock World Scientific, Singapore, 1991.

\bibitem{SzekelyRizzo2023}
Gábor~J. Székely and Maria~L. Rizzo.
\newblock {\em The Energy of Data and Distance Correlation}.
\newblock Chapman and Hall/CRC, 1st edition, 2023.

\bibitem{combined_cycle_power_plant_294}
Pnar Tfekci and Heysem Kaya.
\newblock {Combined Cycle Power Plant}.
\newblock UCI Machine Learning Repository, 2014.
\newblock {DOI}: https://doi.org/10.24432/C5002N.

\bibitem{twarog2021data}
Nathaniel~R Twarog, Nancy~E Martinez, Jessica Gartrell, Jia Xie, Christopher~L Tinkle, and Anang~A Shelat.
\newblock Data vignettes for the application of response surface models in drug combination analysis.
\newblock {\em Data in Brief}, 38:107400, 2021.

\bibitem{vanschoren2013openml}
Joaquin Vanschoren, Jan~N. van Rijn, Bernd Bischl, and Luis Torgo.
\newblock {OpenML}: Networked science in machine learning.
\newblock {\em ACM SIGKDD Explorations Newsletter}, 15(2):49--60, 2013.

\bibitem{xu2020neural}
Keyulu Xu, Mozhi Zhang, Jingling Li, Simon~S Du, Ken-ichi Kawarabayashi, and Stefanie Jegelka.
\newblock How neural networks extrapolate: From feedforward to graph neural networks.
\newblock {\em International Conference on Learning Representations}, 2021.

\bibitem{zhang2012identifiability}
Kun Zhang and Aapo Hyvarinen.
\newblock On the identifiability of the post-nonlinear causal model.
\newblock In {\em Proceedings of the Twenty-Fifth Conference on Uncertainty in Artificial Intelligence (UAI 2009)}, pages 647--655. AUAI Press, 2009.

\end{thebibliography}

\appendix

\section{Mathematical Derivations}\label{proofs}

Throughout this appendix, $\mathcal X_{\mathrm{tr}}$ denotes the axis-aligned
training support in~\eqref{defn:train_supports}. We give proofs of theoretical results stated in Section \ref{sec:identifiabilityExtrapolation}.  Consider two arbitrary parameterizations
\[
T(x,\varepsilon)
=
F(x)+G(x,\varepsilon)
=
\sum_{j=1}^p f_j(x_j)
+
\sum_{k=1}^K g_k\!\left(\beta_k^\top x+h_k(\varepsilon)\right)
\]
and
\[
\widetilde T(x,\varepsilon)
=
\widetilde F(x)+\widetilde G(x,\varepsilon)
=
\sum_{j=1}^p \widetilde f_j(x_j)
+
\sum_{\ell=1}^{\widetilde K}
\widetilde g_\ell\!\left(\widetilde\beta_\ell^\top x+
\widetilde h_\ell(\varepsilon)\right).
\]
Tilded quantities always refer to the second parameterization.
Write its additive and interaction parts as \[
F(x) := \sum_{j=1}^p f_j(x_j), \qquad
G(x, \varepsilon) := \sum_{k=1}^K g_k\!\big(\beta_k^\top x+h_k(\varepsilon)\big).
\]

A second parametrization of the same model is a choice of parameters $(\tf_j)_{j=1}^p$, $(\tg_l,\tb_l,\th_l)_{l=1}^{\tK}$ with additive and interaction parts $\tilde F(x)=\sum_{j}\tf_j(x_j)$ and $\tilde G(x,\varepsilon)=\sum_{l=1}^{\tK}\tg_l(\tb_l^\top x+\th_l(\varepsilon))$; throughout tilde $\tilde{(\cdot)}$ quantities refer to such a second parametrization. 
The following result enables all following identifiability results.
\begin{lemma}\label{initial_step_lemma}
Let $F,\tilde{F}:\mathbb{R}^p\rightarrow\mathbb{R}$, and let
$G,\tilde{G}:\mathbb{R}^p\times\mathbb{R}\rightarrow\mathbb{R}$ be continuous, strictly
increasing in the second variable. Fix $x\in \Xtr$, if
\begin{align*}
    F(x)+G(x,\varepsilon)\overset{d}{=}\tilde{F}(x)+\tilde{G}(x,\varepsilon),
    \qquad \varepsilon\sim \mathcal{N}(0,1),
\end{align*}
then for all $\varepsilon\in\mathbb{R}$,
\begin{align*}
    F(x)+G(x,\varepsilon)=\tilde{F}(x)+\tilde{G}(x,\varepsilon).
\end{align*}
\end{lemma}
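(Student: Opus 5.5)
Fix $x\in\Xtr$ and set $\phi(\varepsilon):=F(x)+G(x,\varepsilon)$ and $\tilde\phi(\varepsilon):=\tilde F(x)+\tilde G(x,\varepsilon)$. The plan is to read both $\phi$ and $\tilde\phi$ as the quantile function of the common law, composed with the standard normal CDF $\Phi$; that forces them to coincide. By hypothesis both maps $\mathbb{R}\to\mathbb{R}$ are continuous and strictly increasing, and $\phi(\varepsilon)\overset{d}{=}\tilde\phi(\varepsilon)$ for $\varepsilon\sim\mathcal N(0,1)$. Call the common law $\mu$, with CDF $F_\mu$ and quantile function $Q_\alpha$ as in the notation section.

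\textbf{Step 1: compute the CDF of $\phi(\varepsilon)$.} Because $\phi$ is continuous and strictly increasing, its image is an open interval $I=(\inf\phi,\sup\phi)$ and $\phi:\mathbb{R}\to I$ is a homeomorphism. For $t=\phi(s)\in I$,
\[
P[\phi(\varepsilon)\le t]=P[\varepsilon\le s]=\Phi(s),
\]
so $F_\mu(\phi(s))=\Phi(s)$ for every $s\in\mathbb{R}$. The same argument applied to $\tilde\phi$ gives $F_\mu(\tilde\phi(s))=\Phi(s)$.

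\textbf{Step 2: show $F_\mu$ is injective on the relevant set.} Fix $s$. The function $F_\mu\circ\phi=\Phi$ is strictly increasing, so $F_\mu$ is strictly increasing on $I$. Hence $\phi(s)$ is the unique $t$ with $F_\mu(t)=\Phi(s)$ within $I$; equivalently, $\phi(s)=Q_{\Phi(s)}(\mu)$. To see this, note that for $t<\phi(s)$ with $t\in I$ we have $F_\mu(t)<\Phi(s)$, and for $t\le\inf I$ we have $F_\mu(t)=0<\Phi(s)$. The same reasoning gives $\tilde\phi(s)=Q_{\Phi(s)}(\mu)$.

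\textbf{Step 3: conclude.} Since $\phi$ and $\tilde\phi$ induce the same law $\mu$, the two quantiles in Step 2 are the same number, so $\phi(s)=\tilde\phi(s)$ for all $s\in\mathbb{R}$. This is the claimed deterministic equality.

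The only delicate point, and the main obstacle, is the identification $\phi(s)=Q_{\Phi(s)}(\mu)$. It relies on $F_\mu$ having no flat parts inside $I$, which follows from $F_\mu\circ\phi$ being strictly increasing. It also needs care at the boundary of $I$, where $\phi$ may be bounded. Both issues are handled by the observation in Step 2 that $\Phi(s)\in(0,1)$ strictly, so the infimum defining $Q_{\Phi(s)}$ is attained exactly at $\phi(s)$.

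Monotonicity in the same direction is essential. If one map were increasing and the other decreasing, the symmetry $\varepsilon\overset{d}{=}-\varepsilon$ would give only $\phi(s)=\tilde\phi(-s)$. The normalization that $G(x,\cdot)$ is increasing in $\mathcal M$ rules out this case.
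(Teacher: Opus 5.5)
Your proof is correct and takes essentially the same route as the paper: both identify the value at $\varepsilon=s$ as the $\Phi(s)$-quantile of the common law and conclude from equality of quantiles, the paper parametrizing by $\alpha$ and using surjectivity of $\Phi^{-1}$ onto $\mathbb{R}$. Your Steps 1--2 supply the check the paper leaves implicit, namely that continuity and strict monotonicity make this quantile identification exact.
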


\begin{proof}
Equality in distribution implies equality of all quantiles. For
$\varepsilon\sim \mathcal{N}(0,1)$ the $\alpha$-quantile is $\Phi^{-1}(\alpha)$, where
$\Phi$ denotes the standard normal cumulative distribution function, and
$\Phi^{-1}:(0,1)\to\mathbb{R}$ is surjective. Since $G$ and $\tilde{G}$ are
strictly increasing in their second argument, the $\alpha$-quantile of
$F(x)+G(x,\varepsilon)$ is $F(x)+G(x,\Phi^{-1}(\alpha))$, and likewise for the
tilde side; hence
\begin{align*}
    F(x)+G(x,\Phi^{-1}(\alpha))=\tilde{F}(x)+\tilde{G}(x,\Phi^{-1}(\alpha))
    \quad\text{for all }\alpha\in(0,1).
\end{align*}
Since $\Phi^{-1}$ maps $(0,1)$ onto $\mathbb{R}$, substituting
$\varepsilon=\Phi^{-1}(\alpha)$ gives the identity for all
$\varepsilon\in\mathbb{R}$.
\end{proof}

Let $\mu_{\rm tr}$ denote the marginal distribution of the training
covariates and, for $x\in\mathcal X_{\rm tr}$, let
$P_x:=P_{\rm tr}(Y\mid X=x)$ denote a regular conditional distribution
of $Y$ given $X=x$. We measure weak convergence of probability laws
using the bounded--Lipschitz metric. For probability measures $P,Q$ on
$\mathbb R$, define
\[
d_{\rm BL}(P,Q)
:=
\sup_{\|f\|_{\rm BL}\leq 1}
\left|\int f\,dP-\int f\,dQ\right|,
\qquad
\|f\|_{\rm BL}:=\|f\|_\infty+\operatorname{Lip}(f),
\]
where
$\operatorname{Lip}(f):=\sup_{u\neq v}|f(u)-f(v)|/|u-v|$.
The metric $d_{\rm BL}$ metrizes weak convergence of probability
measures on $\mathbb R$; see, e.g.,
\cite[Theorem~11.3.3]{dudley2002real}. 

\begin{corollary}\label{cor:ae2all}
Let $\mu_{\rm tr}$ denote the marginal distribution of the training
covariates, with $\operatorname{supp}(\mu_{\rm tr})=\mathcal X_{\rm tr}$,
and let $\widetilde T$ be any population minimizer from
Lemma~\ref{lemma_engression_prop1}. For $x\in\mathcal X_{\rm tr}$, let
$P_x:=P_{\rm tr}(Y\mid X=x)$. Assume that
$x\mapsto P_x$ and
$x\mapsto \operatorname{Law}(\widetilde T(x,\varepsilon))$
are continuous on $\mathcal X_{\rm tr}$ with respect to the
bounded-Lipschitz metric. Then
\[
\operatorname{Law}(\widetilde T(x,\varepsilon))=P_x
\qquad\text{for every }x\in\mathcal X_{\rm tr}.
\]
\end{corollary}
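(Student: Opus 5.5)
By Lemma~\ref{lemma_engression_prop1}, $\operatorname{Law}(\widetilde T(x,\varepsilon))=P_x$ for all $x$ outside a $\mu_{\rm tr}$-null set $N\subset\mathcal X_{\rm tr}$. The plan is to upgrade this almost-everywhere identity to an everywhere identity using continuity of both laws in $x$ and the fact that $\mu_{\rm tr}$ has full support on $\mathcal X_{\rm tr}$.

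\textbf{Step 1 (the good set is dense).} Let $E:=\mathcal X_{\rm tr}\setminus N$, so that $\mu_{\rm tr}(N)=0$. Fix $x\in\mathcal X_{\rm tr}$ and $\delta>0$. Since $x\in\operatorname{supp}(\mu_{\rm tr})$, the relative ball $B_\delta(x)\cap\mathcal X_{\rm tr}$ has positive $\mu_{\rm tr}$-mass. It cannot be contained in the null set $N$, so it meets $E$. Hence $E$ is dense in $\mathcal X_{\rm tr}$, and there is a sequence $x_n\in E$ with $x_n\to x$.

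\textbf{Step 2 (pass to the limit).} Set $Q_y:=\operatorname{Law}(\widetilde T(y,\varepsilon))$. By the triangle inequality for $d_{\rm BL}$,
\[
d_{\rm BL}(Q_x,P_x)\le d_{\rm BL}(Q_x,Q_{x_n})+d_{\rm BL}(Q_{x_n},P_{x_n})+d_{\rm BL}(P_{x_n},P_x).
\]
The middle term is zero because $x_n\in E$. The outer two terms tend to zero by the assumed continuity of $y\mapsto Q_y$ and $y\mapsto P_y$ in $d_{\rm BL}$ on $\mathcal X_{\rm tr}$. Therefore $d_{\rm BL}(Q_x,P_x)=0$.

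\textbf{Step 3 (equality of laws).} $d_{\rm BL}$ is a genuine metric on probability measures on $\mathbb R$ (it metrizes weak convergence, Dudley Thm.~11.3.3), and bounded Lipschitz functions determine a measure. So $d_{\rm BL}(Q_x,P_x)=0$ gives $Q_x=P_x$. Since $x\in\mathcal X_{\rm tr}$ was arbitrary, this proves the claim.

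\textbf{Main obstacle.} The only delicate point is Step 1: the regular conditional law $P_x$ is a priori defined only up to $\mu_{\rm tr}$-null sets. The continuity hypothesis is exactly what pins down a canonical version, which makes "every $x$" meaningful. I would add a remark that "continuity" here refers to that continuous version; everything else is routine.
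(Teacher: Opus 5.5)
Your proof is correct and uses the same ingredients as the paper's proof: the a.e.\ identity from Lemma~\ref{lemma_engression_prop1}, full support of $\mu_{\rm tr}$ on $\mathcal X_{\rm tr}$, continuity of both law-valued maps in $d_{\rm BL}$, and the triangle inequality. The paper argues by contradiction: continuity makes $x\mapsto d_{\rm BL}(P_x,\operatorname{Law}(\widetilde T(x,\varepsilon)))$ positive on a relative neighbourhood of a bad point, and that neighbourhood has positive $\mu_{\rm tr}$-mass. You argue directly through density of the good set, which is the same argument in contrapositive form.
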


\begin{proof}[Proof of Corollary~\ref{cor:ae2all}]
By Lemma~\ref{lemma_engression_prop1},
$\operatorname{Law}(\widetilde T(x,\varepsilon))=P_x$ for
$\mu_{\rm tr}$-almost every $x\in\mathcal X_{\rm tr}$. Suppose, for
contradiction, that the equality fails at some
$x_0\in\mathcal X_{\rm tr}$. Define
$d(x):=d_{\rm BL}\!\left(P_x,
\operatorname{Law}(\widetilde T(x,\varepsilon))\right)$, where
$d_{\rm BL}$ denotes the bounded-Lipschitz metric. Then $d(x_0)>0$.
By continuity of the two law-valued maps and the triangle inequality,
$d$ is continuous on $\mathcal X_{\rm tr}$. Hence there exists $r>0$
such that $d(x)>0$ for every
$x\in U:=B(x_0,r)\cap\mathcal X_{\rm tr}$.

Since $\operatorname{supp}(\mu_{\rm tr})=\mathcal X_{\rm tr}$,
$x_0\in\operatorname{supp}(\mu_{\rm tr})$, and therefore
$\mu_{\rm tr}(B(x_0,r))>0$. Moreover,
$\mu_{\rm tr}(\mathcal X_{\rm tr})=1$, so
\[
\mu_{\rm tr}(U)
=
\mu_{\rm tr}\!\left(B(x_0,r)\cap\mathcal X_{\rm tr}\right)
=
\mu_{\rm tr}(B(x_0,r))
>0.
\]
Thus the two laws differ on a set of positive
$\mu_{\rm tr}$-measure, contradicting their equality
$\mu_{\rm tr}$-almost everywhere.
\end{proof}

\begin{remark}
\label{rem:continuity-laws}
The continuity conditions in Corollary~\ref{cor:ae2all} are automatically
satisfied in our model class when the bounded--Lipschitz metric is used.
Indeed, let $T^\star\in\mathcal M$ denote the data-generating
transformation and choose the conditional law as
$P_x:=\operatorname{Law}(T^\star(x,\varepsilon))$. Since all component
functions defining $T^\star$ are continuous, $x_n\to x$ implies
$T^\star(x_n,\varepsilon)\to T^\star(x,\varepsilon)$ for every fixed
$\varepsilon$, and hence almost surely with respect to $\varepsilon$.
Therefore
$\operatorname{Law}(T^\star(x_n,\varepsilon))
$ converges to $\operatorname{Law}(T^\star(x,\varepsilon))$. Since
$d_{\rm BL}$ metrizes weak convergence, it follows that
$d_{\rm BL}(P_{x_n},P_x)\to0$.
Likewise, any population minimizer $\widetilde T$ considered in
Lemma~\ref{lemma_engression_prop1} belongs to $\mathcal M$. Hence the
same argument gives
\[
d_{\rm BL}\!\left(
\operatorname{Law}(\widetilde T(x_n,\varepsilon)),
\operatorname{Law}(\widetilde T(x,\varepsilon))
\right)\to0.
\]
Thus both law-valued maps appearing in
Corollary~\ref{cor:ae2all} are continuous on
$\mathcal X_{\rm tr}$ in $d_{\rm BL}$. In particular, the continuity
assumptions of the corollary do not impose an additional regularity
condition beyond membership of the data-generating transformation and
the population minimizer in $\mathcal M$.
\end{remark}

\begin{definition}[Warped ridge, foliation, cofoliarity]\label{def:warped}
A \emph{warped ridge} on a strip $U=(a,A_0)\times\R$ is a map
$(t,\eps)\mapsto\Th(ct+h(\eps))$ with $c\neq0$, $h$ strictly monotone, and profile $\Th$.
Its \emph{foliation} is the family of level curves $\{ct+h(\eps)=\text{const}\}$, recorded
by the normalised warp $h/c$. Two warped ridges with parameters $(c,h)$ and $(\tilde c,\tilde
h)$ are \emph{cofoliar} if their level curves coincide as subsets of $U$.
\end{definition}

The first structural lemma identifies exactly when two warped ridges are cofoliar. 

\begin{lemma}[Foliation rigidity]\label{lem:cofoliar}
Let $h,\th$ be strictly monotone and differentiable, $c,\tilde c\neq0$. The warped ridges
with parameters $(c,h)$ and $(\tilde c,\th)$ are cofoliar on $U$ if and only if there exist
$\alpha\neq0$ and $\delta\in\R$ with
\[
  \th=\alpha h+\delta,\qquad \tilde c=\alpha c.
\]
Under the normalization $h(0)=\th(0)=0$ this forces $\delta=0$, so cofoliarity is
equivalent to $\th=\alpha h$ and $\tilde c=\alpha c$ for a single $\alpha\neq0$.
\end{lemma}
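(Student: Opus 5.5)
The "if" direction is a direct check. If $\th=\alpha h+\delta$ and $\tilde c=\alpha c$, then
\[
\tilde c t+\th(\eps)=\alpha\bigl(ct+h(\eps)\bigr)+\delta .
\]
Since $\alpha\neq0$, the map $s\mapsto\alpha s+\delta$ is a bijection of $\R$. So the level set $\{\tilde ct+\th(\eps)=\kappa\}$ equals $\{ct+h(\eps)=(\kappa-\delta)/\alpha\}$, and the two families of level curves coincide as subsets of $U$.

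For the "only if" direction, the plan is to encode cofoliarity as a functional relation between the two index maps. Write $u(t,\eps)=ct+h(\eps)$ and $\tilde u(t,\eps)=\tilde ct+\th(\eps)$.

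The first step is to show that $\tilde u$ is constant on every level set of $u$. Each level set of $u$ in $U$ is a connected curve: since $h$ is strictly monotone, it is the graph $t=(\kappa-h(\eps))/c$ intersected with the strip, which is connected because $\eps$ ranges over an interval. Cofoliarity says this curve is exactly one level curve of $\tilde u$. Hence there is a function $\phi$, defined on the range of $u$ over $U$ (an open interval $I$), with
\[
\tilde u=\phi\circ u \quad\text{on }U.
\]

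The second step shows that $\phi$ is affine.
\begin{itemize}
\item Fix $\eps$ and vary $t$. Then $\phi(ct+h(\eps))=\tilde ct+\th(\eps)$, so $\phi$ is affine with slope $\tilde c/c$ on each interval $\{ct+h(\eps):t\in(a,A_0)\}$.
\item These intervals are translates $h(\eps)+c(a,A_0)$.
\item As $\eps$ varies continuously, consecutive translates overlap, and their union is $I$.
\item On each overlap the affine pieces share the slope $\tilde c/c$, so they must agree exactly. Hence $\phi(s)=\alpha s+\delta$ on all of $I$, with $\alpha=\tilde c/c\neq0$.
\end{itemize}

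Substituting this back gives $\tilde ct+\th(\eps)=\alpha ct+\alpha h(\eps)+\delta$ for all $(t,\eps)\in U$. Comparing the terms depending only on $\eps$ yields $\th=\alpha h+\delta$, and $\tilde c=\alpha c$ follows from the definition of $\alpha$. Under the normalization $h(0)=\th(0)=0$, evaluating at $\eps=0$ gives $\delta=0$.

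The main obstacle is making the overlap argument in the second step rigorous.
\begin{itemize}
\item The key point is that a nondegenerate overlap of two consecutive translates pins down the intercepts. This uses continuity of $h$, which holds because $h$ is differentiable.
\item I would argue via connectedness: the set of $\eps$ on which the local intercept equals that at $\eps=0$ is both open and closed in $\R$.
\item Alternatively, one can differentiate. Since $\tilde u=\phi\circ u$ with $\partial_t u=c\neq0$, the function $\phi$ is differentiable, and the relation $\partial_t\tilde u/\partial_t u=\tilde c/c$ gives $\phi'\equiv\alpha$ on the connected interval $I$.
\end{itemize}
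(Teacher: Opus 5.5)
Your proposal is correct and follows essentially the paper's proof. The paper likewise defines $\Lambda$ with $\tilde c t+\th(\eps)=\Lambda(ct+h(\eps))$ on $U$ and notes that $\Lambda$ is affine with slope $\tilde c/c$ on each translate $h(\eps_0)+c(a,A_0)$. It then concludes $\Lambda'\equiv\tilde c/c$ on the interval $I$, which is your derivative variant, and compares the $\eps$-parts. The paper's extra step proving strict monotonicity and continuity of $\Lambda$ is not needed, so omitting it is fine.
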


\begin{proof}
\textit{Forward direction.}
Suppose the two foliations coincide on $U$. Let $F\colon U\to\R$,
$F(t,\eps):=ct+h(\eps)$: its gradient $(c,h'(\eps))$ is nowhere
zero on $U$ since $c\neq0$. Each level set $F^{-1}(s)$ is therefore a smooth,
connected curve, and the level sets partition $U$.

\textit{Existence and strict monotonicity of $\Lambda$.}
For each $s$ in $I:=F(U)\subset\R$, cofoliarity requires
$\tilde{c}t+\tilde{h}(\eps)$ to be constant on $F^{-1}(s)$; define $\Lambda(s)$
to be that constant value. This yields a well-defined function $\Lambda\colon I\to\R$
satisfying
\begin{equation}\label{eq:cofol-Lambda}
  \tilde{c}t+\th(\eps)=\Lambda\bigl(ct+h(\eps)\bigr)\qquad\text{on }U.
\end{equation}
Strict monotonicity of $\Lambda$: if $s<s'$ are distinct values in $I$, the
corresponding level curves $F^{-1}(s)$ and $F^{-1}(s')$ are disjoint, and by
cofoliarity the level sets of $\tilde{c}t+\th(\eps)$ with values $\Lambda(s)$ and
$\Lambda(s')$ are likewise disjoint, forcing $\Lambda(s)\neq\Lambda(s')$. Since $F$
and $\tilde{c}t+\th(\eps)$ are both continuous and $U$ is connected, the induced map
$\Lambda$ on $I$ is continuous and injective, hence strictly monotone.

\textit{Differentiability of $\Lambda$ and identification of its derivative.}
Fix any $\eps_0$ in the $\eps$-domain of $U$. As $t$ varies, $u:=ct+h(\eps_0)$
ranges over an open subinterval of $I$, and \eqref{eq:cofol-Lambda} reads
\[
  \Lambda(u) = \tilde{c}\,\frac{u-h(\eps_0)}{c}+\th(\eps_0),
\]
an affine function of $u$. In particular, $\Lambda$ is differentiable on this
subinterval with $\Lambda'(u)=\tilde{c}/c$. Since $\eps_0$ is arbitrary and the
corresponding subintervals cover $I$, $\Lambda$ is differentiable on all of $I$
with constant derivative $\Lambda'\equiv\alpha:=\tilde{c}/c\neq0$, hence
$\Lambda(u)=\alpha u+\delta$ for some $\delta\in\R$.

Substituting $\Lambda(u)=\alpha u+\delta$ into \eqref{eq:cofol-Lambda} gives
\[
  \tilde{c}t+\th(\eps)=\alpha\bigl(ct+h(\eps)\bigr)+\delta\qquad\text{on }U.
\]
Comparing the $t$-coefficients yields $\tilde{c}=\alpha c$, and comparing the
$\eps$-dependent parts yields $\th(\eps)=\alpha h(\eps)+\delta$ for all $\eps$.
Evaluating the latter at $\eps=0$, where $h(0)=\th(0)=0$ by hypothesis, gives
$\delta=0$.

\textit{Converse.}
If $\th=\alpha h+\delta$ and $\tilde{c}=\alpha c$, then
$\tilde{c}t+\th(\eps)=\alpha(ct+h(\eps))+\delta$, so \eqref{eq:cofol-Lambda}
holds with $\Lambda(u)=\alpha u+\delta$ strictly monotone, and the foliations
coincide. Evaluating at $\eps=0$ gives $\delta=0$ under the normalization.
\end{proof}

The normalization $h_k(0)=0$ is implied by assumption \ref{as:A2}: under
$\eps\sim \mathcal{N}(0,1)$ the median of $\eps$ is $0$, and a strictly monotone median--zero warp
satisfies $h_k(0)=0$. We use it throughout without further comment.

\begin{corollary}[Within--model non--cofoliarity]\label{cor:within}
Suppose the warps $\{h_k\}_{k=1}^K$ are pairwise non--proportional.
Then no two distinct warped ridges within the same parameterization are cofoliar on any axis on which both are active. 
Moreover, if the warps are linearly independent, then their derivatives $\{h_k'\}_{k=1}^K$
are linearly independent.
\end{corollary}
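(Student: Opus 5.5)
The corollary has two claims, and I will handle them separately.

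\emph{First claim.} I argue by contradiction using Lemma~\ref{lem:cofoliar}. Take two distinct components $k\neq l$ in the same parameterization. Suppose both are active on the axis $\mathcal X_j$, that is, $j\in\mathcal B_k\cap\mathcal B_l$. Restricted to $x=te_j$, the $k$-th ridge becomes the warped ridge $(t,\eps)\mapsto g_k(\beta_{kj}t+h_k(\eps))$. Its parameters are $(c,h)=(\beta_{kj},h_k)$, and $c\neq 0$ because $j\in\mathcal B_k$. The $l$-th ridge likewise has parameters $(\beta_{lj},h_l)$ with $\beta_{lj}\neq0$.

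Now assume these two warped ridges are cofoliar on the strip $U$. The normalization $h_k(0)=h_l(0)=0$ follows from \ref{as:A2}, as noted before the corollary. So Lemma~\ref{lem:cofoliar} gives $h_l=\alpha h_k$ for some $\alpha\neq0$, together with $\beta_{lj}=\alpha\beta_{kj}$. The first identity says $h_k$ and $h_l$ are proportional, which contradicts the hypothesis. Hence no two distinct ridges are cofoliar on any shared active axis.

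One detail needs checking: Lemma~\ref{lem:cofoliar} requires the warps to be strictly monotone and differentiable. Both hold in the model class.

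\emph{Second claim.} Suppose $\sum_k c_k h_k'\equiv 0$ on $\mathbb R$. Integrating from $0$ gives $\sum_k c_k\bigl(h_k(\eps)-h_k(0)\bigr)=0$ for all $\eps$. Since every $h_k(0)=0$, this reduces to $\sum_k c_k h_k\equiv 0$. Linear independence of the $h_k$ then forces all $c_k=0$, so the derivatives are linearly independent.

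\emph{Main subtlety.} The only delicate step is the integration. It needs $h_k'$ to be integrable on bounded intervals, so that the fundamental theorem of calculus applies. This is guaranteed because the warps are real analytic by \ref{as:AR}; differentiability alone would not suffice in general. The step also relies on the common anchor $h_k(0)=0$ to eliminate the integration constants. Without that anchor, the argument would show only that the $h_k$ are linearly dependent modulo constants.
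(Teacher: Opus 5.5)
Your proof is correct and follows the paper's argument: Lemma~\ref{lem:cofoliar} together with the normalization $h_k(0)=0$ for the first claim, and integrating $\sum_k c_k h_k'\equiv0$ from $0$ for the second. One correction to your closing remark: analyticity is not needed, because $\sum_k c_k h_k$ is differentiable with identically zero derivative and is therefore constant by the mean value theorem, so differentiability of the warps already suffices.
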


\begin{proof}
By Lemma~\ref{lem:cofoliar}, cofoliarity of interactions $k\neq k'$ forces
$h_{k'}=\alpha h_k$, i.e.\ proportional warps, contrary to hypothesis. For the second
claim, suppose $\sum_k\gamma_k h_k'\equiv0$. Integrating from $0$ to $\eps$ and using
$h_k(0)=0$ gives $\sum_k\gamma_k h_k\equiv0$; linear independence of $\{h_k\}$ then forces
$\gamma_k=0$ for all $k$.
\end{proof}

This section studies the case when the warps $h_k$ are non-affine, but strictly monotone and real analytic. Under growth conditions on the $h_k$, the model structure is identifiable. The affine regime is discussed in Appendix \ref{ss:affine-id}. 

\subsubsection*{Notation}
Two parameterizations in $\mathcal{M}$, with parameters
$(f_j)_{j=1}^p,(g_k,\beta_k,h_k)_{k=1}^K$ and $(\tf_j)_{j=1}^p,(\tg_l,\tb_l,\th_l)_{l=1}^{\tK}$ and $K$,
$\tK$ interactions, are considered. If they induce the same conditional distribution
of $T$ on $\Xtr$, then by Lemma~\ref{initial_step_lemma} they satisfy the pointwise
identity
\begin{equation}\label{eq:full}
  \sum_j f_j(x_j)+\sum_{k=1}^K g_k(\beta_k^\top x+h_k(\eps))
  =\sum_j\tf_j(x_j)+\sum_{l=1}^{\tK}\tg_l(\tb_l^\top x+\th_l(\eps)),
  \qquad x\in\Xtr,\ \eps\in\R.
\end{equation}
Write $\Btr_k=\supp\beta_k$. Setting $x=te_j$ in \eqref{eq:full} and collecting
$t$-independent terms into $B_j(\varepsilon)$ gives, on the strip $U_j=(a_j,A_j)\times\R$,
\begin{equation}\label{eq:axis}
  \sum_{k\in S_j} g_k(\beta_{kj}t+h_k(\eps))
  -\sum_{l\in\tS_j}\tg_l(\tb_{lj}t+\th_l(\eps))=A_j(t)+B_j(\eps),
\end{equation}
where $S_j=\{k:\beta_{kj}\neq0\}$, $\tS_j=\{l:\tb_{lj}\neq0\}$, $A_j=\tf_j-f_j$
up to a constant, and $B_j$ collects all $\eps$-only terms.

Cofoliarity (Definition~\ref{def:warped}) is an equivalence relation on the
active terms appearing in \eqref{eq:axis}: each term carries a foliation
$(\beta_{kj},h_k)$ or $(\tb_{lj},\th_l)$, and two terms are cofoliar iff their
foliations coincide up to a nonzero scalar, i.e.\ iff $(\tilde c,\th)=(\alpha
c,\alpha h)$ for some $\alpha\neq0$ (Lemma~\ref{lem:cofoliar}). By
Corollary~\ref{cor:within}(i), no two true terms and no two tilde terms share a
foliation class, so each class contains at most one term from each side. Within a
matched class $\{k,l\}$, Lemma~\ref{lem:cofoliar} gives $\th_l=\alpha_{kl}h_k$ and
$\tb_{lj}=\alpha_{kl}\beta_{kj}$, so $\tb_{lj}t+\th_l(\eps)=\alpha_{kl}(\beta_{kj}t+h_k(\eps))$
and the contribution of the pair to \eqref{eq:axis} is
\[
  g_k\!\bigl(\beta_{kj}t+h_k(\eps)\bigr)
  -\tg_l\!\bigl(\alpha_{kl}(\beta_{kj}t+h_k(\eps))\bigr)
  =\bigl(g_k-\tg_l(\alpha_{kl}\,\cdot)\bigr)\!\bigl(\beta_{kj}t+h_k(\eps)\bigr),
\]
a single warped ridge with foliation $(\beta_{kj},h_k)$. Unmatched true terms
$\{k\}$ contribute $g_k(\beta_{kj}t+h_k(\eps))$ and unmatched tilde terms $\{l\}$
contribute $-\tg_l(\tb_{lj}t+\th_l(\eps))$, each again a single warped ridge.
Grouping \eqref{eq:axis} by these $N\ge1$ foliation classes and writing $c_m$,
$h_m$, $\Th_m$ for the slope, warp, and profile of the $m$-th class, the axis
identity becomes
\begin{equation}\label{eq:collapse}
  \sum_{m=1}^{N}\Th_m(c_mt+h_m(\eps))=A_j(t)+B_j(\eps),
\end{equation}
in which the foliations $(c_m,h_m)$ are pairwise non-cofoliar by construction. 
\begin{definition}[Collapse]\label{def:collapse}
A \emph{collapse} is an identity of the form \eqref{eq:collapse} with pairwise
non-cofoliar foliations in which some $\Th_m$ is not a polynomial. A term in
\eqref{eq:collapse} is \emph{unmatched} if its foliation class contains only one
side (true or tilde), so that $\Th_m=g_k$ or $\Th_m=-\tg_l$.
\end{definition}

Let $\cP$ denote the polynomials and $\cP_0$ the constants. The \emph{redistribution space} is
\begin{equation}\label{eq:redist}
  \cR=\Bigl\{(q_k)_{k=1}^K \in\cP^K:\ \exists\,(r_j)_{j=1}^p\in\cP^p,\ \textstyle\sum_{k=1}^K q_k(\beta_k^\top x+h_k(\eps))
  =\sum_{j=1}^p r_j(x_j)\ \text{on }\Xtr\times\R\Bigr\}.
\end{equation}
The model is \emph{identifiable} (Definition \ref{DEF_model_is_identifiable}) if any second parameterization in $\mathcal{M}$ inducing the same
conditional law on $\Xtr$ has, after relabelling, the same $K$, supports, directions and warps up to
a common per--term scalar, and interaction functions equal up to that scalar and an additive
constant; the per--term rescaling $(\beta_k,h_k,g_k)\mapsto(\rho\beta_k,\rho h_k,g_k(\cdot/\rho))$ and
relabelling are the unavoidable indeterminacies.

\begin{definition}[Rigidity]\label{def:rigid}
A pair of parameterizations $((f_j)_{j=1}^p, (\beta_k, h_k, g_k)_{k=1}^K), ((f_j)_{j=1}^p, (\tilde{\beta}_l, \tilde{h}_l, \tilde{g}_l)_{l=1}^{\tilde{K}})$ is \emph{rigid} if, on
every axis $j$, after grouping \eqref{eq:axis} into its pairwise non--cofoliar foliation
classes, every class, singleton or matched pair, has its profile $\Theta_m$ in $\cP$.
Equivalently, no collapse \eqref{eq:collapse} arises from \eqref{eq:axis}: every identity of
the form \eqref{eq:collapse} obtained from \eqref{eq:axis} has all of its profiles
polynomial.
\end{definition}

For completeness, we repeat here the assumptions from Section \ref{subsec:identif}.
\subsection*{Assumptions}\label{sec:assumptions}

\begin{enumerate}[label=\textnormal{(A\arabic*)},ref=\textnormal{A\arabic*},leftmargin=3.2em]
    \item \textbf{Functions.}
    Each $g_k$ is continuous and not a polynomial; each $f_j$ is continuous.

    \item \textbf{Warp normalization.}
    Each $h_k$ is strictly monotone with median zero, so that $h_k(0)=0$,
    and
    \[
        h_k(\varepsilon)\to\pm\infty
        \qquad\text{as}\qquad
        \varepsilon\to\pm\infty.
    \]

    \item \textbf{Analyticity.}
    Each $g_k$, $f_j$, and $h_k$ is real analytic.

    \item \textbf{Supports.}
    $|\mathcal{B}_k|\geq 2$ for every $k$, and the supports
    $\{\mathcal{B}_k\}_{k=1}^K$ are pairwise distinct.

    \item \textbf{Non-proportional warps.} The warps $\{h_k\}_{k=1}^K$ are pairwise non--proportional.

    \item \textbf{Distinct growth orders.}
    Any two non-proportional warp functions that occur in parameterizations
belonging to $\mathcal M$ have distinct asymptotic growth orders:
for any such $h,\widetilde h$,
\[
    \left|\frac{h(\varepsilon)}
    {\widetilde h(\varepsilon)}\right|
    \longrightarrow 0
    \qquad\text{or}\qquad
    \left|\frac{h(\varepsilon)}
    {\widetilde h(\varepsilon)}\right|
    \longrightarrow \infty
    \qquad\text{as }\varepsilon\to+\infty.
\]

    \item \textbf{Exponential--polynomial functions.}
    Every $g_k$ is an exponential polynomial,
    \[
        g_k(u)=\sum_i p_i(u)e^{\mu_i u},
    \]
    with $p_i\in\mathbb{C}[u],$ where $\mathbb{C}[u]=\{\text{polynomials in }u\text{ with coefficients in }\mathbb{C}\}$,
    and distinct $\mu_i\in\mathbb{R}$.
\end{enumerate}

The following is imposed on the parameterizations. 

\begin{enumerate}[leftmargin=3.2em,start=1]
\Bitem{LI}\label{as:A3} \emph{Linear independence.} The warps $\{h_k\}_{k=1}^K$ are linearly independent.
\Bitem{EI}\label{as:EI} \emph{Exponential independence.} The only relation
  \[
    \sum_{m} P_m\!\big(h_m(\eps)\big)\,e^{\mu_m h_m(\eps)}\equiv\text{const},
    \qquad P_m\in\C[u],\ \mu_m\in\R\setminus\{0\},
  \]
  (finite, the $(m,\mu_m)$ distinct) is the one with every $P_m\equiv0$.
\Bitem{S}\label{as:Sing} \emph{Complex singularity.} Each $g_k$ (and each $\tg_l$) satisfies: there
  is an integer $r_k\ge0$ and a discrete set $\mathcal{J}_k\subset\C$ (closed, no accumulation point in
  $\C$), with $\mathcal{J}_k\neq\varnothing$, such that $g_k^{(r_k)}$ extends to a holomorphic function
  on $\C\setminus \mathcal{J}_k$ of which no point of $\mathcal{J}_k$ is a removable singularity (similarly for
  $\tg_l$). Moreover, for every $\rho\neq0$, the difference $g_k-\tg_l(\cdot/\rho)$ either is a
  polynomial or itself satisfies this property, for some order and some nonempty discrete set.
\Bitem{N2}\label{as:T2} \emph{Sparse co--activation.} For each axis $j\in\{1,\dots,p\}$, let $S_j=\{k:\beta_{kj}\neq0\}$. 
Define an equivalence relation $\simeq_j$ on $S_j$ by 
$k\simeq_j \ell$ iff $h_\ell=\alpha h_k$ for some $\alpha\in\R\setminus\{0\}$. 
Then
\[
|S_j/{\simeq_j}| \le 2 \qquad\text{for every }j.
\]
\end{enumerate}

Lemma~\ref{res:growth} shows that, in the presence of \ref{as:A2}, Assumption
\ref{as:G} implies \ref{as:A3}, and that the implication is strict. Assumption
\ref{as:A3} is therefore not an additional requirement in
Theorem~\ref{res:growthid}, where \ref{as:G} is assumed; it is stated separately
because Propositions~\ref{res:assemble} and~\ref{res:pin} and
Theorem~\ref{res:reduction} do not assume \ref{as:G}. The remaining combinations
imply nothing further: Remark~\ref{res:fail} satisfies
\ref{as:A1}--\ref{as:A3} together with \ref{as:E} and is not rigid, and
Remark~\ref{rem:polyG-A6-fails} satisfies \ref{as:G} with polynomial profiles, so
neither \ref{as:E} nor \ref{as:G} can be dispensed with in
Lemma~\ref{lem:Gprime-rigid}.

In the following, the statements below concern an \emph{arbitrary} observationally equivalent pair in $\mathcal{M}$; two parameterizations
$\bigl((f_j)_{j=1}^p,(\beta_k,h_k,g_k)_{k=1}^K\bigr)$ and
$\bigl((\tf_j)_{j=1}^p,(\tb_l,\th_l,\tg_l)_{l=1}^{\tK}\bigr)$ inducing the same conditional law on $\Xtr$. 

\subsection{Proofs of Theorem \ref{THM:identify_main} and Theorem \ref{thm:extrapolation}}\label{sec:proofs}

\begin{proof}[Proof of Theorem~\ref{res:growthid} (Theorem~\ref{THM:identify_main})]
By Lemma~\ref{res:growth} (using \ref{as:A2}), \ref{as:G} implies \ref{as:A3}. With \ref{as:E}
(real--frequency exponential--polynomial profiles), Lemma~\ref{lem:Gprime-rigid} gives
rigidity. Its hypotheses and \ref{as:A3} then hold, so Theorem~\ref{res:reduction}
yields the matching.
\end{proof}

\begin{proof}[Proof of Theorem \ref{thm:extrapolation}]
For a population minimizer, Lemma \ref{lemma_engression_prop1} together with Corollary \ref{cor:ae2all}, and Remark \ref{rem:continuity-laws} gives $T(x,\varepsilon)\overset{d}{=}\tilde T(x,\varepsilon)$ for all
$x\in\mathcal{X}_{\text{tr}}$. Lemma \ref{initial_step_lemma} therefore yields $T(x,\varepsilon)=\tilde T(x,\varepsilon)$  
for every $x\in \Xtr$, every $\varepsilon\in \R$. By identifiability
(Definition \ref{DEF_model_is_identifiable}), after relabelling there are $r_k\neq0$
and constants $b_i,\gamma_k$ with $\sum_i b_i+\sum_k\gamma_k=0$ such that
\begin{align}
 f_j(x_j)=\tilde f_j(x_j)+b_j\ \text{for}\ x_j\in[a_j,A_j],\quad\text{and}\quad
 g_k(\bar x_k)=\tilde g_k(\tfrac{1}{r_k}\bar x_k)+\gamma_k,\ \ x\in\mathcal{X}_{\text{tr}},\ \varepsilon\in\mathbb{R},
\end{align}
with $\bar x_k=\beta_k^\top x+h_k(\varepsilon)$.

\emph{Assertion 2.} Fix $k$ and any $i\in\mathcal B_k$. On the training axis
$\bar x_k=\beta_{ki}t+h_k(\varepsilon)$, and since $h_k$ is unbounded above and below
as $\varepsilon$ ranges over $\mathbb{R}$ (A2), $\bar x_k$ attains every value in
$\mathbb{R}$. Hence $g_k(u)=\tilde g_k(u/r_k)+\gamma_k$ holds for all $u\in\mathbb{R}$,
so for every $x\in\mathbb{R}^p$, $\varepsilon\in\mathbb{R}$,
$g_k(\beta_k^\top x+h_k(\varepsilon))=\tilde g_k(\tilde\beta_k^\top x+\tilde h_k(\varepsilon))+\gamma_k$
(using $\beta_k=r_k\tilde\beta_k$, $h_k=r_k\tilde h_k$). Summing over $k$ gives
$G(x,\varepsilon)=\tilde G(x,\varepsilon)+\sum_k\gamma_k$ on $\mathbb{R}^p$.

\emph{Assertion 1.} For $x\in\mathcal{X}_{\text{test}}$ every $x_i\in[a_i,A_i]$, so the
additive identity applies alongside the interaction identity above, and
\[
 T(x,\varepsilon)-\tilde T(x,\varepsilon)
 =\sum_{j=1}^p\big(f_j(x_j)-\tilde f_j(x_j)\big)+\sum_{k=1}^K\big(g_k(\bar x_k)-\tilde g_k(\tfrac1{r_k}\bar x_k)\big)
 =\sum_{j=1}^p b_j+\sum_{k=1}^K\gamma_k=0 .
\]
As $\mathcal{X}_{\text{tr}}\subseteq\mathcal{X}_{\text{test}}$, this holds on
$\mathcal{X}_{\text{tr}}\cup\mathcal{X}_{\text{test}}$.
\end{proof}

\section{Synthetic Experiments and Details on Real Data Analyses}
\label{app:synthetic_generation}

We complement the real-data results with a controlled study in fixed dimension
$p = 40$. The construction isolates combinatorial extrapolation: training
covariates lie on the coordinate axes (one active component at a time), while
test covariates fill the interior of the input box. A method can therefore only
succeed by recombining the marginal effects it observes into the joint effects
it never sees. We consider two interaction regimes, \emph{sparse} and
\emph{dense}, each crossed with three response types (polynomial, log,
softplus), and compare \algname\ against ERM, Engression, and a NAM.

\subsection{Data-generating process}

\textit{Training covariates.} For each coordinate $i \in \{1,\dots,p\}$ we draw
$n_{\text{axis}}$ values uniformly from $[a, A]$ and embed each as a point
$v\,e_i$, with $v$ in the $i$-th slot and zeros elsewhere. This yields
$p\cdot n_{\text{axis}}$ training points, each with a single nonzero coordinate.

\textit{Test covariates.} We sample $n_{\text{test}}$ points uniformly from
$[a, A']^{p}$, so test points have many simultaneously active coordinates and
lie almost entirely outside the training support.

\textit{Noise.} Before evaluating the response we perturb every coordinate by
independent Gaussian noise, $\tilde x_i = x_i + \eta_i$ with
$\eta_i \sim \mathcal{N}(0, \sigma^2)$, applied identically to training and test
inputs. A smaller $\sigma$ widens the gap between the axis-aligned training
inputs and the interior test inputs; $\sigma \to 0$ is the hardest regime.

\subsection{Response functions}

Each response adds a marginal part over the $p$ coordinates to a sum over
interaction groups $S_1,\dots,S_G \subseteq \{1,\dots,p\}$, with $\alpha \ge 0$
controlling interaction strength:
\begin{align}
\text{Polynomial:}\quad
  Y &= \sum_{i=1}^{p}\tilde x_i^{3}
     + \alpha \sum_{g=1}^{G}\;\prod_{i\in S_g}\tilde x_i + 1, \\[2pt]
\text{Log:}\quad
  Y &= \sum_{i=1}^{p}\log\!\bigl(1+|\tilde x_i|\bigr)
     + \alpha \sum_{g=1}^{G}\log\!\Bigl(1+\bigl|\textstyle\sum_{i\in S_g}\tilde x_i\bigr|\Bigr), \\[2pt]
\text{Softplus:}\quad
  Y &= \sum_{i=1}^{p}\log\!\bigl(1+e^{\tilde x_i}\bigr)
     + \alpha \sum_{g=1}^{G}\log\!\Bigl(1+\exp\!\bigl(\textstyle\sum_{i\in S_g}\tilde x_i\bigr)\Bigr).
\end{align}
For pairs ($|S_g|=2$) the polynomial interaction is a product
$\tilde x_i\tilde x_j$ and the log and softplus interactions are functions of
$\tilde x_i+\tilde x_j$. 

\subsection{Interaction regimes}

Both regimes use $G = \lfloor \log p \rfloor$ groups, so $G = 3$ for $p=40$, with
a fixed random assignment of coordinates to groups held constant
across runs.

\textit{Sparse.} Each group is a single coordinate pair ($|S_g|=2$), so only $G$
of the $\binom{p}{2}=780$ possible pairwise interactions are active. We scale
$\alpha$ up by a factor of $10$ to keep the interaction signal comparable to the
additive part.

\textit{Dense.} Each group contains $\lfloor p/2 \rfloor = 20$ coordinates and
contributes one high-order joint term: a degree-$20$ product for the polynomial
case, a function of the coordinate sum otherwise. Because these terms can be
very large, we scale $\alpha$ down by a factor of $10$.

\subsection{Settings}

We fix $p=40$ with $n_{\text{axis}}=100$ ($4000$ training points) and
$n_{\text{test}}=5000$. The training range is $[a,A]=[0.1,2.0]$ and the test
range $[a,A']=[0.1,2.1]$. The base interaction strength is $\alpha=0.2$ (before
the regime-specific scaling above) and the noise level is $\sigma=0.05$. Group
assignments and all model initializations are seeded through
\texttt{reproduce(seed)}, which fixes the Python, NumPy and PyTorch generators
and sets deterministic cuDNN behaviour. Results are averaged over $10$ seeds.

\subsection{Models and hyperparameters}\label{sec:models_hyperparameters}

\textit{\algname.} Each of the $p$ marginal functions is a $3$--$4$ layer MLP with
$100$ units per layer and LeakyReLU activations. The interaction part uses
$K=1$ interaction functions, each a $2$-layer, $100$-unit StoNet block from
Engression \cite{shen2024engression}.

\textit{Baselines.} ERM is a feedforward network with $100$ units per layer; its
depth is chosen to roughly match \algname's total depth,
$\lceil p\,d_{\text{marginal}} + K\,d_{\text{engressor}}\rceil$, which for
$p=40$ exceeds $200$ layers and is therefore capped at $20$--$25$ for
tractability. Engression (StoNet) uses the original implementation with an
ensemble of $2$-layer, $100$-unit engressors, under the same cap. NAM \cite{agarwal2021neural} is
included as an additive baseline that by construction cannot represent any of
the interaction terms above; it isolates how much of each method's accuracy
comes from the additive component alone.

\textit{Training.} All models use Adam with a constant learning rate of
$10^{-4}$, train for $300$ epochs with batch size $10^4$, and run on a single
NVIDIA T4 GPU.

\subsection{Results}

\begin{table}[ht]
\centering
\caption{%
  Test MSE (mean $\pm$ std, 10 seeds) for DExtrI, NAM, ERM, and Engression on the
  high-dimensional synthetic benchmark ($p = 40$).
  \emph{Sparse}: $\lfloor\log p\rfloor$ pairwise interaction groups;
  \emph{Dense}: $\lfloor\log p\rfloor$ groups of size $\lfloor p/2 \rfloor$.
}
\label{tab:synthetic}
\setlength{\tabcolsep}{7pt}
\renewcommand{\arraystretch}{1.2}
\begin{tabular}{llrrrr}
\toprule
\textbf{Function} & \textbf{Regime}
  & \textbf{DExtrI} & \textbf{NAM}
  & \textbf{ERM} & \textbf{Engression} \\
\midrule
\multirow{2}{*}{Log}
  & Sparse & $\mathbf{41.0} \pm 2.45$ & $47.6 \pm 2.07$ & $7180 \pm 15.7$ & $6012 \pm 547$ \\
  & Dense  & $\mathbf{0.366} \pm 0.037$ & $0.409 \pm 0.033$ & $676.9 \pm 0.30$ & $534.8 \pm 107$\\
\addlinespace[3pt]
\multirow{2}{*}{Polynomial}
  & Sparse & $\mathbf{6445} \pm 130$ & $6809 \pm 32.1$ & $30491 \pm 8$ & $33905 \pm 38849$ \\
  & Dense  & $\mathbf{7318} \pm 24.6$ & $7359 \pm 2.98$ & $19930 \pm 5$ & $42656 \pm 47920$ \\
\addlinespace[3pt]
\multirow{2}{*}{Softplus}
  & Sparse & $\mathbf{456} \pm 50.5$ & $592 \pm 52.9$ & $16364 \pm 44$ & $2481 \pm 889$\\
  & Dense  & $\mathbf{55.1} \pm 3.90$ & $71.0 \pm 3.76$ & $1683 \pm 1$ & $629 \pm 567$\\
\bottomrule
\end{tabular}
\end{table}

\algname\ attains the lowest test MSE in every cell of
Table~\ref{tab:synthetic}, frequently by one to several orders of magnitude. The
gap is largest in the sparse regime, where a handful of pairwise interactions
carry the entire off-axis signal and the baselines have no mechanism to recover
them from marginal data. The advantage holds across all three response types and
persists in the dense regime, where the joint terms involve half the
coordinates. ERM and Engression not only sit well above \algname\ in mean error
but also show large run-to-run variance, a sign of unstable extrapolation rather
than a near-miss. NAM, as a strictly additive model, fits the marginal part but
cannot express the interaction terms, so it trails \algname\ wherever those terms
contribute to the response.

\subsection{Sensitivity to $K$}

We varied the number of interaction engressors $K$ on this suite. The test-loss
curves show minima at intermediate $K$, but these minima do not
appear in the energy-score or training-MSE traces, indicating a test-specific
effect rather than a quantity that can be tuned on the training axes. Overall the
gains from varying $K$ are small and inconsistent as visible in Figure \ref{fig:K_sweep}.

\begin{figure}
    \centering
    \includegraphics[width=\linewidth, height=0.90\textheight]{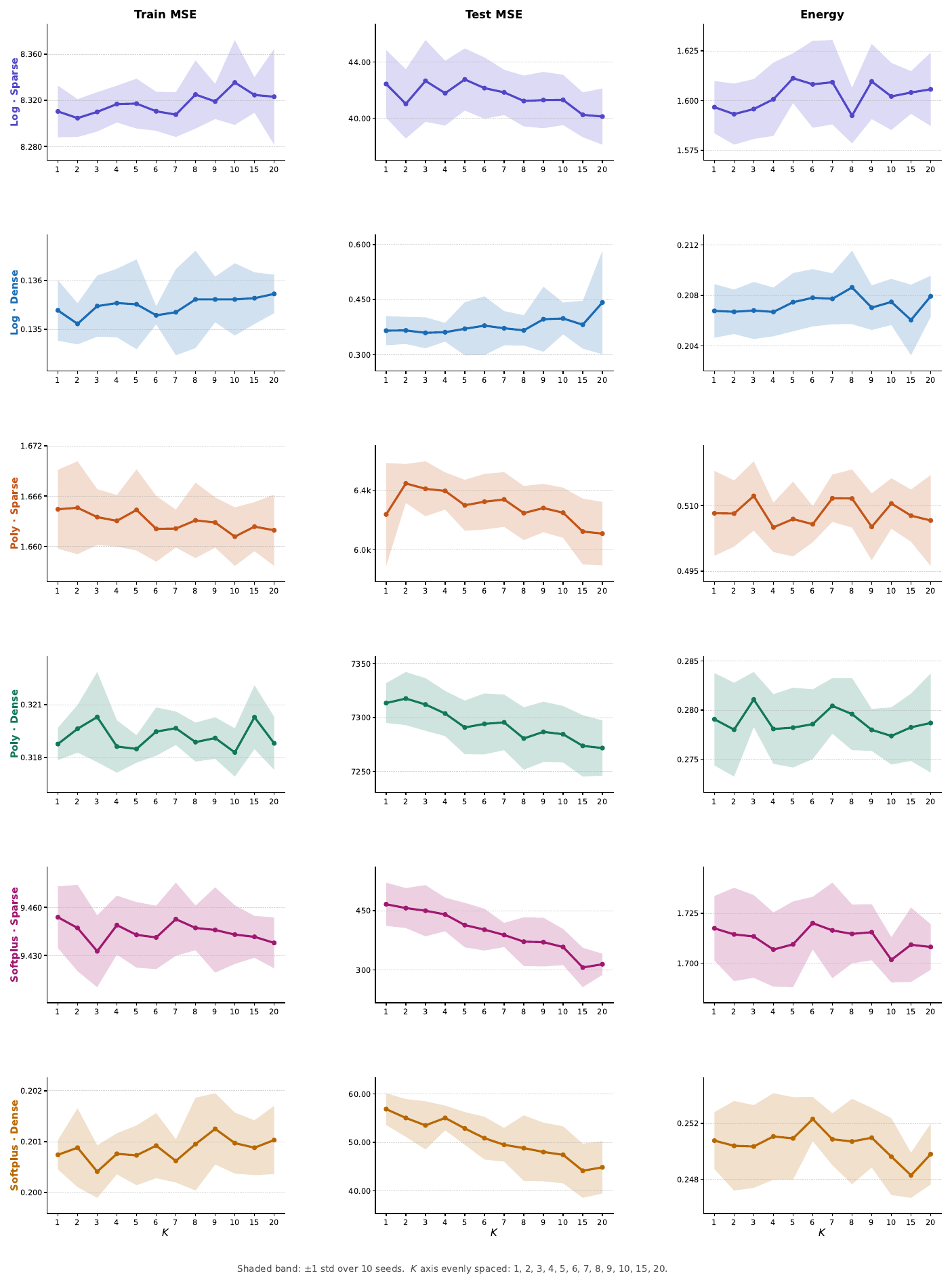}
    \caption{Test MSE performance of our method in dependence on $K$ for different configurations on the synthetic data generated as for Table \ref{tab:synthetic}.}
    \label{fig:K_sweep}
\end{figure}

\subsection{Hyperparameters}
In synthetic experiments (Table~\ref{tab:synthetic}), the \algname\ model was constructed using $p$ marginal MLPs, each consisting of three or four layers with 100 hidden units per layer and LeakyReLU activations. We set the interaction parameter to $K=1$ in all experiments. Each interaction function MLP was implemented using the 2-layer or 4-layer, 100-unit StoNet architecture from Engression \cite{shen2024engression}, available at \url{https://github.com/xwshen51/engression}. 

The ERM baseline uses a standard feedforward architecture with 100 units per layer. The depth is set to approximately match the total depth of \algname's networks: for \algname\ with $p$ marginal MLPs of depth $d_{\text{marginal}}$ and $K$ interaction functions of depth $d_{\text{interaction}}$, the ERM depth is $\lceil p \cdot d_{\text{marginal}} + K \cdot d_{\text{interaction}} \rceil$. When this sum exceeds 25 layers (occurring when $p > 30$), we cap both ERM and StoNet at 20--25 layers to maintain computational tractability. All networks were trained with the Adam optimizer using a constant learning rate of $10^{-4}$. All experiments were performed on a single NVIDIA T4 GPU for 300 epochs for synthetic and 1000--2000 epochs for real data.

\subsection{O'Neil Dataset and preprocessing details}\label{subseq:oneil_desc}

\paragraph{Raw data.}
The dataset is sourced from the preprocessed version of the Merck OncoPolyPharmacology Screen provided by Twarog et al. \cite{twarog2021data}, which supplies single-agent measurements in \texttt{AllSings.txt} ($11{,}856$ rows, $6$ viability replicates per triplet) and combination measurements in \texttt{AllCombs.txt} ($358{,}560$ rows, $4$ replicates). 
Viability is measured as fractional cell growth relative to a vehicle control and is clipped to $[0, 1.5]$ to remove assay outliers. Replicates are either averaged to obtain a single estimate of the underlying response $\mathbb{E}_\varepsilon[T(x_i, \varepsilon)]$, or retained individually as independent noise samples $\{T(x_i, \varepsilon_{ij})\}_{j=1}^{R}$ depending on the experimental setting, where $R \in \{4, 6\}$ for combination and single-agent measurements respectively.

\paragraph{Feature encoding.}
Concentrations are log-transformed as $\tilde{c} = \log_{10}(c) + 1$ to linearise the 
dose-response relationship and avoid a degeneracy at $c = 1\,\mu\text{M}$ where 
$\log_{10}(1) = 0$ would render the feature indistinguishable from a drug-absent slot. 
Each observation is encoded as a vector $x \in \mathbb{R}^{38}$, where entry $i$ 
equals $\tilde{c}_i$ if drug $i$ is present and $0$ otherwise. This encoding is shared 
across single-agent and combination observations, yielding identical feature 
dimensionality: single-agent vectors have exactly one nonzero entry, and combination 
vectors have exactly two. Cell line identity is optionally appended as a one-hot vector 
$e_k \in \{0,1\}^{39}$, giving $x \in \mathbb{R}^{77}$ in that setting.

\paragraph{Normalization and regression setup.}
All features are standardised by the per-feature standard deviation computed on the training set. The response $y$ is standardised by the training standard deviation.
Denoting the training set $\mathcal{D}_{\text{train}} = \{(x_i, y_i)\}_{i=1}^{N}$ with $N = 11{,}856$ (or $N = 71{,}136$ when replicates are retained) and the test set $\mathcal{D}_{\text{test}} = \{(x_j, y_j)\}_{j=1}^{M}$ with $M = 358{,}560$ (or $M =1{,}434{,}240$ with replicates), the regression task is to learn $f : \mathbb{R}^{38} \to \mathbb{R}$ from $\mathcal{D}_{\text{train}}$ and evaluate generalization on $\mathcal{D}_{\text{test}}$. Crucially, no combination observations appear in training, so test inputs lie strictly outside the support of the training distribution in the sense that $\|x\|_0 = 2$ for all test points while $\|x\|_0 = 1$ for all training points.

\subsection{HPOBench tabular MLP benchmark: dataset and preprocessing details}
\label{app:hpobench}

\paragraph{Benchmark.}
The \textsc{HPOBench} tabular \textsc{mlp} family \citep{eggensperger2021hpobench} provides exhaustively pre-evaluated performance tables for a two-hidden-layer multilayer perceptron trained on eight \textsc{OpenML} binary classification tasks \citep{vanschoren2013openml}.
For each task, the model is trained on the fixed \textsc{OpenML} training split, with 33\% of that split held out as a validation set (stratified, fixed seed), and performance is recorded at full training budget (maximum number of epochs). Each entry in the table is the average of five independent training runs with different random seeds. The \textsc{OpenML} task identifiers are listed in Table~\ref{tab:hpobench_tasks}.

\paragraph{Hyperparameter space.}
The five hyperparameters and their discrete grids are given in Table~\ref{tab:hpobench_hps}. All grids are log-uniformly spaced except \texttt{depth}, which takes integer values in $\{1, 2, 3\}$. Four hyperparameters (\texttt{alpha}, \texttt{batch\_size}, \texttt{learning\_rate\_init}, \texttt{width}) span several orders of magnitude and are therefore log$_{10}$-encoded prior to any further preprocessing; \texttt{depth} is left on its original integer scale.

\paragraph{Axis-aligned split.}
A baseline configuration is fixed at the componentwise grid midpoint (index $\lfloor n_j / 2 \rfloor$ for a grid of size $n_j$), giving $\boldsymbol{\lambda}^0 = (\alpha^0, b^0, d^0, \eta^0, w^0)$. The training set is axis-aligned: for each hyperparameter $j \in \{1,\ldots,5\}$ in turn, all $n_j$ grid values are evaluated while the remaining four hyperparameters are fixed at $\boldsymbol{\lambda}^0$, producing $\sum_{j=1}^{5} n_j = 43$ training observations. The test set consists of $5{,}000$ configurations drawn uniformly at random (without replacement) from the full combinatorial grid of $30{,}000$ joint configurations, in which all five hyperparameters vary simultaneously.

\paragraph{Preprocessing.}
Let $\boldsymbol{v}^0 \in \mathbb{R}^5$ denote the encoded baseline vector. Each input $\boldsymbol{x}$ is first centred by subtracting $\boldsymbol{v}^0$, giving $\tilde{\boldsymbol{x}} = \boldsymbol{x} - \boldsymbol{v}^0$, so that the baseline maps to the origin and the axis-aligned training vectors each have exactly one non-zero coordinate. The centred inputs are then standardised column-wise by the per-feature standard deviation $\hat{\sigma}_j$ computed on the training set alone, using a zero-mean scaler (i.e.\ the mean is not resubtracted, preserving the centering around the baseline):
\[
  x_j^{\text{scaled}} \;=\; \frac{\tilde{x}_j}{\hat{\sigma}_j}, \qquad
  j = 1,\ldots,5.
\]
The response $y$ (validation error rate) is standardised by the training standard deviation $\hat{\sigma}_y$, also computed on the 39 training observations only.
The same scaler is applied to the test set. All test \textsc{mse} values reported in Table~\ref{tab:hpobench} are on this normalised scale.

\begin{table}[h]
\centering
\caption{OpenML task identifiers and dataset statistics for the eight
         \textsc{HPOBench} \textsc{mlp} benchmark tasks used in
         Section~\ref{sec:hpobench}.}
\label{tab:hpobench_tasks}
\begin{tabular}{lrrr}
\toprule
\textbf{Dataset} & \textbf{OpenML task ID} & \textbf{Observations} & \textbf{Features} \\
\midrule
australian         & 146818 &   690 & 14 \\
car                & 146821 & 1728 & 21 \\
phoneme            & 9952   & 5404 &  5 \\
vehicle            & 53     &  846 & 18 \\
kc1                & 3917   & 2109 & 21 \\
segment            & 146822 & 2310 & 19 \\
blood\_transfusion & 10101  &  748 &  4 \\
credit\_g          & 31     & 1000 & 20 \\
\bottomrule
\end{tabular}
\end{table}

\begin{table}[h]
\centering
\caption{Hyperparameter search space for the \textsc{HPOBench} tabular
         \textsc{mlp} benchmark. All grids are log-uniformly spaced except
         \texttt{depth}. Hyperparameters marked $\dagger$ are
         log$_{10}$-encoded before preprocessing.}
\label{tab:hpobench_hps}
\begin{tabular}{llrcc}
\toprule
\textbf{Hyperparameter} & \textbf{Type} & \textbf{Grid pts} & \textbf{Range} & \textbf{Encoded} \\
\midrule
\texttt{alpha}              & float   & 10 & $[10^{-8},\; 1]$         & $\log_{10}$ $^\dagger$ \\
\texttt{batch\_size}        & integer & 10 & $[4,\; 256]$             & $\log_{10}$ $^\dagger$ \\
\texttt{depth}              & integer &  3 & $\{1, 2, 3\}$            & identity \\
\texttt{learning\_rate\_init} & float & 10 & $[10^{-5},\; 1]$         & $\log_{10}$ $^\dagger$ \\
\texttt{width}              & integer & 10 & $[16,\; 1024]$           & $\log_{10}$ $^\dagger$ \\
\midrule
\multicolumn{2}{l}{Full grid size} & $30{,}000$ & & \\
\multicolumn{2}{l}{Axis-aligned training points} & $43$ & & \\
\bottomrule
\end{tabular}
\end{table}

\section{Extension for Control Functions}

In many applications, only a subset of features exhibits interactions of interest, while the remaining variables act as control features. 
In such cases, the framework can be extended to model interactions among the relevant features without disregarding the influence of the controls. 
For example, in addition to measurements of single-drug effects, one may observe background variables such as cell line, measurement device, or patient characteristics, all of which may affect the response.

Formally, let $X \in \mathbb{R}^p$ denote the features of interest and $Z \in \mathbb{R}^q$ denote control variables. 
We assume that interactions occur only among the components of $X$, while $Z$ enters the model through a general (non-interacting) effect. 
Moreover, we assume that the support $\mathcal{Z}$ of $Z$ is fully observed.

The model then takes the form
\begin{align*}
    Y = \sum_{j=1}^p f_j(X_j) + \sum_{k=1}^K g_k(\beta_k^\top X + \eta_k) + \iota(Z),
\end{align*}
where $\iota$ is an arbitrary differentiable function, and $f_j, g_k$ are defined as in the previous sections.

The observed covariate support is given by the product of the axis-aligned training set for $X$ and the full support of $Z$, i.e.,
\[
\mathcal{X}_{\mathrm{tr}} \times \mathcal{Z}.
\]
Accordingly, the model class $\mathcal{M}$ is extended to include functions of the form above, incorporating the additional component $\iota$.
Identifiability and extrapolation guarantees over $\mathcal{X}_{\mathrm{test}} \times \mathcal{Z}$ remain unchanged under this extension. 
In particular, no extrapolation is required along the $Z$-coordinates, as their support is fully observed. 

\section{Additional Results}

\subsection{Auxiliary results}
\label{sec:results}

\begin{proposition}\label{res:assemble}
Under \ref{as:A1}, \ref{as:A2}, \ref{as:Supp} and \ref{as:C} if any pair of parametrizations, $((f_j)_{j=1}^p, (\beta_k, h_k, g_k)_{k=1}^K)$, $((\tilde{f}_j)_{j=1}^p, (\tilde{\beta}_l, \tilde{h}_l, \tilde{g}_l)_{l=1}^{\tilde{K}})$, in $\mathcal{M}$ is rigid, then on each axis $j$ there is a
bijection $\sigma_j:S_j\to\tS_j$ with
$\th_{\sigma_j(k)}=\alpha_{kj}h_k$, $\tb_{\sigma_j(k),j}=\alpha_{kj}\beta_{kj}$, and
$g_k-\tg_{\sigma_j(k)}(\alpha_{kj}\,\cdot)\in\cP$ for some $\alpha_{kj}\neq0$; these are the
restrictions of one global bijection $\pi$, and $K=\tK$ with
\[
  \Btr_k=\tB_{\pi(k)},\qquad \beta_k=\rho_k\tb_{\pi(k)},\qquad h_k=\rho_k\th_{\pi(k)},
  \qquad g_k-\tg_{\pi(k)}(\cdot/\rho_k)\in\cP\quad(\rho_k\neq0).
\]
\end{proposition}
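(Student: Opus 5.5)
Fix an axis $j$ and group the terms of \eqref{eq:axis} into foliation classes as in the construction leading to \eqref{eq:collapse}. By Corollary~\ref{cor:within} (using \ref{as:C} on both sides, since both parameterizations lie in $\mathcal M$), each class contains at most one true term and at most one tilde term. The plan is to show first that no singleton class can occur; that gives the axiswise bijection $\sigma_j$. The next step is to glue the $\sigma_j$ into a global $\pi$.

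\emph{Step 1: no unmatched terms.} Suppose a class is a singleton containing only a true term $k\in S_j$. Its profile is $\Th_m=g_k$. Rigidity (Definition~\ref{def:rigid}) forces $g_k\in\cP$, which contradicts \ref{as:A1}. The same argument applies to a singleton tilde term: its profile is $-\tg_l$, and $\tg_l$ is non-polynomial by \ref{as:A1} applied to the second parameterization. Hence every class is a matched pair $\{k,l\}$. This defines a bijection $\sigma_j:S_j\to\tS_j$ with $\sigma_j(k)=l$. Lemma~\ref{lem:cofoliar}, with the normalization $h_k(0)=\th_l(0)=0$ from \ref{as:A2}, gives $\th_l=\alpha_{kj}h_k$ and $\tb_{lj}=\alpha_{kj}\beta_{kj}$. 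The pair's profile is $g_k-\tg_l(\alpha_{kj}\,\cdot)$, and rigidity makes it polynomial.

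\emph{Step 2: gluing.} Suppose $k\in S_j\cap S_{j'}$ with $\sigma_j(k)=l$ and $\sigma_{j'}(k)=l'$. Then $\th_l=\alpha_{kj}h_k$ and $\th_{l'}=\alpha_{kj'}h_k$, so $\th_l$ and $\th_{l'}$ are proportional. By \ref{as:C} for the tilde warps, $l=l'$. Comparing $\th_l=\alpha_{kj}h_k=\alpha_{kj'}h_k$, with $h_k\not\equiv0$, gives $\alpha_{kj}=\alpha_{kj'}=:1/\rho_k$. The same reasoning applied to $\sigma_j^{-1}$ shows that each $l$ has a single partner across axes.

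Every $k$ has nonempty support $\Btr_k$ by \ref{as:Supp}, so $\pi(k):=\sigma_j(k)$ for any $j\in\Btr_k$ is well defined on $[K]$. It is injective: if $\pi(k)=\pi(k')$, then $h_k$ and $h_{k'}$ are both proportional to the same tilde warp, and \ref{as:C} gives $k=k'$. It is surjective: every $l$ has nonempty $\tB_l$ by \ref{as:Supp} for the tilde parameterization, so $l$ lies in some $\tS_j$ and hence in the image of $\sigma_j$. Therefore $K=\tK$.

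\emph{Step 3: supports and directions.} For $j\in\Btr_k$ we have $\tb_{\pi(k),j}=\beta_{kj}/\rho_k\neq0$. Conversely, if $j\in\tB_{\pi(k)}$, then $\pi(k)=\sigma_j(k')$ for some $k'\in S_j$, and injectivity of $\pi$ gives $k'=k$, so $j\in\Btr_k$. Thus $\Btr_k=\tB_{\pi(k)}$ and $\beta_k=\rho_k\tb_{\pi(k)}$ coordinatewise, with both vectors zero off the common support. Also $h_k=\rho_k\th_{\pi(k)}$ and $g_k-\tg_{\pi(k)}(\cdot/\rho_k)\in\cP$.

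\emph{Main obstacle.} The delicate point is that \ref{as:C} must be applied to the tilde warps as well as the true ones. This is justified because Assumptions~\ref{assumptions} are imposed on every parameterization in $\mathcal M$. One also has to keep track of the fact that Lemma~\ref{lem:cofoliar} is applied on the strip $U_j$, which has nonempty interior since $[a_j,A_j]$ is a nondegenerate interval.
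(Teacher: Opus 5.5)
Your proposal is correct and follows the paper's proof essentially step by step. Both arguments form per-axis cofoliarity classes with at most one term from each side, exclude singleton classes via rigidity and \ref{as:A1}, extract the per-axis scalars from Lemma~\ref{lem:cofoliar}, and glue across axes by applying \ref{as:C} to the tilde warps. Your explicit use of \ref{as:Supp} to make $\pi$ defined on all of $[K]$, and your injectivity-based proof of $\tB_{\pi(k)}\subseteq\Btr_k$, are if anything slightly tighter than the paper's appeal to a symmetric argument with $\pi^{-1}$.
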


\begin{proposition}\label{res:pin}
Under \ref{as:A3}, $\cR=\cP_0$. Consequently, under \ref{as:A1},\ref{as:A2}, \ref{as:Supp} and \ref{as:C}, if a pair of parametrizations $((f_j)_{j=1}^p, (\beta_k, h_k, g_k)_{k=1}^K)$, $((\tilde{f}_j)_{j=1}^p, (\tilde{\beta}_l, \tilde{h}_l, \tilde{g}_l)_{l=1}^{\tilde{K}})$, in $\mathcal{M}$, is rigid, then in addition to
Proposition~\ref{res:assemble} the functions satisfy $g_k=\tg_{\pi(k)}(\cdot/\rho_k)+c_k$ for
constants $c_k$, and the model is identifiable in the sense of Definition \ref{DEF_model_is_identifiable}.
\end{proposition}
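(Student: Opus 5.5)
The plan is to prove $\cR=\cP_0$ by a leading-degree argument along the axes, and then to feed the polynomial discrepancies from Proposition~\ref{res:assemble} into \eqref{eq:redist}.

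\emph{Step 1: $\cR=\cP_0$.} Let $(q_k)_{k=1}^K\in\cR$ with witnesses $(r_j)$, and suppose, for contradiction, that $d:=\max_k\deg q_k\ge1$. Write $a_k$ for the coefficient of $u^d$ in $q_k$, which is zero when $\deg q_k<d$. Restricting \eqref{eq:redist} to the axis $x=te_j$ gives
\[
\sum_{k\in S_j}q_k\bigl(\beta_{kj}t+h_k(\eps)\bigr)+\sum_{k\notin S_j}q_k\bigl(h_k(\eps)\bigr)=r_j(t)+\textstyle\sum_{i\neq j}r_i(0),\qquad t\in(a_j,A_j),\ \eps\in\R .
\]
The argument splits according to $d$.
\begin{itemize}
\item If $d\ge2$, differentiate this identity $d-1$ times in $t$. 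This removes the terms with $k\notin S_j$. Since $q_k^{(d-1)}(u)=d!\,a_ku+b_k$, the result is
\[
\sum_{k\in S_j}\beta_{kj}^{d-1}d!\,a_k\bigl(\beta_{kj}t+h_k(\eps)\bigr)+\mathrm{const}=r_j^{(d-1)}(t).
\]
Hence $\eps\mapsto\sum_{k\in S_j}\beta_{kj}^{d-1}a_kh_k(\eps)$ is constant. Because $h_k(0)=0$, it is identically zero. Linear independence \ref{as:A3} then gives $\beta_{kj}^{d-1}a_k=0$, so $a_k=0$ for all $k\in S_j$.
\item If $d=1$, all $q_k$ are affine. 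The undifferentiated identity, with $t$ fixed, shows that $\sum_k a_kh_k(\eps)$ is constant in $\eps$. As before this forces $a_k=0$ for all $k$.
\end{itemize}
Each $k$ lies in some $S_j$, since $|\Btr_k|\ge2$ by \ref{as:Supp}. Therefore all $a_k=0$, which contradicts the choice of $d$. Hence every $q_k$ is constant.

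\emph{Step 2: consequences.} By Proposition~\ref{res:assemble}, rigidity gives $K=\tK$, a bijection $\pi$, and scalars $\rho_k\neq0$ with $\beta_k=\rho_k\tb_{\pi(k)}$, $h_k=\rho_k\th_{\pi(k)}$, and $d_k:=g_k-\tg_{\pi(k)}(\cdot/\rho_k)\in\cP$. Relabel the tilde parameterization so that $\pi=\mathrm{id}$. Then $\tg_k(\tb_k^\top x+\th_k(\eps))=\tg_k\bigl((\beta_k^\top x+h_k(\eps))/\rho_k\bigr)$. Substituting this into \eqref{eq:full} yields
\[
\sum_k d_k\bigl(\beta_k^\top x+h_k(\eps)\bigr)=\sum_j\bigl(\tf_j-f_j\bigr)(x_j)\qquad\text{on }\Xtr\times\R .
\]

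\emph{Step 3: the $r_j$ are polynomials.} To place $(d_k)$ in $\cR$ we still need $\tf_j-f_j\in\cP$. On axis $j$, with $\eps$ fixed, the left-hand side is a polynomial in $t$. So $\tf_j-f_j$ coincides with a polynomial on $(a_j,A_j)$, and by analyticity \ref{as:AR} it is a polynomial. Step 1 then gives $d_k\equiv c_k$, constants.

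\emph{Step 4: identifiability.} The displayed identity now reads $\sum_kc_k=\sum_j(\tf_j-f_j)(x_j)$ on $\Xtr$. Restricting to an axis shows each $\tf_j-f_j$ is constant, say $-b_j$. Evaluating at $x=0$ gives $\sum_kc_k+\sum_jb_j=0$. Setting $r_k=\rho_k$ and $\gamma_k=c_k$ yields exactly the conclusion of Definition~\ref{DEF_model_is_identifiable}.

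The main obstacle is Step 1. The relation at $x=0$ alone involves polynomials in the warps, and \ref{as:A3} controls only linear relations among the $h_k$. The device of differentiating $d-1$ times along an axis is what reduces the problem to a linear relation. It must also be checked that every component is active on some axis, which is exactly what \ref{as:Supp} provides.
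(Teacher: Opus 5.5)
Your proposal is correct and follows essentially the paper's route. A top-degree comparison along each axis turns $(q_k)\in\cR$ into a linear relation among the warps, and the affine case is settled by isolating the $\eps$-dependent part. The only difference is that you differentiate $d-1$ times in $t$ and invoke \ref{as:A3} directly, whereas the paper applies $\partial_t\partial_\eps$ and uses independence of the $h_k'$ from Corollary~\ref{cor:within}; your Steps 2--4 also write out the compensation argument that the paper only sketches in the remark after its proof.

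The appeal to \ref{as:AR} in Step 3 is unnecessary, and \ref{as:AR} is not among the proposition's hypotheses. The defining identity of $\cR$ only needs to hold on $\Xtr$, so you can take as $r_j$ the polynomial that agrees with $\tf_j-f_j$ on $[a_j,A_j]$, using continuity at the endpoints.
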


\begin{theorem}\label{res:reduction}
Assume \ref{as:A1}, \ref{as:A2}, \ref{as:Supp}, \ref{as:C}, and \ref{as:A3}.
\begin{enumerate}[label=\textup{(\roman*)}]
  \item If a pair of parametrizations is rigid, then it is identifiable in the sense of
    Definition~\ref{DEF_model_is_identifiable}.
  \item Rigidity is also necessary: if the pair is not rigid, there exist two
    parameterizations satisfying \ref{as:A1}--\ref{as:C} that induce the same
    conditional law on $\Xtr$ but are not related by the indeterminacies of
    Proposition~\ref{res:assemble}; in particular their interaction supports ${\cal B}_k$ may differ (the following Remark).
\end{enumerate}
\end{theorem}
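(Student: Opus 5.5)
Part (i) follows by chaining the two propositions just stated. Given a rigid pair under \ref{as:A1}, \ref{as:A2}, \ref{as:Supp}, \ref{as:C}, I first apply Proposition~\ref{res:assemble}. It yields a global bijection $\pi$ with $K=\tK$, supports $\Btr_k=\tB_{\pi(k)}$, and scalars $\rho_k\neq0$ such that $\beta_k=\rho_k\tb_{\pi(k)}$ and $h_k=\rho_k\th_{\pi(k)}$. It also gives polynomial residuals $q_k:=g_k-\tg_{\pi(k)}(\cdot/\rho_k)\in\cP$.

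Next I substitute these relations into the pointwise identity \eqref{eq:full}, which holds by Lemma~\ref{initial_step_lemma}. Since $\tb_{\pi(k)}^\top x+\th_{\pi(k)}(\eps)=\rho_k^{-1}(\beta_k^\top x+h_k(\eps))$, the interaction parts cancel except for $\sum_k q_k(\beta_k^\top x+h_k(\eps))$. This remainder must equal $\sum_j(\tf_j-f_j)(x_j)$ on $\Xtr\times\R$.

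On a single axis, the $\eps$-dependence of the left side must vanish. Taking $t=0$ shows $\sum_k q_k(h_k(\eps))$ is constant in $\eps$. Expanding the $q_k$ and using \ref{as:A3} (as in Proposition~\ref{res:pin}, which gives $\cR=\cP_0$) forces each $q_k$ to be constant, say $q_k\equiv c_k$. Hence $\tf_j-f_j$ is constant on $[a_j,A_j]$, and the constants can be arranged so that $\sum_j b_j+\sum_k\gamma_k=0$ by evaluating at $x=0$. This is precisely Definition~\ref{DEF_model_is_identifiable}, so (i) is essentially Proposition~\ref{res:pin}, and I would cite it directly.

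For (ii) I would argue by construction. Suppose the pair is not rigid. Then on some axis $j$, \eqref{eq:axis} groups into a collapse \eqref{eq:collapse} in which some $\Th_m$ is non-polynomial. I then build two explicit parameterizations that differ beyond the indeterminacies of Proposition~\ref{res:assemble}.

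\emph{Case of an unmatched class.} Suppose the non-polynomial class is unmatched, say a true term $g_k$ with no tilde partner on axis $j$. Then $g_k$ is expressed through other ridges plus additive terms, and I would use this to define an alternative model. The alternative drops coordinate $j$ from $\beta_k$, or reassigns that ridge, and absorbs $A_j$ into $\tf_j$ and $B_j$ into the interaction terms. This yields a different support $\Btr_k$ but the same law on $\Xtr$. The reason is that on other axes the term is unchanged, while on axis $j$ the collapse identity certifies equality.

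\emph{Case of a matched class.} If the non-polynomial profile comes from a matched class, then $g_k-\tg_l(\alpha\cdot)$ is non-polynomial. The tilde model then already differs from the true one by a non-polynomial profile, so it violates the indeterminacies directly.

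The main obstacle is making the alternative parameterization in (ii) globally consistent across all axes. The modified ridge must still satisfy \ref{as:A1}--\ref{as:C}: $|\Btr|\ge2$, distinct supports, and non-proportional warps. It must also agree with \eqref{eq:full} on every other axis. I would handle this by localizing the modification to the single axis $j$ where the collapse occurs, leaving all other coordinates of $\beta_k$ intact. The final step is to check \ref{as:Supp} by choosing which term to perturb. I would cite the subsequent Remark for an explicit instance of this construction.
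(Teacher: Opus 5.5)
\paragraph{Part (i).} This is fine and is exactly the paper's route: Proposition~\ref{res:assemble}, then Proposition~\ref{res:pin}. One caveat: your inline shortcut of setting $t=0$ and invoking linear independence of the warps would not suffice on its own. For example, $h_1=\eps$, $h_2=\eps^3$, $q_1(u)=u^3$, $q_2(u)=-u$ give $q_1(h_1)+q_2(h_2)\equiv0$, although the warps are linearly independent. Proposition~\ref{res:pin} instead uses the $t$-variation on each axis, via $\partial_t\partial_\eps$ and the top-degree coefficient in $t$. Since you cite that proposition, this does no harm.

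\paragraph{Part (ii), unmatched case.} Here there is a genuine gap. Your surgery (deleting coordinate $j$ from $\beta_k$ and absorbing $A_j$, $B_j$) does not preserve the conditional law. On axis $j$ it changes the first model's response by $g_k(\beta_{kj}t+h_k(\eps))-g_k(h_k(\eps))$. Because $g_k$ is not affine, this is not of the separable form $a(t)+b(\eps)$, so it cannot be absorbed into $\tilde f_j$ and $\eps$-only terms. The $\eps$-only terms could not be reassigned locally anyway: $B_j$ consists of values $g_{k'}(h_{k'}(\eps))$ of interactions inactive on axis $j$, and these also appear on the other axes.

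The collapse \eqref{eq:collapse} is an identity between the true model and the tilde model on that axis. It says nothing about the true model versus your modified one, so the claim that ``the collapse identity certifies equality'' is false. The modification can also break $|\mathcal B_k|\ge 2$ or support distinctness, as you suspected.

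\paragraph{The missing idea.} No construction is needed. The pair under consideration is, by the standing convention, observationally equivalent on $\Xtr$; that is how \eqref{eq:axis} and the notion of rigidity arise in the first place. So the pair is its own witness, and you only have to show it is not related by the indeterminacies. In the unmatched case this is a cofoliarity argument:
\begin{itemize}
\item Suppose the pair were so related. Proposition~\ref{res:assemble} would then give $\pi(k)$ with $\tilde h_{\pi(k)}=\rho_k^{-1}h_k$ and $\tilde\beta_{\pi(k)j}=\rho_k^{-1}\beta_{kj}\neq0$.
\item Hence $\pi(k)\in\widetilde S_j$, and by Lemma~\ref{lem:cofoliar} its ridge is cofoliar with that of $k$ on axis $j$.
\item This contradicts $k$ forming a singleton class. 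The case of a singleton $-\tilde g_l$ is symmetric.
\end{itemize}
You already use this ``the pair witnesses itself'' reasoning in your matched case. The paper uses it in both cases, and cites Remark~\ref{res:fail} only as an explicit instance, not as a step of the argument.
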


\begin{remark}\label{res:fail}
There exist warps $h_1,h_2,h_3$ satisfying \ref{as:A2}, \ref{as:AR}, \ref{as:C} and \ref{as:A3}, and functions
satisfying \ref{as:A1}, forming two parameterizations that induce the same conditional law on $\Xtr$
with distinct interaction supports; for them the pair is not rigid. Explicitly, $c_1=c_2=c_3=1$, 
\[
  h_1(\eps)=\eps,\quad h_2(\eps)=\eps^3,\quad h_3(\eps)=\log\!\big(e^{\eps}+e^{\eps^3}\big) -\log2,
  \qquad \Th_1=\Th_2=e^{(\cdot)},\quad \Th_3=-2e^{(\cdot)},
\]
satisfy $\sum_{m=1}^3\Th_m(t+h_m(\eps))\equiv0$.
\end{remark}

The following results discuss the case when $g_k, \tilde{g}_l$ are exponential polynomials, as stated in Assmption \ref{as:E}.

Let $\C[z]$ denote the set of polynomials in $z$, with coefficients in $\C$.
\begin{lemma}\label{res:dissect}
Let $\nu\in\R\setminus\{0\}$, let $M$ be finite and nonempty, and
for $m\in M$ let $c_m\in\R\setminus\{0\}$ and $P_m\in\C[z]$, with the foliations $\phi_m:=h_m/c_m$ (normalized warp) pairwise non--cofoliar (Assumption \ref{as:C}). Under Assumptions \ref{as:A2}, \ref{as:G}, if
\begin{equation}\label{eq:onefreq}
  \sum_{m\in M}P_m\!\big(h_m(\eps)\big)\,e^{(\nu/c_m)h_m(\eps)}=C
  \qquad(\text{constant, all }\eps\in\R),
\end{equation}
then $P_m\equiv0$ for every $m\in M$, and $C=0$.
\end{lemma}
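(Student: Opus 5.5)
The plan is to compare the terms of \eqref{eq:onefreq} by their asymptotic size as $\eps\to+\infty$. I will show that any two of them have sizes whose ratio tends to $0$ or $\infty$, and that each also separates from the constant $C$. The dominant term can then neither be cancelled by the others nor balanced by $C$.

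\emph{Setup.} Let $M'=\{m\in M: P_m\not\equiv0\}$ and suppose, for contradiction, that $M'\neq\varnothing$. By the normalization after Definition~\ref{DEF_model_is_identifiable} each $h_m$ is strictly increasing, so \ref{as:A2} gives $h_m(\eps)\to+\infty$ as $\eps\to+\infty$. Write $\lambda_m:=\nu/c_m\neq0$. A nonzero polynomial has finitely many roots, so for $m\in M'$ we have $|P_m(h_m(\eps))|=|a_m|\,h_m(\eps)^{d_m}(1+o(1))$ with $a_m\neq0$. Hence the $m$-th term has modulus $\exp\bigl(L_m(\eps)\bigr)$, where
\[
L_m(\eps)=\lambda_m h_m(\eps)+d_m\log h_m(\eps)+O(1).
\]
The first summand tends to $\pm\infty$ because $\lambda_m\neq0$, and it dominates the logarithmic correction.

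\emph{Pairwise separation.} Fix $m\neq m'$ in $M'$. There are two cases.
\begin{enumerate}[label=(\roman*)]
\item If $h_m,h_{m'}$ are not proportional, \ref{as:G} gives, say, $h_{m'}/h_m\to0$. Then $L_m-L_{m'}=\lambda_m h_m\,(1+o(1))\to\pm\infty$.
\item If $h_{m'}=\alpha h_m$, then non-cofoliarity together with Lemma~\ref{lem:cofoliar} forces $c_{m'}\neq\alpha c_m$. Hence $\lambda_{m'}\alpha\neq\lambda_m$, and $L_m-L_{m'}=(\lambda_m-\alpha\lambda_{m'})h_m+O(\log h_m)\to\pm\infty$.
\end{enumerate}
In both cases $L_m-L_{m'}\to\pm\infty$. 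Moreover each $L_m\to\pm\infty$, so every term also separates from the constant $C$ whenever $C\neq0$. Consequently the functions $\{e^{L_m}\}_{m\in M'}$, together with $|C|$ if $C\neq0$, are strictly ordered by eventual dominance, and there is a unique maximal element.

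\emph{Conclusion.} Rewrite \eqref{eq:onefreq} as $\sum_{m\in M'}P_m(h_m)e^{\lambda_m h_m}-C=0$.
\begin{enumerate}[label=(\alph*)]
\item Suppose the maximal element is a term $m_0$. Dividing the identity by $P_{m_0}(h_{m_0})e^{\lambda_{m_0}h_{m_0}}$ and letting $\eps\to+\infty$ makes the left side tend to $1$, a contradiction. (Complex coefficients are harmless here, since only moduli of ratios of the remaining terms are used.)
\item Suppose instead the maximal element is $C\neq0$. Dividing by $C$ gives $-1=0$ in the limit, again a contradiction.
\end{enumerate}
Hence $M'=\varnothing$, so every $P_m\equiv0$, and then \eqref{eq:onefreq} reads $C=0$.

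The step I expect to be delicate is the second case of the pairwise separation: proportional warps with non-proportional slopes. Assumption~\ref{as:G} says nothing about proportional warps, so there the separation must come from the frequency mismatch $\lambda_m\neq\alpha\lambda_{m'}$ supplied by Lemma~\ref{lem:cofoliar}. I also need to check that the $O(\log h)$ polynomial corrections never compete with the linear-in-$h$ exponents, which uses $h_m\to\infty$ from \ref{as:A2}.
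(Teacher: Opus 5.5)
Your proof is correct and follows essentially the paper's route: order the terms by asymptotic dominance as $\eps\to+\infty$, separate non-proportional warps via \ref{as:G} and proportional ones via the slope mismatch $c_{m'}\neq\alpha c_m$ forced by non-cofoliarity, then divide by the dominant term. The only difference is packaging: you fold the polynomial degree into $L_m$ and let $C$ compete as one more term, which gives a single contradiction in place of the paper's induction that repeatedly removes the maximal index and bounds the subdominant terms case by case.
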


 \begin{lemma}\label{lem:Gprime-rigid}
Assume \ref{as:A1}--\ref{as:AR} and \ref{as:C}--\ref{as:E}. Then no collapse \eqref{eq:collapse} with a non--polynomial $\Theta_m$ can occur on any
axis, and the pair is rigid.
\end{lemma}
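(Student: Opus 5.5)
Fix an axis $j$ and suppose a collapse \eqref{eq:collapse} occurs,
\[
\sum_{m=1}^N \Theta_m\bigl(c_m t+h_m(\eps)\bigr)=A_j(t)+B_j(\eps)\quad\text{on }U_j,
\]
with pairwise non-cofoliar foliations and some non-polynomial $\Theta_m$. We will show this is impossible. By \ref{as:E}, every $g_k$ and $\tg_l$ is a real-frequency exponential polynomial. Each class profile $\Theta_m$ is then also an exponential polynomial: it is either $g_k$, $-\tg_l$, or $g_k-\tg_l(\alpha\,\cdot)$, and rescaling the argument just multiplies the frequencies by $\alpha$. Write
\[
\Theta_m(u)=\sum_{\mu}P_{m,\mu}(u)e^{\mu u}.
\]
Non-polynomiality of $\Theta_m$ means some $P_{m,\mu}\neq0$ with $\mu\neq0$.

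Substituting $u=c_mt+h_m(\eps)$ gives a sum of terms of the form
\[
P_{m,\mu}\bigl(c_mt+h_m(\eps)\bigr)\,e^{\mu c_m t}\,e^{\mu h_m(\eps)}.
\]
Each is an exponential polynomial in $t$ with frequency $\nu=\mu c_m$, whose coefficients are analytic in $\eps$. By \ref{as:AR}, the identity extends analytically in $t$ to all of $\R$; at least the restriction to the open interval suffices, since exponential polynomials are analytic. The right-hand side is analytic in $t$, and $B_j$ is $t$-independent.

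The next step is to separate frequencies in $t$. Fix $\eps$ and view both sides as functions of $t$. The left-hand side is an exponential polynomial in $t$. Hence $A_j(t)$ equals an exponential polynomial minus a constant, so $A_j$ is itself an exponential polynomial. By uniqueness of exponential-polynomial representations, the coefficient of $e^{\nu t}$ must match on both sides for every fixed nonzero $\nu$. The $t$-dependence of $A_j$ does not vary with $\eps$, so its $e^{\nu t}$ coefficient is independent of $\eps$. We differentiate in $\eps$, or simply note that the coefficient polynomial in $t$ must be $\eps$-independent.

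Expanding $P_{m,\mu}(c_mt+h_m(\eps))$ in powers of $t$ and taking the top power of $t$ for each $\nu$ yields an identity of the form
\[
\sum_{m:\ \mu=\nu/c_m} \tilde P_m\bigl(h_m(\eps)\bigr)\,e^{(\nu/c_m)h_m(\eps)}=C_\nu,
\]
with $C_\nu$ a constant. This is exactly \eqref{eq:onefreq}. Lemma~\ref{res:dissect} then forces all these polynomial coefficients to vanish.

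We descend in the power of $t$ to kill each lower coefficient in turn. Concluding, $P_{m,\mu}\equiv0$ for all $\mu\neq0$, so every $\Theta_m$ is a polynomial, contradicting the collapse. Hence every grouped class on every axis has a polynomial profile, which is rigidity in the sense of Definition~\ref{def:rigid}.

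The main obstacle is the hypotheses of Lemma~\ref{res:dissect}. Its foliations $\phi_m=h_m/c_m$ must be pairwise non-cofoliar, and \ref{as:G} must apply to the pooled warps. Non-cofoliarity holds by the construction of \eqref{eq:collapse}, using Lemma~\ref{lem:cofoliar} together with \ref{as:C} and the grouping. \ref{as:G} applies to the pooled warps because it is a model-class condition. One bookkeeping point needs care: two classes with proportional warps but different slopes, $h_{m'}=\alpha h_m$ with $c_{m'}\neq\alpha c_m$, are non-cofoliar yet not covered by distinct growth orders. I would handle this inside Lemma~\ref{res:dissect}, and, if needed, by first separating such terms through their distinct $t$-frequencies $\mu c_m$ at a given $\eps$-exponent.
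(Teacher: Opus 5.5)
Your proof is correct. It uses the same two ingredients as the paper: separating the $t$-dependence through linear independence of $\{t^a e^{\nu t}\}$ on the axis interval, and reducing each nonzero frequency $\nu$ to an identity of the form \eqref{eq:onefreq} that Lemma~\ref{res:dissect} rules out. The proportional-warp, different-slope situation you flag is exactly Case~C in the proof of that lemma, so it needs no extra handling.

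The difference is how you dispose of $A_j(t)+B_j(\eps)$. The paper applies $\partial_t\partial_\eps$, which introduces the factors $h_m'$ and $\Theta_m''$. It then has to re-integrate in $\eps$ by solving $V'+\lambda V=R$ for a polynomial $V$ before it reaches the form \eqref{eq:onefreq}. You instead note that for $\nu\neq0$ the coefficient of $t^a e^{\nu t}$ must be independent of $\eps$, because $A_j$ does not depend on $\eps$ and $B_j$ does not depend on $t$. This lands in \eqref{eq:onefreq} directly, is shorter, and does not use differentiability of the warps.

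Your descent over powers of $t$ is harmless but unnecessary. Linear independence separates all pairs $(\nu,a)$ at once. Moreover, the top power alone gives a contradiction: its coefficients are the nonzero constants $\mathrm{lead}(P_{m,\mu})\,c_m^{D}$, which Lemma~\ref{res:dissect} would force to vanish.
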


\begin{lemma}\label{res:growth}
Under \ref{as:A2}, assumption \ref{as:G} implies \ref{as:A3}. The implication is strict:
\ref{as:A3} does not imply \ref{as:G}.
\end{lemma}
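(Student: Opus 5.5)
We prove the two assertions of Lemma~\ref{res:growth} separately. For the implication \ref{as:G}$\Rightarrow$\ref{as:A3}, suppose $\sum_{k=1}^K\gamma_k h_k\equiv0$ with not all $\gamma_k=0$, and let $J=\{k:\gamma_k\neq0\}$. Under \ref{as:C} the warps in $J$ are pairwise non--proportional, so by \ref{as:G} any two of them have distinct growth orders as $\eps\to+\infty$.

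We order $J$ by dominance. The relation ``$|h_k/h_{k'}|\to\infty$'' is transitive, and by \ref{as:G} any two elements of $J$ are comparable. Hence $J$ is totally ordered and has a unique maximal element $k^\ast$. Then $|h_k(\eps)/h_{k^\ast}(\eps)|\to0$ for every $k\in J\setminus\{k^\ast\}$.

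Next we divide the relation by $h_{k^\ast}(\eps)$. By \ref{as:A2}, $h_{k^\ast}(\eps)\to\pm\infty$, so it is nonzero for all large $\eps$. This gives
\[
\gamma_{k^\ast}+\sum_{k\in J\setminus\{k^\ast\}}\gamma_k\,\frac{h_k(\eps)}{h_{k^\ast}(\eps)}=0 .
\]
Letting $\eps\to+\infty$ yields $\gamma_{k^\ast}=0$, a contradiction. If $|J|=1$, the relation reads $\gamma_{k^\ast}h_{k^\ast}\equiv0$, which is impossible because $h_{k^\ast}$ is unbounded. The only point needing care is transitivity of dominance, which is immediate from multiplying the ratios.

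For strictness we exhibit warps that are linearly independent but violate \ref{as:G}. Take $h_1(\eps)=\eps$ and $h_2(\eps)=\eps+\arctan(\eps)$. Both are strictly increasing, since $h_2'=1+1/(1+\eps^2)>0$. Both are real analytic, odd, and satisfy $h(0)=0$ and $h(\eps)\to\pm\infty$, so they meet \ref{as:A2}.

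These two warps are linearly independent. A relation $a\eps+b(\eps+\arctan\eps)\equiv0$ forces $b\arctan\eps\equiv-(a+b)\eps$, and since $\arctan$ is not linear this gives $b=0$ and then $a=0$. In particular $h_1$ and $h_2$ are non--proportional.

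However, $h_2(\eps)/h_1(\eps)\to1$ as $\eps\to+\infty$, which is neither $0$ nor $\infty$. So \ref{as:G} fails while \ref{as:A3} holds, and the implication is strict. No step here is a real obstacle; the only choice to make is a counterexample whose ratio tends to a finite nonzero limit.
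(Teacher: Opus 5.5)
Your argument for the implication is the paper's argument: order the warps with nonzero coefficient by growth using \ref{as:G}, divide the relation by the fastest-growing one, and let $\varepsilon\to+\infty$. Your explicit appeal to \ref{as:C} is necessary, not a flaw. Without it the implication fails: $h_1(\varepsilon)=\varepsilon$ and $h_2(\varepsilon)=2\varepsilon$ satisfy \ref{as:A2} and, vacuously, \ref{as:G}, yet $2h_1-h_2\equiv0$. The paper's parenthetical claim that a nontrivial relation among proportional warps forces their coefficients to vanish is therefore wrong, so your version is the sound one, given that \ref{as:C} is a standing assumption wherever the lemma is used.

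For strictness you use a different and simpler witness, $\varepsilon$ and $\varepsilon+\arctan\varepsilon$, whose ratio tends to $1$. The paper instead reuses the three warps of Remark~\ref{res:fail}, where $h_3/h_2\to1$. That choice requires checking linear independence of those warps, but it ties strictness to the same family that breaks rigidity.
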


\begin{theorem}[Theorem \ref{THM:identify_main}]\label{res:growthid}
Under \ref{as:A1}--\ref{as:E}, the pair is rigid, and therefore $K=\tK$ and, after relabelling,
\[
  \Btr_k=\tB_{\pi(k)},\quad \beta_k=\rho_k\tb_{\pi(k)},\quad h_k=\rho_k\th_{\pi(k)},
  \quad g_k=\tg_{\pi(k)}(\cdot/\rho_k)+c_k .
\]
\end{theorem}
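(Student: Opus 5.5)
The statement is Theorem~\ref{res:growthid}. The plan is to obtain it by chaining results already stated, so that all the real work sits in Lemma~\ref{lem:Gprime-rigid}.

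\textbf{Reduction.} By Lemma~\ref{res:growth}, Assumption~\ref{as:A2} together with \ref{as:G} gives \ref{as:A3}, so linear independence of the warps is available. Lemma~\ref{lem:Gprime-rigid}, under \ref{as:A1}--\ref{as:AR} and \ref{as:C}--\ref{as:E}, shows that no collapse \eqref{eq:collapse} with a non-polynomial profile occurs on any axis. Hence the pair is rigid in the sense of Definition~\ref{def:rigid}. With rigidity and \ref{as:A1}, \ref{as:A2}, \ref{as:Supp}, \ref{as:C}, \ref{as:A3} all in force, Theorem~\ref{res:reduction}(i) applies; concretely, Proposition~\ref{res:assemble} and then Proposition~\ref{res:pin} give the displayed conclusion:
\begin{itemize}
\item the global bijection $\pi$ with $K=\tK$;
\item $\Btr_k=\tB_{\pi(k)}$;
\item $\beta_k=\rho_k\tb_{\pi(k)}$ and $h_k=\rho_k\th_{\pi(k)}$;
\item $g_k-\tg_{\pi(k)}(\cdot/\rho_k)\in\cP$, which $\cR=\cP_0$ reduces to a constant $c_k$.
\end{itemize}

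\textbf{Core step: proving Lemma~\ref{lem:Gprime-rigid}.} If that lemma is not taken as given, I would argue as follows.

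1. Fix an axis $j$ and suppose a collapse $\sum_m\Th_m(c_mt+h_m(\eps))=A_j(t)+B_j(\eps)$ holds with some $\Th_m$ non-polynomial. By \ref{as:E}, each profile (a difference of exponential polynomials) is again an exponential polynomial, $\Th_m(u)=\sum_\nu P_{m,\nu}(u)e^{\nu u}$.

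2. Substitute and expand in $t$. The left side becomes a finite sum of terms $t^a e^{\nu t}\cdot(\text{function of }\eps)$. Since $A_j$ is real analytic, compare exponential frequencies in $t$. For $t$ on an interval, functions $t^a e^{\nu t}$ with distinct $(a,\nu)$ are linearly independent modulo the $t$-only term $A_j$ and the $\eps$-only term $B_j$.

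3. Extract a frequency relation. For each nonzero frequency $\nu$ in $t$, the coefficient of the top power of $t$ yields exactly an identity of the form \eqref{eq:onefreq}: a sum of terms $\tilde P_m(h_m(\eps))e^{(\nu/c_m)h_m(\eps)}$ equal to a constant. Here the $t$-frequency $\nu$ corresponds to argument frequency $\nu/c_m$ in $\Th_m$.

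4. Conclude. Lemma~\ref{res:dissect}, using \ref{as:G} and pairwise non-cofoliarity, forces all these polynomial coefficients to vanish. Descending induction on the power of $t$ then kills every nonzero-frequency part of every $\Th_m$. So every $\Th_m$ is a polynomial, which contradicts the definition of a collapse.

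\textbf{Main obstacle.} The delicate point is Lemma~\ref{res:dissect}: showing that $\sum_m P_m(h_m)e^{(\nu/c_m)h_m}$ is constant only trivially. I would order the warps by growth as $\eps\to+\infty$ (possible by \ref{as:G}) and isolate the fastest-growing exponent. For $\nu/c_m>0$ the term with the dominant $h_m$ dominates all others, since $e^{a h_m}$ beats any $e^{b\th}$ when $\th/h_m\to0$. This forces its leading polynomial coefficient to be zero. Terms with $\nu/c_m<0$ are handled by letting $\eps\to-\infty$, or by using decay combined with unboundedness from \ref{as:A2}. Peeling off terms inductively gives $P_m\equiv0$ and $C=0$.
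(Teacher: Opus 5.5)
Your reduction is exactly the paper's proof. Lemma~\ref{res:growth} upgrades \ref{as:G} to \ref{as:A3}, Lemma~\ref{lem:Gprime-rigid} gives rigidity, and Theorem~\ref{res:reduction}(i), via Propositions~\ref{res:assemble} and~\ref{res:pin}, yields the matching with $c_k$ constant.

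You depart from the paper only inside Lemma~\ref{lem:Gprime-rigid}. The paper applies $\partial_t\partial_\varepsilon$ to \eqref{eq:collapse} and separates the coefficients of $t^a e^{\nu t}$ in the homogeneous identity \eqref{eq:Gp-diff}. It must then integrate back, solving the polynomial equation \eqref{eq:aux-ode} so that each coefficient relation becomes the $\varepsilon$-derivative of an expression of the form \eqref{eq:onefreq}.

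You instead expand the undifferentiated identity as $\sum_{\nu,a}C_{\nu,a}(\varepsilon)\,t^a e^{\nu t}=A_j(t)+B_j(\varepsilon)$. Subtracting the same identity at a fixed $\varepsilon_0$ removes $A_j$. Linear independence of $\{t^a e^{\nu t}\}$ on $(a_j,A_j)$ then shows that every $C_{\nu,a}$ with $(\nu,a)\neq(0,0)$ is constant in $\varepsilon$. For $\nu\neq0$ this is already \eqref{eq:onefreq}. Taking $a=0$, where the coefficient polynomial is $p_{m,i(m)}$ itself, Lemma~\ref{res:dissect} kills every nonzero-frequency part at once. So neither the auxiliary ODE nor your descending induction on powers of $t$ is needed. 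This is a valid and somewhat more elementary route; the paper's version mainly buys uniformity with the mixed-derivative identities used in Propositions~\ref{res:pin} and~\ref{res:N2}.

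Two points to tighten. First, in your step~2 the operative fact is not real analyticity of $A_j$. Nor is the family linearly independent ``modulo $A_j$'' at a fixed $\varepsilon$, since there $A_j$ can absorb any such term. What you actually use is that $A_j$ does not depend on $\varepsilon$, exploited by differencing as above. That is why step~3 concludes ``equal to a constant'' rather than ``zero''.

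Second, your sketch of Lemma~\ref{res:dissect} would not go through as written, for two reasons:
\begin{itemize}
\item Ordering the raw warps by growth is not strict. One frequency group can contain two non-cofoliar classes with proportional warps: a true and a tilde term with $\tilde h_l=\alpha h_k$ but $\tilde\beta_{lj}\neq\alpha\beta_{kj}$. Assumption \ref{as:C} does not exclude this across parameterizations.
\item Assumption \ref{as:G} is a one-sided condition at $+\infty$, so sending $\varepsilon\to-\infty$ gives no separation.
\end{itemize}
The paper instead orders the exponents $\nu h_m/c_m$ themselves, which is strict by \ref{as:G} together with non-cofoliarity. It divides by the top exponential whether it grows or decays, and the sign of the top exponent only decides whether one must first show $C=0$. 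Since that lemma is proved in the paper, you can simply cite it.
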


\begin{remark}[Pooled \ref{as:G} does not identify polynomial profiles]\label{rem:polyG-A6-fails}
Assumption \ref{as:G} controls ratios of the raw warps but not of their powers, and this does
not suffice for polynomial profiles. Set $z(\eps)=\eps+\eps^3$,
$q(s)=(1+s^2)^{1/4}+4s^2/(1+e^{s})$, $U(r)=\int_0^r q$, and
\[
  h_1=z,\qquad h_3=U(z),\qquad h_2=z^2+U(z),
\]
so that $h_2=h_1^2+h_3$ holds exactly. Each $h_k$ is real--analytic, strictly increasing,
normalised ($h_k(0)=0$), and divergent, with positive--tail orders $h_1\sim z$,
$h_3\sim\tfrac23z^{3/2}$, $h_2\sim z^2$; thus $h_1\ll h_3\ll h_2$ and the warps satisfy
\ref{as:G} and are pairwise non--proportional. Take $p=5$, $K=\tK=3$, identical warps and
profiles $g_1=u^3$, $g_2=g_3=u^2$ in both parameterizations, and directions
\[
  \beta_1=(1,1,1,0,0),\ \beta_2=(1,1,0,1,0),\ \beta_3=(4,4,0,0,1),
\]
\[
  \tb_1=(-1,-1,1,0,0),\ \tb_2=(4,4,0,1,0),\ \tb_3=(1,1,0,0,1),
\]
each of support size three, pairwise distinct within each model. On $\Xtr$ at most one
coordinate is nonzero; on coordinates $3,4,5$ a single interaction is active with equal
coefficient in both models, and on coordinates $1,2$ the responses differ by
\[
  (t{+}h_1)^3-({-}t{+}h_1)^3+(t{+}h_2)^2-(4t{+}h_2)^2+(4t{+}h_3)^2-(t{+}h_3)^2
  =2t^3+6t\bigl(h_1^2-h_2+h_3\bigr)=2t^3,
\]
using $h_2=h_1^2+h_3$: the cubic's cross term $6t\,h_1^2$ is cancelled by the quadratics'
$-6t\,h_2$ and $6t\,h_3$. Absorbing $2t^3$ into $\tf_1=\tf_2=2u^3$ (other $f_j,\tf_j$ zero)
gives $T=\widetilde T$ pointwise on $\Xtr\times\R$. The warps' distinct growth orders force any
relabelling to match $h_k$ with $\tilde h_k$, hence (per--term rescaling) $\rho_k=1$ and
$\beta_k=\tb_k$; but $\beta_1=(1,1,1,0,0)$ and $\tb_1=(-1,-1,1,0,0)$ are not proportional.
Thus \ref{as:G} does not identify polynomial--profile models.
\end{remark}

\subsection{Proofs of auxiliary results in Section \ref{sec:results}}

\begin{proof}[Proof of Proposition~\ref{res:assemble}]
Fix an axis~$j\in\{1,\dots,p\}$. Let
\[
S_j=\{k:\beta_{kj}\neq0\},\qquad \tS_j=\{l:\tb_{lj}\neq0\}
\]
denote the sets of interaction indices active on this axis in the two parameterizations. On the strip
$U_j=(a_j,A_j)\times\R$, the identity \eqref{eq:axis} reads
\begin{align}
\sum_{k\in S_j} g_k(\beta_{kj}t+h_k(\eps))-\sum_{l\in\tS_j}\tg_l(\tb_{lj}t+\th_l(\eps))
=A_j(t)+B_j(\eps).
\end{align}

Define an equivalence relation $\sim_j$ on the union $S_j\sqcup\tS_j$ by declaring two indices
equivalent iff their associated warped ridges are cofoliar; by Lemma \ref{lem:cofoliar}, this holds precisely
when the corresponding normalised warps coincide. Let $\mathcal{C}_j$ denote the set of equivalence
classes of $S_j\sqcup\tS_j$ under $\sim_j$. For a class $C\in\mathcal{C}_j$, set
\[
I_C:=C\cap S_j,\qquad J_C:=C\cap \tS_j,
\]
the active indices of the first and second parameterization, respectively, that fall into this class.

We first claim that $|I_C|\le 1$ and $|J_C|\le 1$ for every class $C$. Indeed, suppose
$k_1,k_2\in I_C$ with $k_1\neq k_2$. Then the warped ridges $( \beta_{k_1j}, h_{k_1})$ and
$( \beta_{k_2j}, h_{k_2})$ are cofoliar, so by Lemma \ref{lem:cofoliar} there exists $\alpha\neq0$ such that
$h_{k_2}=\alpha h_{k_1}$, contradicting Assumption~\ref{as:C} (pairwise non-proportionality of the
warps). The same argument applied to the second parameterization gives $|J_C|\le 1$, since
Assumption~\ref{as:C} also holds for $\{\th_l\}$. Thus each equivalence class has either
\begin{itemize}
  \item a single first-side index, $(I_C,J_C)=(\{k\},\varnothing)$,
  \item a single second-side index, $(I_C,J_C)=(\varnothing,\{l\})$,
  \item one of each, $(I_C,J_C)=(\{k\},\{l\})$,
  \item or is empty (which we ignore).
\end{itemize}
Classes of the first two types are called \emph{singletons}; classes of the third type are
\emph{matched pairs}. Distinct classes have distinct foliations, since each class is defined by
a unique normalised warp.

Now invoke rigidity (Definition~\ref{def:rigid}). It requires that, for every class $C\in\mathcal{C}_j$,
the combined class function
\[
F_C(t,\eps):=
\sum_{k\in I_C} g_k(\beta_{kj}t+h_k(\eps))
-
\sum_{l\in J_C} \tg_l(\tb_{lj}t+\th_l(\eps))
\]
belongs to the polynomial space $\cP$ (viewed as a function of $t$ and $\eps$ on $U_j$). Consider a
singleton of the first type, $C=(\{k\},\varnothing)$. Then
\[
F_C(t,\eps)=g_k(\beta_{kj}t+h_k(\eps)).
\]
If $F_C\in\cP$, then $g_k$ is a polynomial in its argument, since the map $u=\beta_{kj}t+h_k(\eps)$
is surjective onto $\R$ (because $h_k$ is strictly monotone by Assumption~\ref{as:A2} and $t$ varies
over an interval). This contradicts Assumption~\ref{as:A1}, which requires each $g_k$ to be
non-polynomial. The same reasoning excludes singletons $(\varnothing,\{l\})$, as they would force
$\tg_l\in\cP$, contradicting \ref{as:A1} for the second parameterization. Therefore, every nontrivial
equivalence class is a matched pair: for every $k\in S_j$, there exists a unique $l\in\tS_j$ such that
the ridges are cofoliar. This defines a map $\sigma_j:S_j\to\tS_j$, which is bijective by applying
the same argument in reverse.

For each $k\in S_j$, set $l=\sigma_j(k)$. Since $k$ and $l$ lie in the same matched pair, their ridges are cofoliar. By Lemma \ref{lem:cofoliar}, there exists a unique scalar $\alpha_{kj}\neq0$ such
that
\begin{align}\label{eq:mathcing_hbeta}
\th_{\sigma_j(k)}=\alpha_{kj}\,h_k,\qquad
\tb_{\sigma_j(k),j}=\alpha_{kj}\,\beta_{kj}.
\end{align}
The class function for this matched pair is
\[
F_{\{k,\sigma_j(k)\}}(t,\eps)
=
g_k(\beta_{kj}t+h_k(\eps))
-
\tg_{\sigma_j(k)}(\tb_{\sigma_j(k),j}t+\th_{\sigma_j(k)}(\eps)).
\]
Using \eqref{eq:mathcing_hbeta} and setting $u:=\beta_{kj}t+h_k(\eps)$, we obtain
\[
F_{\{k,\sigma_j(k)\}}(t,\eps)
=
g_k(u)-\tg_{\sigma_j(k)}(\alpha_{kj}u).
\]
Rigidity forces $F_{\{k,\sigma_j(k)\}}\in\cP$; hence
\begin{align}\label{eq:gtildeg_poly}
g_k-\tg_{\sigma_j(k)}(\alpha_{kj}\,\cdot)\in\cP.
\end{align}

It remains to show that the per-axis bijections $\sigma_j$ are compatible across different axes.
Take an interaction index $k$ and two axes $j,j'\in\Btr_k=\supp\beta_k$. From \eqref{eq:mathcing_hbeta},
\[
\th_{\sigma_j(k)}=\alpha_{kj}h_k,\qquad
\th_{\sigma_{j'}(k)}=\alpha_{kj'}h_k.
\]
Thus $\th_{\sigma_j(k)}$ and $\th_{\sigma_{j'}(k)}$ are proportional. By Assumption~\ref{as:C}
applied to the second parameterization, the warps $\{\th_l\}$ are pairwise non-proportional, so
these two warps must be identical: $\th_{\sigma_j(k)}=\th_{\sigma_{j'}(k)}$. Since the warps in the
second parameterization are indexed distinctly, this forces $\sigma_j(k)=\sigma_{j'}(k)$. Therefore
$\sigma_j(k)$ is independent of $j\in\Btr_k$; denote this common value by $\pi(k)$. Applying the
same argument to $\pi^{-1}$ shows that $\pi$ is a bijection $\{1,\dots,K\}\to\{1,\dots,\tK\}$, so
$K=\tK$.

Moreover, for each fixed $k$, the scalar $\alpha_{kj}$ in \eqref{eq:mathcing_hbeta} is the same for all $j\in\Btr_k$:
if $j,j'\in\Btr_k$, then $\alpha_{kj}h_k=\th_{\pi(k)}=\alpha_{kj'}h_k$, and since $h_k\not\equiv0$,
we get $\alpha_{kj}=\alpha_{kj'}=:\alpha_k$. Let $\rho_k:=1/\alpha_k$. Then \eqref{eq:mathcing_hbeta} becomes
\begin{align}\label{eq:hbeta_ratio}
\th_{\pi(k)}=\frac{1}{\rho_k}h_k,\qquad
\tb_{\pi(k),j}=\frac{1}{\rho_k}\beta_{kj}\quad\forall j\in\Btr_k.
\end{align}
The first relation in \eqref{eq:hbeta_ratio} gives $h_k=\rho_k\th_{\pi(k)}$. The second gives, for each component
$j\in\Btr_k$, $\beta_{kj}=\rho_k\tb_{\pi(k),j}$, and for $j\notin\Btr_k$, both $\beta_{kj}=0$ and
$\tb_{\pi(k),j}=0$ by definition of the supports. Hence $\beta_k=\rho_k\tb_{\pi(k)}$ as vectors.
Reading the support relations in \eqref{eq:hbeta_ratio} over all $j$ yields $\Btr_k\subseteq\tB_{\pi(k)}$; the symmetric
argument using $\pi^{-1}$ gives the reverse inclusion. Thus $\Btr_k=\tB_{\pi(k)}$. Combining this
with the vector equality, the warp equality, and the polynomial difference gives exactly the conclusion of the proposition. \qedhere
\end{proof}

\begin{proof}[Proof of Proposition~\ref{res:pin}]
Let $(q_k)\in\cR$. Fix an axis $j$. Substituting $x=te_j$ into the defining
identity~\eqref{eq:redist} and collecting onto the right-hand side every term
$q_k(h_k(\eps))$ with $\beta_{kj}=0$ (a function of $\eps$ alone) together with the
separable part of \eqref{eq:redist} gives, on the strip $U_j=(a_j,A_j)\times\R$,
\begin{equation}\label{eq:res-j}
  \sum_{k\in S_j}q_k\!\bigl(\beta_{kj}t+h_k(\eps)\bigr)=A_j(t)+B_j(\eps),
\end{equation} where $S_j=\{k:\beta_{kj}\neq0\}$.
Applying $\partial_t\partial_\eps$ to \eqref{eq:res-j} annihilates $A_j(t)$ and
$B_j(\eps)$ and yields
\begin{equation}\label{eq:res-de}
  \sum_{k\in S_j}\beta_{kj}\,h_k'(\eps)\,q_k''\!\bigl(\beta_{kj}t+h_k(\eps)\bigr)=0
  \qquad\text{on }U_j.
\end{equation}

Fix $\eps$ and interpret \eqref{eq:res-de} as a polynomial identity in $t$ (each
summand is a polynomial in $t$ with $\eps$-dependent coefficients, since $q_k''$ is a
polynomial). Set $D:=\max_{k\in S_j}\deg q_k''$ and suppose for contradiction that
$D\ge0$, i.e.\ that some $q_k$ has degree at least $2$. The coefficient of $t^D$
in \eqref{eq:res-de} is
\[
  \sum_{\substack{k\in S_j\\\deg q_k''=D}}\beta_{kj}^{\,D+1}\lambda_k\,h_k'(\eps)=0
  \qquad\forall\,\eps\in\R,
\]
where $\lambda_k\neq0$ denotes the leading coefficient of $q_k''$. Since $\beta_{kj}\neq0$
for $k\in S_j$, the scalars $\beta_{kj}^{\,D+1}\lambda_k$ are all nonzero, so this is a
nontrivial linear combination of $\{h_k'\}_{k\in S_j,\,\deg q_k''=D}$ vanishing
identically. However, the derivatives $\{h_k'\}_{k=1}^K$ are linearly independent by Corollary \ref{cor:within}. 
This contradiction shows $D<0$, i.e.\ $\deg q_k''\le -1$, and
hence every $q_k$ has degree at most $1$:
\[
  q_k(u)=c_ku+d_k,\qquad c_k,d_k\in\R.
\]

It remains to show $c_k=0$ for all $k$. Substituting $q_k(u)=c_ku+d_k$ into
\eqref{eq:res-j} and differentiating in $t$ yields
\[
  \sum_{k\in S_j}\beta_{kj}c_k = -A_j'(t)
\]
for all $t\in(a_j,A_j)$; the left-hand side is
constant, so $A_j'$ is constant. Hence $A_j(t)=e_jt+b_j$ for some $e_j,b_j\in\R$.
Substituting the affine forms back into the full residual identity \eqref{eq:redist},
\[
  \sum_j(e_jx_j+b_j)+\sum_k c_k\!\bigl(\beta_k^\top x+h_k(\eps)\bigr)+\sum_k d_k=0
  \qquad\text{on }\Xtr\times\R.
\]
Separating the $\eps$-dependent part, the identity requires $\sum_k c_kh_k(\eps)=0$
for all $\eps\in\R$; linear independence of $\{h_k\}$ \ref{as:A3} forces $c_k=0$ for all
$k$. The remaining $x$-dependent part then reduces to $\sum_j e_jx_j+\text{const}=0$
on $\Xtr$; since $\Xtr$ contains segments in each coordinate direction, this forces
$e_j=0$ for all $j$. Hence $q_k\equiv d_k$ is a constant for every $k$,
i.e.\ $(q_k)\in\cP_0$.
\end{proof}

\begin{remark}
By Proposition~\ref{res:assemble}, after relabelling by the bijection $\pi$
and setting $r_k=\alpha_k^{-1}$, the matched class function
$q_k:=g_k-\tg_{\pi(k)}(\alpha_k\,\cdot)$ is a polynomial (Step~2 of
Proposition~\ref{res:assemble}) and its tuple lies in $\cR$ after compensation
(substituting $\tg_{\pi(k)}(\alpha_k\bar x_k)=g_k(\bar x_k)-q_k(\bar x_k)$ into
\eqref{eq:full} and using $\beta_k=r_k\tb_k$, $h_k=r_k\th_k$ converts the full
identity to \eqref{eq:redist} for the tuple $(q_k)$). By $\cR=\cP_0$ each $q_k$ is a
constant, $q_k\equiv\gamma_k$, yielding
\[
  g_k(\bar x_k)=\tg_k\!\bigl(\tfrac{1}{r_k}\bar x_k\bigr)+\gamma_k,
\]
and the residual constraint $\sum_j b_j+\sum_k\gamma_k=0$ follows from evaluating
\eqref{eq:full} at $x=0$, $\eps=0$.
\end{remark}

\begin{proof}[Proof of Theorem~\ref{res:reduction}]
(i) Proposition~\ref{res:assemble} (using \ref{as:A1}, \ref{as:A2}, \ref{as:C}) produces
the bijection $\pi$ and the structural equalities; Proposition~\ref{res:pin} (using
\ref{as:A3}) reduces the residual polynomial freedom to constants.

(ii) Failure of rigidity means there exists an axis $j$ and a grouped identity
\eqref{eq:collapse} containing a foliation class with function
$\Th_m\notin\cP$. The singleton is either $\Th_m=g_k$ for some $k\in S_j$ (no term
of the second parameterization active on axis $j$ is cofoliar with the $k$-th
interaction) or $\Th_m=-\tg_l$ (symmetric case). In the first case,
if the two parametrizations were related by the admissible indeterminacies as in Proposition \ref{res:assemble}, then the interaction $k$ would have a globally matched interaction $l=\pi(k)$ satisfying $\tilde{h}_l = \alpha h_k$ and $\tilde{\beta}_{lj} = \alpha \beta_{kj}$ with the same $\alpha\neq 0 $. By
Lemma~\ref{lem:cofoliar}, the two would then be cofoliar on axis $j$, which would contradict the fact that the class containing $k$ is a singleton. Hence, the two parametrizations are not related by the admissible indeterminacies. Remark~\ref{res:fail} provides an explicit instance.

If instead the non-polynomial foliation class is matched, then for some
matched pair $(k,l)$ its combined profile is
$\Th_m=g_k-\tg_l(\alpha\,\cdot)\notin\cP$. By
Proposition~\ref{res:assemble}, two parameterizations that differ only by
the admissible indeterminacies must have a polynomial combined profile on
every matched class. Since $\Th_m$ is non-polynomial, the two
parameterizations cannot be related by those indeterminacies. As they
nevertheless induce the same conditional law on $\Xtr$, they constitute a
failure of identifiability.
\end{proof}

\begin{proof}[Proof of Lemma~\ref{res:dissect}]
Write $\phi_m=h_m/c_m$, so \eqref{eq:onefreq} is
$\sum_{m\in M}P_m(h_m)e^{\nu\phi_m}=C$. Fix distinct $m,m'$.
If $h_m$ and $h_{m'}$ are non-proportional, then by \ref{as:G},
$|\phi_m/\phi_{m'}|\to0$ or $\infty$, so exactly one of
$|\phi_m|,|\phi_{m'}|$ dominates. If
$|\phi_m/\phi_{m'}|\to\infty$, then
$\phi_{m'}/\phi_m\to0$ and
$\phi_m-\phi_{m'}=\phi_m(1-\phi_{m'}/\phi_m)\sim\phi_m$, which
tends to $\pm\infty$ since $|\phi_m|\to\infty$; the case
$|\phi_m/\phi_{m'}|\to0$ is symmetric.
If instead $h_{m'}=\alpha h_m$ for some $\alpha\neq0$, then
non--cofoliarity implies $c_{m'}\neq\alpha c_m$, and hence
\[
  \phi_m-\phi_{m'}
  =\left(\frac{1}{c_m}-\frac{\alpha}{c_{m'}}\right)h_m
  \longrightarrow \pm\infty.
\]
Hence for every distinct pair
\[
  \nu(\phi_m-\phi_{m'})\to+\infty\quad\text{or}\quad\to-\infty,
\] the case $|\phi_m/\phi_{m'}|\to0$ is symmetric. Hence for every distinct pair
\[
  \nu(\phi_m-\phi_{m'})\to+\infty\quad\text{or}\quad\to-\infty,
\]
these two alternatives being \emph{exhaustive and mutually exclusive} (the difference cannot
be bounded). Define
\[
  m\succ m'\quad\Longleftrightarrow\quad\nu(\phi_m-\phi_{m'})\to+\infty.
\]
Exclusivity makes $\succ$ a well--defined relation, exhaustivity makes it total on distinct
indices, and it is transitive because
$\nu(\phi_m-\phi_r)=\nu(\phi_m-\phi_{m'})+\nu(\phi_{m'}-\phi_r)$ and the sum of two quantities
tending to $+\infty$ tends to $+\infty$. A total order on the finite set $M$ has a maximum;
let $m^\ast$ be it, so $\nu(\phi_{m^\ast}-\phi_m)\to+\infty$ for every $m\neq m^\ast$.

Fix $m\neq m^\ast$; if $S_m\equiv0$ there is nothing
to prove, so let $d=\deg S_m$. Since $h_m=c_m\phi_m$, for large $\eps$
$|P_m(h_m)|\le C'(1+|\phi_m|)^{d}=\exp\!\big(d\log(1+|\phi_m|)\big)$. We bound the term
$P_m(h_m)e^{\nu(\phi_m-\phi_{m^\ast})}$ using \ref{as:G}.
 
\emph{Case A: $|\phi_m/\phi_{m^\ast}|\to0$.} Then $\phi_m=o(\phi_{m^\ast})$, so the prefactor
is $\exp\!\big(o(|\phi_{m^\ast}|)\big)$. Since $m^\ast\succ m$, the exponent
$\nu(\phi_m-\phi_{m^\ast})\to-\infty$ at rate of magnitude
$|\nu\phi_{m^\ast}|(1+o(1))$. Exponential decay $e^{-|\nu\phi_{m^\ast}|(1+o(1))}$ dominates
$\exp\!\big(o(|\phi_{m^\ast}|)\big)$, so the term $\to0$.
 
\emph{Case B: $|\phi_m/\phi_{m^\ast}|\to\infty$.} Then $\phi_{m^\ast}/\phi_m\to0$, so
\[
  \phi_m-\phi_{m^\ast}=\phi_m\left(1-\frac{\phi_{m^\ast}}{\phi_m}\right)=\phi_m\,(1+o(1)).
\]
Since $m^\ast\succ m$, we have $\nu(\phi_m-\phi_{m^\ast})\to-\infty$, and combining this gives $\nu\phi_m(1+o(1))\to-\infty$. Hence, for all sufficiently large $\eps$,
\[
  \nu(\phi_m-\phi_{m^\ast})\le-\tfrac{|\nu|}{2}\,|\phi_m|.
\]
As $h_m=c_m\phi_m$ and $S_m$ is a polynomial, $|P_m(h_m)|\le C'(1+|\phi_m|)^{d}$ for some
$C'>0$, $d\ge0$ and all large $\eps$. Therefore
\[
  \left|P_m(h_m)\,e^{\nu(\phi_m-\phi_{m^\ast})}\right|
  \le C'(1+|\phi_m|)^{d}\,e^{-(|\nu|/2)|\phi_m|}\longrightarrow0,
\]
exponential decay dominating polynomial growth.

\emph{Case C: $h_m$ and $h_{m^\ast}$ are proportional.}
Write $h_m=\alpha h_{m^\ast}$. Then
$\phi_m=r\phi_{m^\ast}$ for the constant
$r=\alpha c_{m^\ast}/c_m\neq1$, where $r\neq1$ follows from
non--cofoliarity. Thus
\(
  \phi_m-\phi_{m^\ast}
  =\left(1-\frac{1}{r}\right)\phi_m.
\)
Since $m^\ast\succ m$, we have
$\nu(\phi_m-\phi_{m^\ast})\to-\infty$. Hence, for some
$\kappa>0$ and all sufficiently large $\eps$,
\(
  \nu(\phi_m-\phi_{m^\ast})\le -\kappa|\phi_m|.
\)
Therefore, exactly as in Case B,
\[
  \left|P_m(h_m)e^{\nu(\phi_m-\phi_{m^\ast})}\right|
  \le C'(1+|\phi_m|)^d e^{-\kappa|\phi_m|}
  \longrightarrow0.
\]
 
Divide \eqref{eq:onefreq} by
$e^{\nu\phi_{m^\ast}}$:
\[
  P_{m^\ast}(h_{m^\ast})+\sum_{m\neq m^\ast}P_m(h_m)e^{\nu(\phi_m-\phi_{m^\ast})}
  =C\,e^{-\nu\phi_{m^\ast}},
\]
so by the previous calculation, $P_{m^\ast}(h_{m^\ast})+o(1)=C\,e^{-\nu\phi_{m^\ast}}$.
\begin{itemize}
\item If $\nu\phi_{m^\ast}\to+\infty$, the right side $\to0$, so $P_{m^\ast}(h_{m^\ast})\to0$;
  as $|h_{m^\ast}|\to\infty$, a nonzero polynomial cannot tend to $0$ along $\eps\to+\infty$,
  so $P_{m^\ast}\equiv0$.
\item If $\nu\phi_{m^\ast}\to-\infty$, first note $C=0$. Indeed, suppose $C\neq0$. The
  identity $P_{m^\ast}(h_{m^\ast})+o(1)=C\,e^{-\nu\phi_{m^\ast}}$ has, on the right, a term
  whose modulus $|C|\,e^{-\nu\phi_{m^\ast}}\to+\infty$,
  so the right side grows exponentially in $|\phi_{m^\ast}|$. The left side, however, is
  $P_{m^\ast}(c_{m^\ast}\phi_{m^\ast})+o(1)$, a polynomial in $\phi_{m^\ast}$ up to $o(1)$,
  hence bounded by $C'(1+|\phi_{m^\ast}|)^{\deg P_{m^\ast}}$ for large $\eps$. A quantity of at
  most polynomial growth cannot equal one of exponential growth as $\eps\to+\infty$; this
  contradiction forces $C=0$. With $C=0$ the identity reads
  $P_{m^\ast}(h_{m^\ast})+o(1)=0$, so $P_{m^\ast}(h_{m^\ast})\to0$; as $|h_{m^\ast}|\to\infty$,
  a nonzero polynomial cannot tend to $0$ along $\eps\to+\infty$, so $P_{m^\ast}\equiv0$.
\end{itemize}
Remove $m^\ast$ and repeat on $M\setminus\{m^\ast\}$. Induction gives $P_m\equiv0$ for all
$m$; the empty relation gives $C=0$.
\end{proof}

\begin{proof}[Proof of Lemma~\ref{lem:Gprime-rigid}]
Suppose a collapse arises on some axis. After merging exact cofoliar classes, index them
$m=1,\dots,N$, with pairwise non--cofoliar foliations $h_m/c_m$ ($c_m\neq0$), and suppose
some $\Th_{m_0}$ is not a polynomial.
 
Apply $\partial_t\partial_\eps$ to \eqref{eq:collapse}; the separable
side vanishes and each class contributes $c_m h_m'(\eps)\Th_m''(c_mt+h_m(\eps))$, so
\begin{equation}\label{eq:Gp-diff}
  \sum_{m=1}^N c_m\,h_m'(\eps)\,\Th_m''\!\big(c_mt+h_m(\eps)\big)=0 .
\end{equation}
By hypothesis $\Th_m''(u)=\sum_i\tilde p_{mi}(u)e^{\lambda_{mi}u}$ with
$\tilde p_{mi}=p_{mi}''+2\lambda_{mi}p_{mi}'+\lambda_{mi}^2 p_{mi}$; if $\lambda_{mi}\neq0$ and
$p_{mi}\not\equiv0$ then $\tilde p_{mi}\not\equiv0$, its leading coefficient being
$\lambda_{mi}^2$ times that of $p_{mi}$.

Substitute $u=c_mt+h_m$ and factor
$e^{\lambda_{mi}(c_mt+h_m)}=e^{\nu t}e^{\lambda_{mi}h_m}$ with $\nu:=\lambda_{mi}c_m$. Expand
$\tilde p_{mi}(c_mt+h_m)=\sum_a R_{mia}(h_m)\,t^a$ with $R_{mia}$ polynomial. Then
\eqref{eq:Gp-diff} is a finite combination of $\{t^a e^{\nu t}\}$, which are linearly
independent on the axis interval (distinct $\nu$ give distinct exponential rates in $t$; for
fixed $\nu$ the powers $t^a$ are independent). Hence each coefficient vanishes: for every
$\nu$ and $a$,
\begin{equation}\label{eq:Gp-coeff}
  \sum_{(m,i):\,\lambda_{mi}c_m=\nu} c_m\,R_{mia}\!\big(h_m(\eps)\big)\,h_m'(\eps)\,
  e^{\lambda_{mi}h_m(\eps)}=0 .
\end{equation}
For fixed $m$ the $\lambda_{mi}$ are distinct, so at most one $i=i(m)$ has
$\lambda_{mi}c_m=\nu$; set $\lambda_m:=\lambda_{m,i(m)}=\nu/c_m$ and let $G_\nu$ be the classes
contributing to $\nu$.

Distinct merged classes are non--cofoliar by
construction, so for $m\neq m'$ in $G_\nu$ the foliations $h_m/c_m$ and $h_{m'}/c_{m'}$ are
non--cofoliar. Hence Lemma~\ref{res:dissect} is applicable at each $\nu$.

Fix $\nu\neq0$; then each $\lambda_m=\nu/c_m\neq0$.
For each $m\in G_\nu$ and each $a$, consider the equation
\begin{equation}\label{eq:aux-ode}
  V_{ma}'+\lambda_m V_{ma}=R_{m,i(m),a}
\end{equation}
for an unknown polynomial $V_{ma}\in\C[z]$, where $R_{m,i(m),a}$ is the known polynomial from
\eqref{eq:Gp-coeff}. Write $R:=R_{m,i(m),a}$, $\lambda:=\lambda_m$, and seek
$V=\sum_{s=0}^{D}\sigma_s z^{s}$ with $D=\deg R$. In \eqref{eq:aux-ode} the top--degree
coefficient of $V'+\lambda V$ is $\lambda\sigma_{D}$ (the derivative $V'$ has degree $D-1$ and
does not reach degree $D$), so matching the coefficient of $z^{D}$ in $R$ gives
$\sigma_{D}=(\text{coeff of }z^{D}\text{ in }R)/\lambda$; this division is legitimate because
$\lambda\neq0$. Descending one degree at a time, the coefficient of $z^{s}$ gives
$\lambda\sigma_{s}=(\text{coeff of }z^{s}\text{ in }R)-(s+1)\sigma_{s+1}$, determining
$\sigma_{s}$ from the already--found $\sigma_{s+1}$, again by division by $\lambda$. Hence a
polynomial solution $S_{ma}$ exists (and has degree $\deg R$).

With $V_{ma}$ so chosen, the chain rule gives
\[
  \frac{d}{d\eps}\big[V_{ma}(h_m)e^{\lambda_m h_m}\big]
  =h_m'\,\big(V_{ma}'+\lambda_m V_{ma}\big)(h_m)\,e^{\lambda_m h_m}
  =h_m'\,R_{m,i(m),a}(h_m)\,e^{\lambda_m h_m},
\]
the last equality by \eqref{eq:aux-ode}. Thus the $m$--th summand of \eqref{eq:Gp-coeff}
(without its constant factor $c_m$) is exactly the $\eps$--derivative of
$V_{ma}(h_m)e^{\lambda_m h_m}$.

Summing over $m\in G_\nu$ with the factors $c_m$,
\eqref{eq:Gp-coeff} states that
\[
  \frac{d}{d\eps}\left[\sum_{m\in G_\nu}c_m\,V_{ma}(h_m)e^{\lambda_m h_m}\right]=0
  \qquad\text{on }\R.
\]
A function with vanishing derivative on the connected set $\R$ is constant, so
\[
  \sum_{m\in G_\nu}c_m\, V_{ma}\!\big(h_m(\eps)\big)\,e^{(\nu/c_m)h_m(\eps)}=K_{\nu,a}
\]
for some constant $K_{\nu,a}\in\C$.

This is now of the form \eqref{eq:onefreq} at frequency
$\nu$, with polynomial coefficients $S_m:=c_m V_{ma}$; the foliations $\{h_m/c_m\}_{m\in
G_\nu}$ are pairwise non--cofoliar, as noted above. Lemma~\ref{res:dissect} therefore gives
$c_m V_{ma}\equiv0$ for every $m\in G_\nu$, hence $V_{ma}\equiv0$ (as $c_m\neq0$). Feeding this
back into \eqref{eq:aux-ode},
\[
  R_{m,i(m),a}=V_{ma}'+\lambda_m V_{ma}\equiv0
\]
for every $m\in G_\nu$ and every $a$.

Since $\Th_{m_0}$ is not a polynomial, some $\lambda_{m_0i_0}\neq0$ with
$p_{m_0i_0}\not\equiv0$, so $\tilde p_{m_0i_0}\not\equiv0$. Set
$\nu=\lambda_{m_0i_0}c_{m_0}\neq0$. The $t^0$ coefficient of
$\tilde p_{m_0i_0}(c_{m_0}t+h_{m_0})$ is $\tilde p_{m_0i_0}(h_{m_0})$, so
$R_{m_0i_0 0}=\tilde p_{m_0i_0}\not\equiv0$. Contradiction.

\end{proof}

\begin{proof}[Proof of Lemma~\ref{res:growth}]
\emph{Linear independence.} Suppose $\sum_m\lambda_m h_m(\eps)\equiv\text{const}$ with not all
$\lambda_m$ zero, and let $M_0=\{m:\lambda_m\neq0\}\neq\varnothing$. The warps
$\{h_m\}_{m\in M_0}$ lie in the pooled family, and \ref{as:G} strictly orders them by growth
as $\eps\to+\infty$ (within a cofoliar class the ratio tends to a finite nonzero limit, but a
nontrivial linear relation among proportional warps already forces their coefficients to
vanish, so no two indices in $M_0$ are cofoliar and the strict order applies across all of
$M_0$). Let $m^\ast\in M_0$ be the fastest--growing. Dividing the relation by $h_{m^\ast}$,
\[
  \lambda_{m^\ast}+\sum_{m\in M_0\setminus\{m^\ast\}}\lambda_m\,\frac{h_m}{h_{m^\ast}}
  =\frac{\text{const}}{h_{m^\ast}} .
\]
As $\eps\to+\infty$, each ratio $h_m/h_{m^\ast}\to0$ (by the strict order) and
$h_{m^\ast}\to+\infty$ (\ref{as:A2}), so the right side and every sum term tend to $0$,
forcing $\lambda_{m^\ast}=0$, contradicting $m^\ast\in M_0$. Hence all $\lambda_m=0$, i.e.
\ref{as:A3} holds.

\emph{Strictness.} The warps of Remark~\ref{res:fail} are linearly independent yet do not
have pairwise distinct growth orders, so \ref{as:A3} does not imply \ref{as:G}.
\end{proof}



\subsection{Further results with exponential polynomials in the setting of non-affine $h_k$}
\begin{proposition}\label{res:N2}
Under \ref{as:A2}, \ref{as:AR}, \ref{as:C} and \ref{as:T2}, every function appearing in a collapse \eqref{eq:collapse} arising from \eqref{eq:axis} is an exponential polynomial.
\end{proposition}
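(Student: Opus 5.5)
Under \ref{as:T2}, on each axis $j$ the collapse \eqref{eq:collapse} has at most two foliation classes: $N\le 2$, since the classes of $S_j\sqcup\tS_j$ are controlled by the proportionality classes of the warps. The plan is to treat $N=1$ and $N=2$ separately. In each case the functional equation is reduced to a linear ODE with constant coefficients for $\Th_m''$ (or a derivative of it), whose solutions are exactly exponential polynomials.

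\emph{Case $N=1$.} The identity reads $\Th_1(c_1t+h_1(\eps))=A_j(t)+B_j(\eps)$. Apply $\partial_t\partial_\eps$ to get $c_1h_1'(\eps)\Th_1''(c_1t+h_1(\eps))=0$. Strict monotonicity and analyticity (\ref{as:A2}, \ref{as:AR}) give $h_1'\not\equiv0$ on an open set, so $\Th_1''\equiv0$ on an interval. By analyticity $\Th_1''\equiv 0$ everywhere, so $\Th_1$ is affine, which is in particular an exponential polynomial. Since $A_j,B_j$ are then affine in the respective variable plus constants, they are exponential polynomials as well.

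\emph{Case $N=2$.} Differentiating gives
\[
c_1h_1'(\eps)\Th_1''(u_1)+c_2h_2'(\eps)\Th_2''(u_2)=0,\qquad u_m=c_mt+h_m(\eps).
\]
Change variables from $(t,\eps)$ to $(u_1,\eps)$, which is legitimate because $c_1\neq0$. Then $u_2=(c_2/c_1)u_1+\psi(\eps)$ with $\psi:=h_2-(c_2/c_1)h_1$. Non-cofoliarity of the two classes together with Lemma~\ref{lem:cofoliar} shows that $\psi$ is non-constant, and by analyticity $\psi$ has nonvanishing derivative on an open set. Write $\kappa=c_2/c_1$ and $w(\eps)=-c_2h_2'/(c_1h_1')$. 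The relation becomes $\Th_1''(u)=w(\eps)\Th_2''(\kappa u+\psi(\eps))$ for $u$ in an interval and $\eps$ in an open set. If $\Th_2''\equiv0$, the relation forces $\Th_1''\equiv 0$ and both profiles are affine. Otherwise the plan is to differentiate in $u$ and in $\eps$ and eliminate $\eps$-dependence. Setting $\Phi=\Th_2''$, a point $\eps_0$ gives $\Th_1''(u)=w_0\Phi(\kappa u+s_0)$ for all $u$. Substituting this back yields $w_0\Phi(\kappa u+s_0)=w(\eps)\Phi(\kappa u+\psi(\eps))$. Hence $\Phi(v+s)=\omega(s)\Phi(v)$ for all shifts $s$ in an interval $\psi(I)-s_0$, where $\omega(s)=w_0/w(\eps)$ is expressed through a local inverse of $\psi$. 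This is a multiplicative Cauchy-type (d'Alembert/Levi-Civita) translation equation. Its analytic nonzero solutions are $\Phi(v)=Ce^{\mu v}$, as one sees by differentiating in $s$ at $s=0$ to get $\Phi'=\omega'(0)\Phi$. Integrating twice shows that $\Th_1$ and $\Th_2$ are exponential polynomials of the form $Ce^{\mu u}+$ affine. Feeding this back into \eqref{eq:collapse} then expresses $A_j$ and $B_j$, hence every function in the collapse, as exponential polynomials; the same reasoning covers the individual $g_k,\tg_l$ appearing in matched classes, up to the polynomial slack.

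\emph{Main obstacle.} The hard step is the $N=2$ reduction to the translation equation, and in particular making sure that $\omega$ is well defined and differentiable. Dividing by $w(\eps)$ and $\Phi$ requires working away from zeros, which analyticity makes isolated. Since $\psi$ is not assumed injective globally, only a local inverse of $\psi$ on a small interval can be used. Local solutions are then extended globally by analytic continuation under \ref{as:AR}. If it is instead needed to cover the matched-class profiles $\Th_m=g_k-\tg_l(\alpha\cdot)$ themselves rather than the $g_k$ individually, I would stop at the level of $\Th_m$, which is exactly what the statement asks for.
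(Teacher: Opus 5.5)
Your argument is correct and has the same skeleton as the paper's proof. You apply $\partial_t\partial_\varepsilon$ to remove $A_j,B_j$, take $N\le 2$ from (N2), settle $N=1$ by analyticity, and for $N=2$ eliminate $\Theta_1$ by moving along level sets of $u_1$ to get an exponential ODE for $\Theta_2''$. The difference lies in how the $N=2$ step is organised.

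\emph{The paper's route.} It applies $L=h_1'(\varepsilon)\partial_t-c_1\partial_\varepsilon$. This is $-c_1\partial_\varepsilon$ at fixed $u_1$, i.e.\ your variation in $(u_1,\varepsilon)$-coordinates taken infinitesimally. It obtains $\Psi_2'=\kappa(\varepsilon)\Psi_2$ only on the short interval $I_2(\varepsilon)$, with an $\varepsilon$-dependent rate. It must then:
show $\kappa$ is constant through overlapping intervals;
chain across the discrete inadmissible set;
cover $\mathbb{R}$ using $h_2\to\pm\infty$;
and rerun the argument with $h_2'\partial_t-c_2\partial_\varepsilon$ to treat $\Theta_1$.

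\emph{Your route.} You freeze one $\varepsilon_0$ and globalise $\Theta_1''(u)=w_0\Phi(c_2u/c_1+s_0)$ by the identity theorem. You then compare two values of $\varepsilon$ directly, so the translation equation gives a constant rate at once, and $\Theta_1$ follows from the frozen relation.

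The obstacle you flag can be removed. Eliminating $w(\varepsilon)$ between two values of $v$ gives $\Phi(v)\Phi(v_*+s)=\Phi(v+s)\Phi(v_*)$ for all $s$ in the nondegenerate interval swept out by $\psi(\varepsilon)-s_0$ as $\varepsilon$ varies near $\varepsilon_0$. This identity is analytic in $s$, hence holds for all $s$, and differentiating at $s=0$ gives $\Phi'=\mu\Phi$ with no local inverse of $\psi$.

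What each route buys: yours is shorter but leans on analyticity of the profiles on all of $\mathbb{R}$. The paper's works interval by interval and uses the unboundedness of $h_2$ instead, so it relies less on profile analyticity.

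Two side claims in your proposal are wrong and should be dropped. The proposition, as the paper proves and uses it, concerns only the profiles $\Theta_m$.
\begin{itemize}
\item $A_j$ and $B_j$ are not exponential polynomials in general. For $N=1$ with $\Theta_1(u)=pu+q$ one gets $B_j(\varepsilon)=p\,h_1(\varepsilon)+\text{const}$, which is affine in $h_1(\varepsilon)$, not in $\varepsilon$ (take $h_1=\operatorname{arcsinh}$). For $N=2$ the separable part likewise contains $h_1(\varepsilon)$ and $h_2(\varepsilon)$.
\item For a matched class only $\Theta_m=g_k-\tilde g_l(\alpha\,\cdot)$ is constrained, so nothing follows for $g_k$ or $\tilde g_l$ individually.
\end{itemize}
A minor slip: when $\mu=0$, integrating twice gives a quadratic rather than $Ce^{\mu u}$ plus an affine function. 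It is still an exponential polynomial.

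Finally, your justification of $N\le 2$ is no more complete than the paper's one-line appeal to (N2). As stated, (N2) bounds the proportionality classes within $S_j$ only, whereas the collapse pools $S_j\sqcup\widetilde S_j$. Foliation classes can also be finer than proportionality classes, since proportional warps with mismatched slopes are not cofoliar. Both arguments therefore tacitly read (N2) as a bound on the number of foliation classes per axis.
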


The remaining result discusses the case of complex singularity.
\begin{theorem}\label{res:sing}
Under \ref{as:A2}, \ref{as:AR}, \ref{as:C}, and \ref{as:Sing}, the pair of parametrizations inducing the same distribution on $\mathcal{X}_{\mathrm{tr}}$ is rigid. 
\end{theorem}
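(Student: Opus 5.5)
We argue by contradiction and work one axis at a time, using complex-analytic continuation in the axis variable $t$. Suppose the pair is not rigid. Then on some axis $j$ the grouped identity \eqref{eq:axis} yields a collapse \eqref{eq:collapse},
\[
\sum_{m=1}^N \Th_m\bigl(c_m t+h_m(\eps)\bigr)=A_j(t)+B_j(\eps)\quad\text{on }U_j ,
\]
with pairwise non-cofoliar foliations $h_m/c_m$ and some $\Th_{m_0}\notin\cP$.

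\emph{Step 1: every non-polynomial profile has singularities.} Each $\Th_m$ is either $g_k$, or $-\tg_l$, or a matched difference $g_k-\tg_l(\alpha\,\cdot)$. By \ref{as:Sing}, every $\Th_m$ is therefore either a polynomial or has some derivative $\Th_m^{(r_m)}$ that extends holomorphically to $\C\setminus\mathcal J_m$. Here $\mathcal J_m$ is nonempty, discrete, and consists of non-removable singularities. This holds in particular for $m_0$.

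\emph{Step 2: remove the separable side.} Set $r:=1+\max_m r_m$, taken over the non-polynomial classes, and choose $r$ larger than the degrees of the polynomial $\Th_m$. Apply $\partial_t^{\,r-1}\partial_\eps$ to the collapse. This kills $A_j(t)$, $B_j(\eps)$ and all polynomial classes, and leaves
\[
\sum_{m} c_m^{\,r-1}h_m'(\eps)\,\Th_m^{(r)}\bigl(c_mt+h_m(\eps)\bigr)=0 .
\]
Differentiating a holomorphic function with a non-removable isolated singularity keeps that singularity non-removable. Indeed, if $f'$ had a removable singularity, its Laurent expansion would have no principal part, and then neither would $f$'s; no residue obstruction arises because we differentiate rather than integrate. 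So each surviving $\Th_m^{(r)}$ is holomorphic on $\C\setminus\mathcal J_m$ with non-removable singularities at every point of $\mathcal J_m$.

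Since the $h_m$ are real analytic and strictly monotone (\ref{as:A2}, \ref{as:AR}), each $h_m'$ vanishes only on a discrete set. We fix a real $\eps$ outside the union of these sets.

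\emph{Step 3: continue in $t$.} For this fixed $\eps$, the left side is holomorphic in $t$ on $\C$ minus the discrete set
\[
\bigcup_m\bigl\{(\zeta-h_m(\eps))/c_m:\ \zeta\in\mathcal J_m\bigr\}.
\]
This complement is connected, and the left side vanishes on the real interval $(a_j,A_j)$. By the identity theorem it vanishes on the whole domain. Now pick $\zeta_0\in\mathcal J_{m_0}$ and the point $t_0(\eps)=(\zeta_0-h_{m_0}(\eps))/c_{m_0}$. The $m_0$ term has a non-removable singularity at $t_0(\eps)$, and its coefficient $c_{m_0}^{r-1}h_{m_0}'(\eps)$ is nonzero. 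If no other term were singular at $t_0(\eps)$, the sum would have a non-removable singularity there, which contradicts its being identically $0$. Hence for every admissible $\eps$ there exist $m\neq m_0$ and $\zeta\in\mathcal J_m$ with
\[
\frac{\zeta-h_m(\eps)}{c_m}=\frac{\zeta_0-h_{m_0}(\eps)}{c_{m_0}} .
\]

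\emph{Step 4: force cofoliarity.} There are only countably many pairs $(m,\zeta)$, while the admissible $\eps$ form an uncountable set. So one fixed pair works for uncountably many $\eps$, and hence on a set with an accumulation point. Both sides of the displayed equation are real-analytic in $\eps$, so it holds for all $\eps$. This gives $h_m/c_m-h_{m_0}/c_{m_0}\equiv$ const. Evaluating at $\eps=0$ with $h_m(0)=h_{m_0}(0)=0$ shows the constant is zero. Hence $h_m=(c_m/c_{m_0})h_{m_0}$, and by Lemma~\ref{lem:cofoliar} the classes $m$ and $m_0$ are cofoliar. This contradicts the construction of the collapse, so no collapse exists and the pair is rigid.

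The main obstacle I expect is in Steps 2--3. One must make sure that the singularity of the $m_0$ term genuinely survives: matched-difference profiles might have had singularities cancel, but this is exactly what the second clause of \ref{as:Sing} excludes. One must also handle the real parameter $\eps$ carefully while continuing in complex $t$. A further point is that $\zeta$ may be non-real, so the identity in Step 4 is complex-valued with real-analytic dependence on $\eps$. Taking real parts, and noting that imaginary parts must agree as constants, handles this.
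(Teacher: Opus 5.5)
Your proof is correct and follows the same route as the paper. You apply $\partial_\eps\partial_t^{\,r-1}$, continue in complex $t$ off a countable discrete set, force another summand to be singular at $t_0(\eps)$, and exclude this using countability of the $\mathcal J_m$ and real-analyticity in $\eps$; your pigeonhole-plus-identity-theorem step is the contrapositive of the paper's proof that each $Z_m$ is countable, and restricting to $\eps$ with $h_{m_0}'(\eps)\neq0$ is slightly more careful than the paper. One small slip: when $r=1$, the operator $\partial_\eps$ leaves $B_j'(\eps)$ on the right rather than $0$; this is constant in $t$, so the singularity argument is unaffected (the paper keeps it as $\beta(\eps)$), or you can simply take $r\ge 2$.
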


\begin{proof}[Proof of Proposition~\ref{res:N2}]
By \ref{as:T2}, $N\le2$. We must
show that every $\Th_m$, $m=1,2$, is an exponential polynomial (as in \ref{as:E}).

We use repeatedly the following elementary fact: a real--analytic
function on $\R$ (or on an interval) that is not identically zero has only isolated zeros. This
is immediate from the identity theorem, since a real--analytic function extends locally to a
holomorphic function, and the zero set of a holomorphic function that is not identically zero on
a connected domain has no accumulation point in that domain. We call a point of the domain
\emph{good} for a given real--analytic function if the function is nonzero there; the set of
good points is then the complement of a discrete (in particular countable) set.

By \ref{as:A2}, each $h_m$ is strictly monotone, hence $h_m'\ge0$ (or $\le0$) throughout its
domain; since $h_m$ is non--constant (indeed unbounded, by \ref{as:A2}), $h_m'\not\equiv0$, so
$h_m'$ vanishes only on a discrete set, and is nonzero off it.

Apply $\partial_t\partial_\eps$ to \eqref{eq:collapse} to make the right-hand side vanish, let $\Psi_m:=\Th_m''$, then
\begin{equation}\label{eq:N2-reduced}
  \sum_{m=1}^N c_m\,h_m'(\eps)\,\Psi_m\big(c_mt+h_m(\eps)\big)=0\qquad\text{on }U_j.
\end{equation}

\textit{The case $N=1$.}
Equation \eqref{eq:N2-reduced} reads $c_1h_1'(\eps)\Psi_1(c_1t+h_1(\eps))=0$. Fix $\eps$ good
for $h_1'$ (i.e.\ $h_1'(\eps)\neq0$); since $c_1\neq0$, this forces
$\Psi_1(c_1t+h_1(\eps))=0$ for every $t\in(a_j,A_j)$, i.e.\ $\Psi_1$ vanishes on the interval
$I_1(\eps):=c_1(a_j,A_j)+h_1(\eps)$. As $\eps$ ranges over the good set (dense, and cofinal in both directions since $h_1(\eps)\to\pm\infty$ by \ref{as:A2}), the
intervals $I_1(\eps)$ have fixed length $|c_1|(A_j-a_j)>0$ and slide continuously with
$h_1(\eps)$; consecutive good $\eps$'s give overlapping intervals (by continuity of $h_1$, since
the excluded bad set is discrete, hence contains, near any point, arbitrarily close good points
on both sides), so their union is a single unbounded interval, in fact all of $\R$. Thus
$\Psi_1\equiv0$ on a set with nonempty interior, and since $\Psi_1=\Th_1''$ is real--analytic on
$\R$ (as $\Th_1$ is, by \ref{as:A3}), the identity theorem gives $\Psi_1\equiv0$ on all of $\R$.
Hence $\Th_1$ is affine, $\Th_1(u)=pu+q$, an exponential polynomial with sole frequency $0$.

\textit{The case $N=2$, reduction to a scalar ODE.}
Write $b_m(\eps):=c_mh_m'(\eps)$ and $u_m:=c_mt+h_m(\eps)$, $m=1,2$; \eqref{eq:N2-reduced} reads
\begin{equation}\label{eq:N2-two}
  b_1(\eps)\Psi_1(u_1)+b_2(\eps)\Psi_2(u_2)=0\qquad\text{on }U_j.
\end{equation}
If $\Psi_2\equiv0$, then \eqref{eq:N2-two} gives $b_1(\eps)\Psi_1(u_1)=0$ for all $(t,\eps)$;
since $b_1\not\equiv0$ and $u_1$ ranges over an open interval as $t$ varies (for $\eps$ with
$b_1(\eps)\neq0$), the same argument as in the case $N=1$ gives $\Psi_1\equiv0$ on $\R$. Both $\Th_1,\Th_2$
are then affine, and the case is settled as in $N=1$. Assume henceforth $\Psi_1,\Psi_2\not\equiv0$.

Since the foliations of the two classes are non--cofoliar, Lemma~\ref{lem:cofoliar} rules out
$h_2=\alpha h_1$, $c_2=\alpha c_1$ for any $\alpha\neq0$; equivalently, writing
\[
  W_{12}(\eps):=c_2h_1'(\eps)-c_1h_2'(\eps),
\]
$W_{12}\not\equiv0$ (a relation $W_{12}\equiv0$ together with strict monotonicity of $h_1,h_2$
would integrate, via $h_1(0)=h_2(0)=0$ under \ref{as:A2}, to $h_2=(c_2/c_1)h_1$, cofoliar with
$\alpha=c_2/c_1$, excluded). As $W_{12}$ is real--analytic and not identically zero, it is
nonzero off a discrete set. Call $\eps$ \emph{admissible} if $b_1(\eps),b_2(\eps),W_{12}(\eps)$
are all nonzero; the inadmissible set is a finite union of discrete sets, hence itself discrete.

Fix an admissible $\eps$. The map $(t,\eps)\mapsto(u_1,u_2)$ has Jacobian
$c_1h_2'(\eps)-c_2h_1'(\eps)=-W_{12}(\eps)\neq0$, so on a neighbourhood of $(t,\eps)$ (for any $t$)
$(u_1,u_2)$ form local coordinates. Consider the vector field
\[
  L:=h_1'(\eps)\,\partial_t-c_1\,\partial_\eps .
\]
Directly, $Lu_1=h_1'(\eps)c_1-c_1h_1'(\eps)=0$ and $Lu_2=h_1'(\eps)c_2-c_1h_2'(\eps)=W_{12}(\eps)$;
and, since $b_m$ depends on $\eps$ alone, $Lb_m=-c_1b_m'(\eps)$.

Applying $L$ to the identity \eqref{eq:N2-two} (valid on all of $U_j$, so $L$ applied to it is
again $\equiv0$ on $U_j$) gives, by the product and chain rules,
\[
  -c_1b_1'(\eps)\Psi_1(u_1)+b_1(\eps)\Psi_1'(u_1)\cdot0
  -c_1b_2'(\eps)\Psi_2(u_2)+b_2(\eps)\Psi_2'(u_2)\,W_{12}(\eps)=0.
\]
Using \eqref{eq:N2-two} to substitute $\Psi_1(u_1)=-(b_2(\eps)/b_1(\eps))\Psi_2(u_2)$ (legitimate
since $b_1(\eps)\neq0$) and simplifying,
\[
  c_1\Big(\frac{b_1'(\eps)}{b_1(\eps)}-\frac{b_2'(\eps)}{b_2(\eps)}\Big)\Psi_2(u_2)
  +W_{12}(\eps)\,\Psi_2'(u_2)=0,
\]
so that, dividing by $W_{12}(\eps)\neq0$,
\begin{equation}\label{eq:N2-ode}
  \Psi_2'(u_2)=\kappa(\eps)\,\Psi_2(u_2),\qquad
  \kappa(\eps):=\frac{c_1\big(b_2'(\eps)b_1(\eps)-b_1'(\eps)b_2(\eps)\big)}{b_1(\eps)b_2(\eps)W_{12}(\eps)}.
\end{equation}

Fix an admissible $\eps_0$. As $t$ ranges over $(a_j,A_j)$, $u_2=c_2t+h_2(\eps_0)$ ranges over
the open interval $I_2(\eps_0):=c_2(a_j,A_j)+h_2(\eps_0)$, and \eqref{eq:N2-ode} says $\Psi_2$
solves the linear ODE $y'=\kappa(\eps_0)y$ (constant coefficient, since $\kappa(\eps_0)$ does not
depend on $t$) on $I_2(\eps_0)$. Since $\Psi_2\not\equiv0$ and is real--analytic, it has at most
isolated zeros on $I_2(\eps_0)$, so there is $u_*\in I_2(\eps_0)$ with $\Psi_2(u_*)\neq0$; the
unique solution of the ODE through $(u_*,\Psi_2(u_*))$ is $Ce^{\kappa(\eps_0)u}$ with
$C=\Psi_2(u_*)e^{-\kappa(\eps_0)u_*}$, and by uniqueness for linear ODEs this solution agrees
with $\Psi_2$ on the whole interval $I_2(\eps_0)$:
\begin{equation}\label{eq:N2-local}
  \Psi_2(u)=C(\eps_0)\,e^{\kappa(\eps_0)u}\qquad(u\in I_2(\eps_0)).
\end{equation}

Let $\eps_0,\eps_1$ be admissible with overlapping intervals, $I_2(\eps_0)\cap I_2(\eps_1)\neq
\varnothing$; this holds whenever $|h_2(\eps_0)-h_2(\eps_1)|<|c_2|(A_j-a_j)$, in particular for
$\eps_1$ close enough to $\eps_0$, by continuity of $h_2$. On the (nonempty, open) overlap,
\eqref{eq:N2-local} gives $C(\eps_0)e^{\kappa(\eps_0)u}=C(\eps_1)e^{\kappa(\eps_1)u}$ for all $u$
in an interval; dividing, $C(\eps_0)/C(\eps_1)=e^{(\kappa(\eps_1)-\kappa(\eps_0))u}$ for all such
$u$, and the left side is constant in $u$ while the right side is constant in $u$ only if
$\kappa(\eps_0)=\kappa(\eps_1)$ (a nonconstant exponential cannot equal a constant on an
interval with more than one point). Hence  also
$C(\eps_0)=C(\eps_1)$.

The inadmissible set is discrete, hence has no accumulation point, so it meets any bounded
interval in only finitely many points. Consequently, for any two admissible $\eps,\eps'$, finitely
many applications of the overlap argument above (stepping across each inadmissible point in
$[\eps,\eps']$ by choosing admissible values arbitrarily close on either side, which exist since
the admissible set is dense) show $\kappa(\eps)=\kappa(\eps')$. Thus $\kappa(\eps_0)\equiv
\lambda$ and $C(\eps_0)\equiv C$ are the same constants for every admissible $\eps_0$.

By \eqref{eq:N2-local}, $\Psi_2(u)=Ce^{\lambda u}$ for every $u\in I_2(\eps_0)$, over all
admissible $\eps_0$. As in Step~2, the admissible set is co--countable and $h_2(\eps)\to\pm\infty$
as $\eps\to\pm\infty$ (\ref{as:A2}), so the intervals $I_2(\eps_0)$, of fixed length
$|c_2|(A_j-a_j)$ and sliding continuously and unboundedly with $h_2(\eps_0)$, have union equal to
all of $\R$. Hence $\Psi_2(u)=Ce^{\lambda u}$ for every $u\in\R$; equivalently
$\Th_2''\equiv Ce^{\lambda u}$ on $\R$.

If $\lambda\neq0$, integrating twice gives
\[
  \Th_2(u)=\frac{C}{\lambda^2}e^{\lambda u}+d_1u+d_2
\]
for constants $d_1,d_2$, an exponential polynomial with frequencies $\{0,\lambda\}$. If $\lambda=0$, then $\Th_2''\equiv C$, so
$\Th_2(u)=\tfrac{C}{2}u^2+d_1u+d_2$, a polynomial of degree at most $2$, an exponential
polynomial with sole frequency $0$. In either case $\Th_2$ is an exponential polynomial.

The equation \eqref{eq:N2-two} is symmetric under exchanging the indices $1,2$ (equivalently,
one may run Steps~3--7 with the vector field $L':=h_2'(\eps)\partial_t-c_2\partial_\eps$, which
satisfies $L'u_2=0$), so the same argument shows $\Th_1$ is an exponential polynomial. Hence, every profile $\Th_m$ appearing in the is an exponential
polynomial. 
\end{proof}

\begin{lemma}\label{lem:persist}
Let $\varphi$ be holomorphic on $\C\setminus S$ with $S$ discrete and no point of $S$
a removable singularity of $\varphi$. Then $\varphi'$ is holomorphic on $\C\setminus S$
(the derivative of a holomorphic function is again holomorphic on the same domain),
and no point of $S$ is a removable singularity of $\varphi'$.
\end{lemma}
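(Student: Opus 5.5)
The claim that $\varphi'$ is holomorphic on $\C\setminus S$ is standard: holomorphic functions are infinitely complex-differentiable on their open domain, and $\C\setminus S$ is open because $S$ is discrete and closed. The substance is the second claim. I will argue by contradiction. Suppose some $s\in S$ were a removable singularity of $\varphi'$. Then $\varphi'$ extends holomorphically to a disc $D=D(s,r)$ with $D\cap S=\{s\}$; such an $r$ exists because $S$ is discrete.

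The plan is to integrate. On the simply connected disc $D$, the extended $\varphi'$ has a holomorphic primitive $\Psi$ with $\Psi(z_0)=\varphi(z_0)$ for a fixed base point $z_0\in D\setminus\{s\}$. On the punctured disc $D^\ast=D\setminus\{s\}$, both $\varphi$ and $\Psi$ are holomorphic with the same derivative. Since $D^\ast$ is connected, $\varphi-\Psi$ has zero derivative there and so equals the constant $\varphi(z_0)-\Psi(z_0)=0$. Hence $\varphi=\Psi$ on $D^\ast$, and $\Psi$ is a holomorphic extension of $\varphi$ across $s$. This makes $s$ a removable singularity of $\varphi$, which contradicts the hypothesis.

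The only point needing care is that connectedness, not simple connectedness, of $D^\ast$ is what gives $\varphi-\Psi$ constant. The primitive is constructed on the full disc $D$, so no period or residue issue arises. Equivalently, one can phrase the argument through the Laurent expansion of $\varphi$ at $s$: differentiating term by term sends each negative power $z^{-n}$ with $n\ge1$ to a nonzero multiple of $z^{-n-1}$. The principal part of $\varphi'$ therefore vanishes only if that of $\varphi$ does, and a nonzero principal part of $\varphi'$ is exactly what non-removability of $s$ for $\varphi'$ means. I expect no real obstacle beyond stating this cleanly.
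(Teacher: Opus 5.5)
Your argument is correct and is essentially the paper's own proof. Both assume $\varphi'$ extends across a point of $S$, take a primitive of the extension on a disc meeting $S$ only at that point, and use connectedness of the punctured disc to show that $\varphi$ differs from the primitive by a constant, so $\varphi$ would extend across the point. Your base-point normalisation of the primitive and your Laurent-series reformulation are harmless additions to that same idea.
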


\begin{proof}[Proof of Lemma~\ref{lem:persist}]
Only the second assertion requires an argument: we show that if
$\varphi'$ were removable at a point, then $\varphi$ would have been removable there too.
Suppose $\varphi'$ extends holomorphically across some $\zeta_0\in S$. Take a disc $\Delta$ at
$\zeta_0$ with $\Delta\cap S=\{\zeta_0\}$, and let $H$ be a primitive on $\Delta$ of that
extension --- available because $\Delta$ is simply connected. On the punctured disc
$\Delta\setminus\{\zeta_0\}$, which is connected, $(\varphi-H)'=\varphi'-H'\equiv0$, so
$\varphi=H+\mathrm{const}$ there. But $H+\mathrm{const}$ is holomorphic on all of $\Delta$,
so this exhibits a holomorphic extension of $\varphi$ across $\zeta_0$, contradicting that
$\zeta_0$ is non--removable for $\varphi$.
\end{proof}

\begin{proof}[Proof of Theorem~\ref{res:sing}]
    suppose some $\mathcal{J}_{m_0}\neq\varnothing$. Then \eqref{eq:collapse} has no solution. Equivalently, if \eqref{eq:collapse} holds and every $\Th_m$ satisfies \ref{as:Sing}, then every $\Th_m$ is entire. Relabel $m_0=1$ and fix $\zeta_0\in \mathcal{J}_1$.

    With $r:=\max\{1,r_1,\dots,r_N\}$, apply Lemma~\ref{lem:persist} $r-r_m$ times to each
    $\Th_m^{(r_m)}$: by induction, each $\Th_m^{(r)}$ is holomorphic on $\C\setminus \mathcal{J}_m$ and no
    point of $\mathcal{J}_m$ is removable for it. All profiles are now differentiated to the same order,
    with their singular sets unchanged.
    
    Two properties of $\mathcal{J}_m$ are needed. (1) $\mathcal{J}_m\cap\R=\varnothing$: the singularities are complex. Indeed $\Th_m^{(r)}$ is real--analytic on $\R$ by \ref{as:AR} (derivatives of
    real--analytic functions are real--analytic), so a real point $x_0\in \mathcal{J}_m$ would be removable.
    To see this, real--analyticity at $x_0$ means the Taylor series of $\Th_m^{(r)}$ at $x_0$
    converges on a real interval around $x_0$, and the same series, viewed as a function of a
    complex variable, converges on some complex disc about $x_0$ and defines a holomorphic
    function there. This new function agrees with $\Th_m^{(r)}$ itself (as already defined,
    holomorphically, on the punctured disc) along that real interval without $x_0$, a set with
    an accumulation point, so by the identity theorem the two agree on the whole punctured
    disc. Hence $\Th_m^{(r)}$, originally only known to be holomorphic near $x_0$ with $x_0$
    removed, in fact extends holomorphically across $x_0$, this contradicting non--removability.
    (2) $\mathcal{J}_m$ is countable. Being discrete and closed, $\mathcal{J}_m$ can only meet any given closed
    disc in finitely many points, if infinitely many points of $\mathcal{J}_m$ lay in one disc, that disc
    being compact would force those points to accumulate somewhere in it, contradicting
    discreteness. Now cover $\C$ by the closed discs of integer radius centred at the origin,
    $\{|z|\le n\}$ for $n=1,2,3,\dots$: every point of $\C$ lies in one of them, so this is a
    countable collection of discs whose union is all of $\C$. Since $\mathcal{J}_m$ meets each such disc in
    only finitely many points, $\mathcal{J}_m$ itself is a countable union of finite sets, hence countable.
    Consequently, for every $\eps$ the set of $t\in\C$ at which \emph{some} summand of the
    differentiated identity below is singular, namely
    \[
      D(\eps):=\bigcup_{m=1}^N c_m^{-1}\big(\mathcal{J}_m-h_m(\eps)\big),
    \]
    is discrete, closed, countable, and disjoint from $\R$: each constituent is the image of
    $\mathcal{J}_m$ under the affine map $z\mapsto(z-h_m(\eps))/c_m$, which has real
    coefficients and therefore preserves both discreteness and the property of avoiding $\R$;
    and a finite union of closed discrete sets is closed and discrete.
        
    Applying $\partial_\eps\partial_t^{\,r-1}$ to \eqref{eq:collapse}; gives
    \begin{equation}\label{eq:Rr}
      \sum_{m=1}^N c_m^{\,r-1}h_m'(\eps)\,\Th_m^{(r)}\!\big(c_mt+h_m(\eps)\big)
      =\beta(\eps):=\begin{cases} B'(\eps), & r=1,\\[1pt] 0, & r\ge2,\end{cases}
      \qquad\text{on }U,
      \tag{R$^{(r)}$}
    \end{equation}
    constant in $t$ in either case. The single
    $\eps$--derivative is what eliminates $A(t)$, and with it any need to track the singularities
    of the main effect; the $r-1$ derivatives in $t$ raise the profiles to the order at which
    \ref{as:Sing} applies to all of them simultaneously.

    Fix $\eps$. Read \eqref{eq:Rr} as an identity in $t$. Its left side is holomorphic on
    $\Omega_\eps:=\C\setminus D(\eps)$, by construction of $D(\eps)$, and $\Omega_\eps\supset
    (a,A_0)$ because $D(\eps)$ avoids $\R$. Moreover $\Omega_\eps$ is connected: given two of its
    points, consider the uncountable family of circular arcs joining them: distinct arcs meet only
    at the two endpoints, so each point of the countable set $D(\eps)$ obstructs at most one arc,
    and an unobstructed arc remains, giving a path in $\Omega_\eps$. Both sides of \eqref{eq:Rr}
    are therefore holomorphic on the connected open set $\Omega_\eps$ and agree on the interval
    $(a,A_0)$, which has accumulation points; by the identity theorem they agree on all of
    $\Omega_\eps$. The identity, originally valid only on a real strip, now holds for the
    punctured plane.

    The map $t\mapsto c_1t+h_1(\eps)$ is an affine map of $\C$ carrying
    \[
      t_1(\eps):=\frac{\zeta_0-h_1(\eps)}{c_1}
    \]
    to $\zeta_0\in \mathcal{J}_1$; the first summand of \eqref{eq:Rr} has a non--removable singularity at $t_1(\eps)$,
    its scalar prefactor $c_1^{\,r-1}h_1'(\eps)$ being nonzero. As $\eps$
    ranges over $\R$, the point $t_1(\eps)$ traces a real--analytic arc in the complex
    $t$--plane.

    We claim that at each $\eps$ some other summand must be singular at precisely the same point
    $t_1(\eps)$. Suppose not, i.e.\ that for some $\eps$ we have $c_mt_1(\eps)+h_m(\eps)\notin
    \mathcal{J}_m$ for all $m\ge2$. Choose a disc $\Delta$ at $t_1(\eps)$ with $\Delta\cap
    D(\eps)=\{t_1(\eps)\}$, possible since $D(\eps)$ is discrete. Every summand with $m\ge2$ is
    then holomorphic on all of $\Delta$: its singularities lie in $c_m^{-1}(\mathcal{J}_m-h_m(\eps))\subset
    D(\eps)$, and the only point of $D(\eps)$ in $\Delta$ is $t_1(\eps)$, which by assumption is
    not one of them. On $\Delta\setminus\{t_1(\eps)\}\subset\Omega_\eps$ we may therefore
    rearrange \eqref{eq:Rr} to isolate the first summand:
    \[
      c_1^{\,r-1}h_1'(\eps)\,\Th_1^{(r)}\!\big(c_1t+h_1(\eps)\big)
      =\beta(\eps)-\sum_{m\ge2}c_m^{\,r-1}h_m'(\eps)\,\Th_m^{(r)}\!\big(c_mt+h_m(\eps)\big).
    \]
    Every term on the right is holomorphic on the \emph{full} disc $\Delta$, so the left side
    extends holomorphically across $t_1(\eps)$; dividing by the nonzero constant
    $c_1^{\,r-1}h_1'(\eps)$, so does $\Th_1^{(r)}(c_1\cdot+h_1(\eps))$, contradicting
    non--removability. Hence for \emph{every} $\eps\in\R$ there is some $m\ge2$ with
    $c_mt_1(\eps)+h_m(\eps)\in \mathcal{J}_m$. (For $N=1$ there is no such $m$ and the proof already ends
    here.)

    It remains to show that this cannot hold for all $\eps$. Put $w_m(\eps):=c_mt_1(\eps)+h_m(\eps)$
    and $Z_m:=\{\eps\in\R:w_m(\eps)\in \mathcal{J}_m\}$ for $m\ge2$; the previous step says exactly
    $\R=\bigcup_{m=2}^N Z_m$, a finite union. We show each $Z_m$ is countable.
    
    Substituting $t_1(\eps)$,
    \[
      w_m=\tfrac{c_m}{c_1}\zeta_0+h_m-\tfrac{c_m}{c_1}h_1 .
    \]
    Since $h_1,h_m$ are real--valued, the imaginary part is frozen,
    $\Im w_m\equiv\tfrac{c_m}{c_1}\Im\zeta_0$, so $w_m$ moves along a fixed horizontal line;
    whereas the real part moves nondegenerately,
    \[
      (\Re w_m)'=h_m'-\tfrac{c_m}{c_1}h_1'=-\tfrac1{c_1}W_{1m}\not\equiv0
    \]
    by non--cofoliarity of $(c_1,h_1)$ and $(c_m,h_m)$ (Lemma~\ref{lem:cofoliar}). Now $S_m$ is countable, so it meets that horizontal line in a countable set $\{s_{m,k}\}_k$,
    and $w_m(\eps)\in S_m$ forces $\Re w_m(\eps)=\Re s_{m,k}$ for some $k$. Hence
    \[
      Z_m=\bigcup_k\{\eps\in\R:\Re w_m(\eps)=\Re s_{m,k}\},
    \]
    a countable union of zero sets of real--analytic functions, none identically zero because
    $\Re w_m$ is non--constant. The zeros of a non--trivial real--analytic function are isolated,
    hence countably many, so each such set is countable and therefore so is $Z_m$. Thus $\R$ is a
    finite union of countable sets, contradiction.
    
\end{proof}

\subsection{Identification with polynomial functions \(g_k\)}
\label{sec:polynomial-identification}

This section replaces the standing non-polynomial-profile assumption by
the following polynomial assumption; the two assumptions are not imposed
simultaneously. Fix an integer \(D\geq 2\) and set
\[
    D_\star:=\max\{2,D-1\}.
\]

\begin{enumerate}[leftmargin=3.2em]
\Bitem{P}\label{as:polynomial}
\emph{Polynomial profiles.}
For both parameterizations,
\begin{equation*}
    g_k(u)=\sum_{s=0}^{d_k}a_{k,s}u^s,
    \qquad
    \tg_\ell(u)=\sum_{s=0}^{\widetilde d_\ell}
        \widetilde a_{\ell,s}u^s,
\end{equation*}
where
\(
    2\leq d_k,\widetilde d_\ell\leq D,
  \enspace
    a_{k,d_k}\neq0,
\enspace
    \widetilde a_{\ell,\widetilde d_\ell}\neq0.
\)
Coefficients above the degree of a profile are
defined to be zero.

\Bitem{LF}\label{as:LF}
\emph{Pooled finite-power independence.}
For the two parameterizations under comparison, form the pooled warp family
\begin{equation*}
    \mathcal H
    :=
    \{h_k:1\leq k\leq K\}
    \cup
    \{\th_\ell:1\leq \ell\leq\tK\}.
\end{equation*}
Let \(\mathfrak C\) be the set of proportionality classes in
\(\mathcal H\), and choose one representative \(H_C\) from each class.
Then
\begin{equation*}
    \{1\}
    \cup
    \{H_C^s:C\in\mathfrak C,\ 1\leq s\leq D_\star\}
\end{equation*}
is linearly independent as a family of functions of \(\eps\).
\end{enumerate}

The validity of \ref{as:LF} is independent of the representatives \(H_C\):
replacing \(H_C\) by \(\alpha_C H_C\), with \(\alpha_C\neq0\), merely
multiplies the column \(H_C^s\) by the nonzero number \(\alpha_C^s\).
Since the warps are real analytic, \ref{as:LF} is equivalently the
Wronskian condition
\begin{equation}\label{eq:pid-Wronskian}
    W\!\left(
        1,
        (H_C^s)_{\substack{C\in\mathfrak C\\1\leq s\leq D_\star}}
    \right)
    \not\equiv0.
\end{equation}

Assumption \ref{as:LF} is a condition on a \emph{pooled comparison} of two
parameterizations. Accordingly, the theorem below is first stated
pairwise. It implies identifiability of a model class when
\ref{as:LF} is required for every observationally equivalent pair in a distributional sense on $\Xtr$ in
that class.

\begin{lemma}[Differentiated power independence]
\label{lem:pid-degree-shift}
Let \(J\subseteq\R\) be a connected interval, for an integer \(Q\geq0\), and let
\(\{H_C\}_{C\in\mathfrak C}\subset C^1(J)\) be finite. Then
\begin{equation}\label{eq:pid-degree-shift-equivalence}
    \{1\}\cup
    \{H_C^r:C\in\mathfrak C,\ 1\leq r\leq Q+1\}
    \text{ is linearly independent}
\end{equation}
if and only if
\begin{equation*}
    \{H_C^qH_C':
        C\in\mathfrak C,\ 0\leq q\leq Q\}
    \text{ is linearly independent}.
\end{equation*}
If the functions are real analytic, and the pairs \((C,r)\) have the
same ordering on both sides, then
\begin{equation}\label{eq:pid-Wronskian-shift}
\begin{split}
    &W\!\left(
        1,
        (H_C^r)_{\substack{C\in\mathfrak C\\1\leq r\leq Q+1}}
    \right)
    =
    \bigl((Q+1)!\bigr)^{|\mathfrak C|}
    W\!\left(
        (H_C^{r-1}H_C')_
        {\substack{C\in\mathfrak C\\1\leq r\leq Q+1}}
    \right).
\end{split}
\end{equation}
\end{lemma}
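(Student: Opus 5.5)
The plan is to use the derivative map $\mathcal D\colon \phi\mapsto\phi'$, which carries the first family term by term onto a scaled copy of the second. Since $\frac{d}{d\eps}H_C^{r}=r\,H_C^{r-1}H_C'$, the derivative of $H_C^r$ is, up to the nonzero factor $r$, exactly $H_C^{r-1}H_C'$. With the reindexing $q=r-1$, $0\le q\le Q$, the two families are in canonical bijection, and this bijection preserves the chosen ordering of the pairs $(C,r)$.

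\emph{Equivalence of independence.} For the forward direction, suppose $\sum_{C,q}\lambda_{C,q}H_C^qH_C'\equiv0$ on $J$. Integrate from a fixed point $\eps_0\in J$; this is legitimate because $J$ is a connected interval and the $H_C$ are $C^1$. The result is $\sum_{C,r}\frac{\lambda_{C,r-1}}{r}H_C^{r}\equiv \mathrm{const}$, which is a linear relation in $\{1\}\cup\{H_C^r\}$. Independence of that family forces all $\lambda_{C,r-1}/r=0$, so all $\lambda$ vanish. For the converse, suppose $\mu_0+\sum_{C,r}\mu_{C,r}H_C^r\equiv0$. Differentiating gives $\sum_{C,r} r\mu_{C,r}H_C^{r-1}H_C'\equiv0$. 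Independence of the second family gives $\mu_{C,r}=0$, and then $\mu_0\cdot1\equiv0$ gives $\mu_0=0$. Connectedness of $J$ is used exactly where a vanishing derivative is upgraded to a constant, in the forward direction.

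\emph{Wronskian identity.} Let $n=|\mathfrak C|(Q+1)$. The Wronskian on the left is the determinant of an $(n+1)\times(n+1)$ matrix whose first column is $(1,0,\dots,0)^\top$, since all derivatives of the constant function vanish. Expanding along that column reduces the determinant to the $n\times n$ minor with entries $\phi_i^{(k)}$ for $1\le k\le n$, where $\phi_i$ runs over the functions $H_C^r$. That minor is precisely the Wronskian of the derivatives $\phi_i'=r\,H_C^{r-1}H_C'$. Pulling the factor $r$ out of each column gives $\prod_{C}\prod_{r=1}^{Q+1}r=\bigl((Q+1)!\bigr)^{|\mathfrak C|}$, which is the claimed identity, provided the column orderings agree on both sides; this is the stated hypothesis. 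Real analyticity is only needed so that the Wronskians are defined and, via the standard analytic fact, linear independence is equivalent to $W\not\equiv0$. That fact also gives an alternative route to the first part.

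\emph{Main obstacle.} The one delicate point is bookkeeping: the orderings must match so that no sign appears, and the Laplace expansion along the constant column must produce sign $+1$. It does, because the constant column is placed first and its nonzero entry sits in position $(1,1)$. Beyond that, the argument is routine.
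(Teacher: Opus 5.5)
Your proposal is correct and follows the paper's proof essentially step for step. Both directions of the equivalence come from $(H_C^{q+1})'=(q+1)H_C^qH_C'$, integrating on the connected interval $J$ for one direction and differentiating for the other, and the Wronskian identity follows by expanding along the constant first column to get $W(1,f_1,\dots,f_M)=W(f_1',\dots,f_M')$ and then factoring $r$ out of each column.
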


\begin{proof}
Suppose first that the family in
\eqref{eq:pid-degree-shift-equivalence} is linearly independent and
\(
    \sum_{C\in\mathfrak C}\sum_{q=0}^{Q}
        b_{C,q}H_C^qH_C'=0.
\)
Since
\(
    H_C^qH_C'
    =\frac{1}{q+1}(H_C^{q+1})',
\)
integration on the connected interval \(J\) gives
\(
    b_0+
    \sum_{C\in\mathfrak C}\sum_{q=0}^{Q}
        \frac{b_{C,q}}{q+1}H_C^{q+1}=0
\)
for some constant \(b_0\). Linear independence forces every
\(b_{C,q}\) and \(b_0\) to vanish.

Conversely, differentiating a relation
\(
    b_0+
    \sum_{C\in\mathfrak C}\sum_{r=1}^{Q+1}
        b_{C,r}H_C^r=0
\)
gives
\(
    \sum_{C\in\mathfrak C}\sum_{r=1}^{Q+1}
        r b_{C,r}H_C^{r-1}H_C'=0.
\)
Independence of the differentiated family gives
\(b_{C,r}=0\) for every \(C,r\), and the original relation then gives
\(b_0=0\).

For \eqref{eq:pid-Wronskian-shift}, expand the Wronskian on the
left along its first column, which is
\((1,0,\ldots,0)^\top\). This gives
\[
    W(1,f_1,\ldots,f_M)=W(f_1',\ldots,f_M'),
\]
where the \(f_a\)'s enumerate the functions \(H_C^r\).
Because
\(
    (H_C^r)'=rH_C^{r-1}H_C',
\)
factoring \(r\) from the corresponding column and multiplying over
all \(C\) and \(1\leq r\leq Q+1\) gives
\eqref{eq:pid-Wronskian-shift}.
\end{proof}

\begin{theorem}[Polynomial-profile identification]
\label{res:polynomial-id}
Let two parameterizations satisfy the standing warp-normalization,
analyticity, support, and within-parameterization non-proportionality
assumptions
\ref{as:A2}--\ref{as:C}.
Assume \ref{as:polynomial}, and suppose that \(\Xtr\) contains, on
each coordinate axis, a nondegenerate segment
\(
    \{t e_j:t\in I_j\},
     0\in I_j.
\)
If the two parameterizations induce the same conditional law on
\(\Xtr\) and their pooled warp family satisfies \ref{as:LF}, then
the model is identifiable in the sense of Definition \ref{DEF_model_is_identifiable}.
\end{theorem}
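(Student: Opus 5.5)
The plan is to reduce everything to coefficient identities for polynomials in $(t,\eps)$ on each axis, and then apply the pooled independence \ref{as:LF} through Lemma~\ref{lem:pid-degree-shift}.

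\textbf{Step 1: axis identities.} By Lemma~\ref{initial_step_lemma}, equality of the conditional laws gives the pointwise identity \eqref{eq:full} on $\Xtr\times\R$, hence \eqref{eq:axis} on each segment $\{te_j:t\in I_j\}$.

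\textbf{Step 2: vanishing of the mixed coefficients.} Under \ref{as:polynomial}, expand $g_k(\beta_{kj}t+h_k)=\sum_s a_{k,s}\sum_i\binom{s}{i}\beta_{kj}^i t^i h_k^{s-i}$, and similarly on the tilde side. Differentiate \eqref{eq:axis} once in $\eps$. The left side becomes a polynomial in $t$ whose coefficients are functions of $\eps$, while the right side reduces to $B_j'(\eps)$. Hence, for every $i\ge1$, the $t^i$-coefficient vanishes:
\[
\sum_{k\in S_j}\sum_s a_{k,s}\tbinom{s}{i}(s-i)\beta_{kj}^i h_k^{s-i-1}h_k'-\sum_{\ell\in\tS_j}(\cdots)=0 .
\]
Pool the warps into the proportionality classes $C\in\mathfrak C$ and write $h_k=\alpha_kH_C$. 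Each term is then of the form $b\,H_C^qH_C'$ with $q\le D-1-i$. For $i\ge1$ this gives $q\le D-2\le D_\star-1$.

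Lemma~\ref{lem:pid-degree-shift}, applied with $Q=D_\star-1$, turns \ref{as:LF} into linear independence of $\{H_C^qH_C'\}$. Therefore every class-wise coefficient vanishes separately. Equivalently, for each class $C$ the class contribution, viewed as a polynomial in $(t,H_C)$, has no monomials $t^iH_C^q$ with $i\ge1,q\ge1$. It is thus of the form $a_C(t)+b_C(H_C)$.

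\textbf{Step 3: matching.} By \ref{as:C}, a class contains at most one true and at most one tilde term.
\begin{itemize}
\item A singleton class with profile $g_k$ of degree $d_k\ge2$ contributes the monomial $t\,H^{d_k-1}$ with coefficient $d_k a_{k,d_k}\beta_{kj}\alpha_k^{d_k-1}\neq0$. This is impossible by Step 2. Hence every active term on axis $j$ is matched with a term of the other parameterization whose warp is proportional, $\th_\ell=\alpha h_k$.
\item For a matched pair, comparing the coefficients of $t\,H^{s-1}$ for all $s$ gives $a_{k,s}\beta_{kj}=\widetilde a_{\ell,s}\tb_{\ell j}\alpha^{s-1}$. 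In particular $d_k=\widetilde d_\ell$, and the ratio $\tb_{\ell j}/\beta_{kj}$ is the same for all $j$ in the support.
\end{itemize}

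\textbf{Step 4: pinning the scale (main obstacle).} I expect the hardest step to be showing that this common ratio equals the warp ratio $\alpha$, so that $\beta_k=r_k\tb_\ell$ and $h_k=r_k\th_\ell$ hold with the same $r_k$. The mixed terms only tie $\beta$ to the profile coefficients. The natural extra equation comes from the $h$-only part. For this, evaluate \eqref{eq:full} at $x=0$; then $\sum_k g_k(h_k)-\sum_\ell\tg_\ell(\th_\ell)$ is constant. Expanding in class powers and using \ref{as:LF} for powers $s\le D_\star$ gives $a_{k,s}=\widetilde a_{\ell,s}\alpha^{s}$ for $2\le s\le D_\star$. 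Combined with Step 3 at $s=2$, this yields $\tb_{\ell j}=\alpha\beta_{kj}$, i.e.\ cofoliarity in the sense of Lemma~\ref{lem:cofoliar}.

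The delicate case is the top degree $s=D$ when $D\ge3$, since then $D>D_\star$. There I would instead use the two mixed coefficients $tH^{s-1}$ and $t^2H^{s-2}$ (available because $s\ge3$). Their ratio eliminates the profile coefficient and gives $\tb_{\ell j}/\beta_{kj}=\alpha$ directly. Once cofoliarity holds, the matched class function is $g_k(u)-\tg_\ell(u/\rho_k)$ with $\rho_k=1/\alpha$, and the argument of Proposition~\ref{res:assemble} makes the matching global across axes, giving a bijection $\pi$ with $K=\tK$, equal supports, $\beta_k=\rho_k\tb_{\pi(k)}$ and $h_k=\rho_k\th_{\pi(k)}$.

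\textbf{Step 5: removing the residual polynomial freedom.} \ref{as:LF} with $s=1$ implies the linear independence \ref{as:A3}. Hence Proposition~\ref{res:pin} gives $\cR=\cP_0$, so each difference $g_k-\tg_{\pi(k)}(\cdot/\rho_k)$ is a constant $\gamma_k$. The additive parts then agree up to constants $b_j$, with $\sum_j b_j+\sum_k\gamma_k=0$ obtained by evaluating \eqref{eq:full} at $x=0$, $\eps=0$. This is identifiability in the sense of Definition~\ref{DEF_model_is_identifiable}.
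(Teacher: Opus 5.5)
Your route is essentially the paper's. Both use a mixed derivative on each axis, Lemma~\ref{lem:pid-degree-shift} to turn \ref{as:LF} into independence of the $H_C^qH_C'$, matching through the nonvanishing $tH^{d_k-1}$ coefficient, a ratio of the $t$- and $t^2$-mixed coefficients for degree at least three, and the origin identity at $x=0$ together with \ref{as:LF} for the rest. Your Step~5 detour through Proposition~\ref{res:pin} is legitimate: its first claim $\cR=\cP_0$ needs only \ref{as:A3}, which \ref{as:LF} implies. The paper instead reads the linear coefficients off the same origin identity. One small correction to Step~3: the $tH^{s-1}$ relation holds only for $s\ge2$, because the $s=1$ coefficient sits in $a_C(t)$ and is unconstrained.

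Step~4, however, does not go through as written, for two reasons.

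First, when $D\ge3$ the origin identity contains the powers $H_C^{D}$. These lie outside the family controlled by \ref{as:LF}, since $D_\star=D-1$. You therefore cannot apply \ref{as:LF} to conclude $a_{k,s}=\tilde a_{\ell,s}\alpha^s$ for $s\le D_\star$ until the $H_C^D$ terms have been shown to cancel. The mixed-coefficient ratio is not an add-on for the top degree; it must come first.

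Second, ``combined with Step~3 at $s=2$'' gives $\tb_{\ell j}=\alpha\beta_{kj}$ only when $a_{k,2}\neq0$. This fails, for example, for $g_k(u)=u^3$.

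The repair is what the paper does:
\begin{enumerate}
\item For every matched pair with $d_k\ge3$ and $j\in\Btr_k$, divide the $t^2H^{d_k-2}$ relation $a_{k,d_k}\beta_{kj}^2=\tilde a_{\ell,d_k}\tb_{\ell j}^2\alpha^{d_k-2}$ by the $tH^{d_k-1}$ relation. Both sides are nonzero, so this gives $\tb_{\ell j}=\alpha\beta_{kj}$.
\item The $tH^{s-1}$ relations then give $a_{k,s}=\tilde a_{\ell,s}\alpha^s$ for all $2\le s\le d_k$.
\item Consequently every power $\ge2$ of these classes cancels at $x=0$. The origin identity then involves only $1$, $H_C$, and, for quadratic pairs, $H_C^2$.
\item \ref{as:LF} (using $D_\star\ge2$) now yields the linear coefficients for all pairs and $a_{k,2}=\tilde a_{\ell,2}\alpha^2$ for the quadratic pairs. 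For those pairs $a_{k,2}\neq0$, so your $s=2$ combination is valid there.
\end{enumerate}
Alternatively, you may treat only the degree-$D$ pairs by the ratio, apply \ref{as:LF} afterwards, and then combine with Step~3 at $s=d_k$ rather than $s=2$.
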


\begin{proof}
By Lemma~\ref{initial_step_lemma}, equality of the conditional laws
implies the pointwise identity
\begin{equation}\label{eq:pid-pointwise}
\begin{split}
    &\sum_{j=1}^{p}f_j(x_j)
    +\sum_{k=1}^{K}
        g_k\!\left(\beta_k^\top x+h_k(\eps)\right)
        =
    \sum_{j=1}^{p}\tf_j(x_j)
    +\sum_{\ell=1}^{\tK}
        \tg_\ell\!\left(
            \tb_\ell^\top x+\th_\ell(\eps)
        \right),
    \qquad (x,\eps)\in\Xtr\times\R.
\end{split}
\end{equation}

For each pooled class \(C\in\mathfrak C\), let \(k(C)\) be the
untilded index whose warp belongs to \(C\), if such an index exists,
and let \(\ell(C)\) be the corresponding tilded index, if it exists.
These indices are unique by the within-parameterization
non-proportionality assumption. Choose nonzero class scales such that
\begin{equation}\label{eq:pid-class-scales}
    h_{k(C)}=\lambda_{k(C)}H_C,
    \qquad
    \th_{\ell(C)}=\widetilde\lambda_{\ell(C)}H_C
\end{equation}
whenever the relevant indices exist.

Fix a coordinate \(j\) and set \(x=t e_j\) in
\eqref{eq:pid-pointwise}. Applying
\(\partial_t\partial_\eps\) eliminates every term depending only on
\(t\) or only on \(\eps\) and gives
\begin{equation}\label{eq:pid-mixed}
\begin{split}
    &\sum_{k=1}^{K}
        \beta_{kj}h_k'(\eps)
        g_k''\!\left(\beta_{kj}t+h_k(\eps)\right)
    -
    \sum_{\ell=1}^{\tK}
        \tb_{\ell j}\th_\ell'(\eps)
        \tg_\ell''\!\left(
            \tb_{\ell j}t+\th_\ell(\eps)
        \right)=0.
\end{split}
\end{equation}

For one untilded term, the expansion gives
\begin{equation*}
\begin{split}
    &\beta_{kj}h_k'
        g_k''(\beta_{kj}t+h_k)
    =
    \sum_{s=2}^{d_k}\sum_{i=1}^{s-1}
        a_{k,s}s(s-1)\binom{s-2}{i-1}
        \beta_{kj}^{\,i}t^{i-1}
        h_k^{\,s-i-1}h_k'.
\end{split}
\end{equation*}
If \(h_k=\lambda_kH_C\), then
\begin{equation*}
    h_k^{\,s-i-1}h_k'
    =
    \lambda_k^{\,s-i}H_C^{\,s-i-1}H_C'.
\end{equation*}
The analogous formula holds for a tilded term. For each fixed \(\eps\), \eqref{eq:pid-mixed} is a polynomial
identity in \(t\) on the nondegenerate interval \(I_j\), so every
coefficient of \(t^{i-1}\) vanishes. For fixed \(i\), its
\(\eps\)-dependent terms are linear combinations of
\[
    H_C^qH_C',
    \qquad q=s-i-1,\qquad 0\leq q\leq D-2.
\]
Since \(D_\star\geq D-1\), \ref{as:LF} implies independence of
\(\{1,H_C^r:1\leq r\leq D-1\}\); hence
Lemma~\ref{lem:pid-degree-shift}, with \(Q=D-2\), implies
independence of all the displayed functions.

Separating first the powers of \(t\) and then the functions
\(H_C^qH_C'\), and cancelling the common nonzero combinatorial
factor
\(
    s(s-1)\binom{s-2}{i-1},
\)
gives, for every class \(C\), every coordinate \(j\), every
\(2\leq s\leq D\), and every \(1\leq i\leq s-1\),
\begin{equation}\label{eq:pid-mixed-coefficients}
\begin{split}
    &\mathbf 1_{\{k(C)\ {\rm exists}\}}\,
      a_{k(C),s}\,
      \beta_{k(C)j}^{\,i}\lambda_{k(C)}^{\,s-i}
      =
      \mathbf 1_{\{\ell(C)\ {\rm exists}\}}\,
      \widetilde a_{\ell(C),s}\,
      \tb_{\ell(C)j}^{\,i}
      \widetilde\lambda_{\ell(C)}^{\,s-i}.
\end{split}
\end{equation}

Suppose \(C\) contains an untilded index \(k=k(C)\), and let
\(d=d_k\). Choose \(j\in\Btr_k\). In \eqref{eq:pid-mixed-coefficients}, take \(s=d\) and
\(i=1\). Its left-hand side is
\(
    a_{k,d}\beta_{kj}\lambda_k^{d-1}\neq0.
\)
Therefore \(C\) contains a tilded index \(\ell(C)\),
\(\widetilde a_{\ell(C),d}\neq0\), and
\(\tb_{\ell(C)j}\neq0\). In particular,
\(\widetilde d_{\ell(C)}\geq d\). Applying the same argument with
the two parameterizations interchanged gives
\(d\geq\widetilde d_{\ell(C)}\). Hence
\begin{equation*}
    d_k=\widetilde d_{\ell(C)}.
\end{equation*}
The symmetric argument starting from a tilded class shows that every
pooled class contains one index from each parameterization.
Consequently, the class correspondence defines a bijection
\(
    \pi:\{1,\ldots,K\}\rightarrow\{1,\ldots,\tK\},
\)
and \(K=\tK\).

Fix a matched pair \(k\) and \(\ell=\pi(k)\), and denote their common
degree by \(d\). Equation
\eqref{eq:pid-mixed-coefficients} with \(s=d\) and \(i=1\) reads
\begin{equation}\label{eq:pid-leading-mixed}
    a_{k,d}\beta_{kj}\lambda_k^{d-1}
    =
    \widetilde a_{\ell,d}
    \tb_{\ell j}\widetilde\lambda_\ell^{d-1}.
\end{equation}

All factors in \eqref{eq:pid-leading-mixed}, except possibly the coordinate
coefficients, are nonzero. Hence
\(\beta_{kj}=0\) if and only if \(\tb_{\ell j}=0\) for every \(j\), and
therefore \(\Btr_k=\tB_\ell\).

Suppose first that \(d\geq3\), and choose
\(j_0\in\Btr_k=\tB_\ell\). Taking \(s=d\) and \(i=1,2\) in
\eqref{eq:pid-mixed-coefficients} gives
\begin{align}
    a_{k,d}\beta_{kj_0}\lambda_k^{d-1}
    &=
    \widetilde a_{\ell,d}
    \tb_{\ell j_0}\widetilde\lambda_\ell^{d-1},
    \label{eq:pid-leading-i1}\\
    a_{k,d}\beta_{kj_0}^{2}\lambda_k^{d-2}
    &=
    \widetilde a_{\ell,d}
    \tb_{\ell j_0}^{2}\widetilde\lambda_\ell^{d-2}.
    \label{eq:pid-leading-i2}
\end{align}
The quantities divided out below are nonzero because \(j_0\) belongs to the
common support, the class scales are nonzero, and the degree-\(d\)
coefficients are leading coefficients. Dividing
\eqref{eq:pid-leading-i2} by \eqref{eq:pid-leading-i1} yields
\(\beta_{kj_0}/\lambda_k
=\tb_{\ell j_0}/\widetilde\lambda_\ell\).
Substitution into \eqref{eq:pid-leading-i1} then gives
\begin{equation}\label{eq:pid-leading-normalisation}
    a_{k,d}\lambda_k^d
    =
    \widetilde a_{\ell,d}\widetilde\lambda_\ell^d.
\end{equation}
For an arbitrary coordinate \(j\), dividing
\eqref{eq:pid-leading-mixed} by
\eqref{eq:pid-leading-normalisation} gives
\begin{equation}\label{eq:pid-all-coordinate-ratios}
    \frac{\beta_{kj}}{\lambda_k}
    =
    \frac{\tb_{\ell j}}{\widetilde\lambda_\ell},
    \qquad j=1,\ldots,p.
\end{equation}
Thus, with
\(\rho_k:=\lambda_k/\widetilde\lambda_\ell\neq0\),
\eqref{eq:pid-class-scales} and
\eqref{eq:pid-all-coordinate-ratios} imply
\begin{equation}\label{eq:pid-high-degree-structure}
    h_k=\rho_k\th_\ell,
    \qquad
    \beta_k=\rho_k\tb_\ell.
\end{equation}

It remains to compare the nonlinear profile coefficients of this pair. For
\(2\leq s\leq d\), take \(i=1\) and \(j=j_0\) in
\eqref{eq:pid-mixed-coefficients}. Substituting
\(\beta_{kj_0}=\rho_k\tb_{\ell j_0}\) and
\(\lambda_k=\rho_k\widetilde\lambda_\ell\), and then cancelling the
nonzero factor
\(\tb_{\ell j_0}\widetilde\lambda_\ell^{s-1}\), gives
\begin{equation}\label{eq:pid-high-degree-coefficients}
    a_{k,s}
    =
    \widetilde a_{\ell,s}\rho_k^{-s},
    \qquad 2\leq s\leq d.
\end{equation}

Set \(x=0\) in \eqref{eq:pid-pointwise}. The main effects then contribute
only a constant, so, for some \(C_0\in\R\),
\begin{equation}\label{eq:pid-origin}
    \sum_{k=1}^{K}g_k(h_k(\eps))
    -
    \sum_{\ell=1}^{\tK}\tg_\ell(\th_\ell(\eps))
    \equiv C_0.
\end{equation}
For every matched pair of degree at least three,
\eqref{eq:pid-high-degree-structure} and
\eqref{eq:pid-high-degree-coefficients} show that all terms of degree
\(s\geq2\) cancel in \eqref{eq:pid-origin}. Consequently, after collecting
the remaining terms by pooled warp class, the origin identity becomes
\begin{equation}\label{eq:pid-reduced-origin}
\begin{split}
    0={}&C_1
    +\sum_{C\in\mathfrak C}
       \left(
          a_{k(C),1}\lambda_{k(C)}
          -
          \widetilde a_{\ell(C),1}
          \widetilde\lambda_{\ell(C)}
       \right)H_C
    +
    \sum_{\substack{C\in\mathfrak C\\ d_{k(C)}=2}}
       \left(
          a_{k(C),2}\lambda_{k(C)}^2
          -
          \widetilde a_{\ell(C),2}
          \widetilde\lambda_{\ell(C)}^2
       \right)H_C^2
\end{split}
\end{equation}
for another constant \(C_1\). Since \(D_\star\geq2\),
assumption~\ref{as:LF} implies linear independence of
\(\{1\}\cup\{H_C,H_C^2:C\in\mathfrak C\}\). Hence
\eqref{eq:pid-reduced-origin} yields
\begin{align}
    a_{k,1}\lambda_k
    &=
    \widetilde a_{\ell,1}\widetilde\lambda_\ell
    &&\text{for every matched pair},
    \label{eq:pid-linear-coefficient-origin}\\
    a_{k,2}\lambda_k^2
    &=
    \widetilde a_{\ell,2}\widetilde\lambda_\ell^2
    &&\text{for every quadratic matched pair}.
    \label{eq:pid-quadratic-origin}
\end{align}

Now consider a matched pair of degree \(d=2\). At \(s=2\), the mixed
coefficient system \eqref{eq:pid-mixed-coefficients} contains only the
equation corresponding to \(i=1\):
\begin{equation}\label{eq:pid-quadratic-mixed}
    a_{k,2}\beta_{kj}\lambda_k
    =
    \widetilde a_{\ell,2}
    \tb_{\ell j}\widetilde\lambda_\ell.
\end{equation}
Dividing \eqref{eq:pid-quadratic-mixed} by
\eqref{eq:pid-quadratic-origin} gives
\(\beta_{kj}/\lambda_k
=\tb_{\ell j}/\widetilde\lambda_\ell\) for every \(j\). Therefore, upon
setting \(\rho_k:=\lambda_k/\widetilde\lambda_\ell\), the quadratic pair
satisfies
\begin{equation}\label{eq:pid-quadratic-structure}
    h_k=\rho_k\th_\ell,
    \qquad
    \beta_k=\rho_k\tb_\ell,
    \qquad
    a_{k,2}=\widetilde a_{\ell,2}\rho_k^{-2}.
\end{equation}
This is the only point at which the quadratic case requires a separate
argument: the mixed system supplies only
\eqref{eq:pid-quadratic-mixed}, while the coefficient of \(H_C^2\) in
the reduced origin identity supplies the additional relation
\eqref{eq:pid-quadratic-origin}.

For every matched pair, including the quadratic pairs,
\eqref{eq:pid-linear-coefficient-origin} and
\(\lambda_k=\rho_k\widetilde\lambda_\ell\) give
\begin{equation*}
    a_{k,1}
    =
    \widetilde a_{\ell,1}\rho_k^{-1}.
\end{equation*}
Together with \eqref{eq:pid-high-degree-coefficients} for \(d\geq3\) and
\eqref{eq:pid-quadratic-structure} for \(d=2\), this proves
\(a_{k,s}=\widetilde a_{\ell,s}\rho_k^{-s}\) for every
\(1\leq s\leq d_k\). Thus \(g_k\) and
\(\tg_\ell(\,\cdot\,/\rho_k)\) have identical nonconstant
coefficients. Their difference is therefore the constant
\(c_k:=a_{k,0}-\widetilde a_{\ell,0}\), so
\begin{equation}\label{eq:pid-profile-final}
    g_k(u)=\tg_\ell(u/\rho_k)+c_k.
\end{equation}

For \(\ell=\pi(k)\), the structural relations imply
\((\beta_k^\top x+h_k(\eps))/\rho_k
=\tb_{\pi(k)}^\top x+\th_{\pi(k)}(\eps)\).
Substitution of \eqref{eq:pid-profile-final} into
\eqref{eq:pid-pointwise} therefore cancels the matched interaction terms
and leaves
\begin{equation}\label{eq:pid-main-residual}
    \sum_{j=1}^{p}
        \bigl(f_j(x_j)-\tf_j(x_j)\bigr)
    =
    -\sum_{k=1}^{K}c_k,
    \qquad x\in\Xtr.
\end{equation}
On the \(j\)th axis, set \(x=t e_j\) and vary \(t\in I_j\). Every term in
\eqref{eq:pid-main-residual} except the \(j\)th is then constant, so
\(f_j(t)-\tf_j(t)\) is constant on \(I_j\); denote this constant by
\(b_j\). Evaluating the same identity at the origin gives
\(\sum_{j=1}^{p}b_j+\sum_{k=1}^{K}c_k=0\).
\end{proof}

\begin{remark}
\label{rem:LF-versus-growth}
Assumption~\ref{as:LF} is a finite-dimensional linear-independence
condition on the powers of the pooled warp representatives. It is implied by the stronger power-separated growth
condition
\[
    \left|\frac{H_C(\eps)^r}{H_{C'}(\eps)^s}\right|
    \longrightarrow 0
    \quad\text{or}\quad
    \infty
\]
for every two distinct pairs \((C,r)\neq(C',s)\) with
\(1\leq r,s\leq D_\star\): in a nontrivial linear relation, division by
the fastest-growing term gives an immediate contradiction.

The ordinary growth assumption \ref{as:G} compares only the raw warps \(H_C\)
and \(H_{C'}\), and therefore does not by itself imply
\ref{as:LF}. Indeed, raw warps may have distinct growth orders while
their powers satisfy a relation such as in remark \ref{rem:polyG-A6-fails}. Conversely, \ref{as:LF} does not imply \ref{as:G}:
finite-power linear independence can hold even when
\(H_C/H_{C'}\) has a finite nonzero limit. Thus \ref{as:G} and
\ref{as:LF} address different settings, whereas power-separated
growth is a convenient sufficient condition for \ref{as:LF}.

Finally, \ref{as:LF} does not replace the standing normalization,
regularity, support, or within-parameterization non-proportionality
assumptions. \ref{as:LF} is used only to separate the finitely many pooled
warp powers arising from polynomial profiles.
\end{remark}

\providecommand{\R}{\mathbb{R}}
\providecommand{\Z}{\mathbb{Z}}
\providecommand{\eps}{\varepsilon}
\providecommand{\tf}{\tilde f}
\providecommand{\tg}{\tilde g}
\providecommand{\tb}{\tilde\beta}
\providecommand{\tB}{\widetilde{\mathcal B}}
\providecommand{\tK}{\widetilde K}
\providecommand{\Btr}{\mathcal B}
\providecommand{\Xtr}{X_{\mathrm{tr}}}
\providecommand{\supp}{\operatorname{supp}}
\providecommand{\ta}{\tilde a}
\providecommand{\tS}{\widetilde S}
\providecommand{\cP}{\mathcal P}
\providecommand{\cQ}{\mathcal Q}
\providecommand{\cR}{\mathcal R}
\providecommand{\fl}{\ensuremath{{}^{\flat}}}

\subsection{Identifiability for affine $h_k$}\label{ss:affine-id}

This section discusses the case when warps $h_k$ are affine. Every interaction then uses the same noise, up to a scalar (and an intercept) and the warps can no longer be used to differentiate between the interactions. 

The result has two parts. First, two models with different
interaction supports $\mathcal{B}_k$ can give the same conditional law on $\Xtr$. Second, identification is possible when the profiles $\Theta_m$ are independent enough to label the interactions across axes. Independent profiles $\Theta_m$ do here what distinct warps do in the non-affine regime. 

One thing is always recovered. Fix a coordinate $j$, group the interactions active on axis $j$ by their slope. For each slope, the data fix the sum of the corresponding profile classes, taken in $\cQ$ (profiles modulo polynomials). No further resolution is possible from the axis-aligned design alone.

\subsubsection*{Notation}

Write $\cP\subset\mathrm{Meas}(\R)$ for the real polynomials and
$\cQ:=\mathrm{Meas}(\R)/\cP$ for the quotient vector space, with $[g]\in\cQ$ denoting
the class of $g$. For $\rho\neq0$, set $D_\rho[g]:=[g(\rho\,\cdot)]$; this is a well-defined
automorphism of $\cQ$ since dilation preserves $\cP$. A function $g$ is \emph{non-polynomial} when $[g]\neq0$.

Set $b_k:=\beta_k/a_k\in\R^p$ and $g_k^\sharp(s):=g_k(a_k s)$, so that
$g_k(\beta_k^\top x+a_k\eps)=g_k^\sharp(b_k^\top x+\eps)$.
Recall, we write $\Btr_k:=\supp\beta_k$ and $S_j:=\{k:b_{kj}\neq0\}$ for the set of
interactions active on coordinate $j$.

\subsubsection*{Assumptions in the affine regime}
\begin{enumerate}[label=\textnormal{(F\arabic*)},ref=\textnormal{F\arabic*},leftmargin=3.2em]
\item\label{affas:A1} Each $g_k$ is continuous and non-polynomial; each $f_j$ is continuous.
\item\label{affas:A2} The warps are affine and median-zero: $h_k(\eps)=a_k\eps$,
  $a_k\neq0$. 
\item\label{affas:A3} The augmented directions $v_k:=(\beta_k,a_k)\in\R^{p+1}$ are
  pairwise linearly independent; equivalently, the normalised directions $b_k$ are pairwise
  distinct.
\item\label{affas:supp} $|\Btr_k|\ge2$ for every $k$, and $\Btr_k\neq\Btr_l$ for $k\neq l$.
\item\label{affas:Lprofiles} Within each model the classes $[g_k^\sharp]$ are pairwise distinct.
  The distinct classes in
  $\{[g_1^\sharp],\dots,[g_K^\sharp]\}\cup\{[\tg_1^\sharp],\dots,[\tg_{\tK}^\sharp]\}$
  are linearly independent in $\cQ$.
\end{enumerate}

As in the non-affine case, the axis-aligned design is assumed \emph{nondegenerate}: for each coordinate $j$ the
training interval $(a_j,A_j)$ has nonempty interior, so that the strip
$U_j:=(a_j,A_j)\times\R\subseteq\R^2$ is open and nonempty. By Lemma~\ref{initial_step_lemma}, observational equivalence of two parameterizations
on $\Xtr$ reduces to the pointwise identity
\begin{equation}\label{eq:aff-id}
  \sum_{j=1}^p f_j(x_j)+\sum_{k=1}^{K} g_k^\sharp(b_k^\top x+\eps)
  =\sum_{j=1}^p \tf_j(x_j)+\sum_{l=1}^{\tK}\tg_l^\sharp(\tb_l^\top x+\eps)
  \qquad\text{on }\Xtr\times\R.
\end{equation}

\subsubsection*{Results in the affine regime}

\begin{theorem}[Identification in the affine regime]\label{thm:aff-id}
Assume \ref{affas:A1}--\ref{affas:Lprofiles}. Then for any two parametrizations observationally equivalent on $\mathcal{X}_{\mathrm{tr}}$ in distribution have
$K=\tK$, and there exist a permutation $\pi$ and nonzero scalars $\rho_1,\dots,\rho_K$
such that, after relabelling by $\pi$,
\[
  \beta_k=\rho_k\tb_k,\quad a_k=\rho_k\ta_k,\quad \Btr_k=\tB_k,\quad
  g_k(u)=\tg_k\!\bigl(u/\rho_k\bigr)+q_k(u)\quad(q_k\in\cP);
\]
equivalently, $b_k=\tb_{\pi(k)}$ and $[g_k^\sharp]=[\tg_{\pi(k)}^\sharp]$ in $\cQ$. The
number of interactions, the augmented directions up to scale, the supports, and the
$\cQ$-class of each profile are identified; the polynomial parts of the profiles are
determined only up to the residual freedom $\cR$ of Lemma~\ref{lem:aff-indet}.
\end{theorem}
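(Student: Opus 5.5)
Starting from Lemma~\ref{initial_step_lemma}, observational equivalence on $\Xtr$ yields the pointwise identity \eqref{eq:aff-id}. The proof then works one axis at a time and afterwards glues the axes together using the profile classes. Fix $j$ and put $x=te_j$. Every interaction with $b_{kj}=0$ contributes a function of $\eps$ only, so on $U_j$
\[
  \sum_{k\in S_j} g_k^\sharp(b_{kj}t+\eps)-\sum_{l\in\tS_j}\tg_l^\sharp(\tb_{lj}t+\eps)=A_j(t)+B_j(\eps).
\]
Terms are grouped by their slope $c\in\{b_{kj}\}\cup\{\tb_{lj}\}$. This shows that the affine warps give no help in separating terms and only the slope can. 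The first structural claim is the ``one thing always recovered'' statement made before the theorem: for each slope $c$,
\[
  \sum_{k\in S_j:\,b_{kj}=c}[g_k^\sharp]=\sum_{l\in\tS_j:\,\tb_{lj}=c}[\tg_l^\sharp]\quad\text{in }\cQ .
\]
I would prove this by a change of variables. Each slope class is a function of $u_c=ct+\eps$, and for distinct slopes $c\neq c'$ the pair $(u_c,u_{c'})$ gives local coordinates on $U_j$. So apply $\partial_t-c'\partial_\eps$, which kills functions of $u_{c'}$, once for every slope $c'\neq c$, and then apply one further operator to remove $A_j$ and $B_j$. What remains is a finite-order constant-coefficient equation in which only $\Theta_c:=\sum_{b_{kj}=c}g_k^\sharp-\sum_{\tb_{lj}=c}\tg_l^\sharp$ appears, evaluated at $u_c$. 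Sliding $\eps$ over $\R$ makes $u_c$ cover all of $\R$, so $\Theta_c$ is a polynomial.

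This step is the main obstacle. Only continuity is assumed (\ref{affas:A1}), so the differential operators must be read distributionally, or replaced by finite difference operators $\Delta$ along the directions $(1,-c')$. A function annihilated by iterated differences on a strip, or by its distributional analogue, is a polynomial. This is the Fréchet functional-equation theorem, which I would cite from Sz\'ekelyhidi \cite{szekelyhidi1991} or Acz\'el \cite{aczel1966}. The iterated difference method is what I would try first. A secondary subtlety is the leftover $t$-only and $\eps$-only terms, and the strip having finite width in $t$; I would handle both with small increments.

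With the slope-class identity in hand, linear independence \ref{affas:Lprofiles} does the matching. On each axis $j$ and for each slope $c$, the classes $[g_k^\sharp]$ on the left are distinct and independent of the tilde classes, except where a class is literally shared. Independence therefore forces every $k\in S_j$ with $b_{kj}=c$ to have a tilde partner $l\in\tS_j$ with $\tb_{lj}=c$ and $[\tg_l^\sharp]=[g_k^\sharp]$, and vice versa. Because classes are distinct within each model, this partner is unique and is the same $l$ on every axis in $\Btr_k$. That gives a global bijection $\pi$ with $K=\tK$. It also gives $b_{kj}=\tb_{\pi(k)j}$ for all $j\in\Btr_k$, and the symmetric argument covers zero entries, so $b_k=\tb_{\pi(k)}$ and $\Btr_k=\tB_{\pi(k)}$.

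Finally I would translate back to the original parameters. Set $\rho_k:=a_k/\ta_{\pi(k)}$. Then $\beta_k=a_kb_k=\rho_k\tb_{\pi(k)}$, and $[g_k^\sharp]=[\tg_{\pi(k)}^\sharp]$ means $g_k(a_ks)-\tg_{\pi(k)}(\ta_{\pi(k)}s)\in\cP$. Substituting $u=a_ks$ gives $g_k(u)=\tg_{\pi(k)}(u/\rho_k)+q_k(u)$ with $q_k\in\cP$, because dilation preserves $\cP$. The statement leaves the polynomial remainders $q_k$ only up to the residual freedom of Lemma~\ref{lem:aff-indet}, so nothing more needs to be shown there. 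Assumption \ref{affas:supp} only serves to make $\pi$ consistent, and \ref{affas:A3} guarantees that distinct interactions have distinct $b_k$.
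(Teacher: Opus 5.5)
Your proposal is correct and follows the paper's own route. The per-axis, per-slope identity in $\cQ$ is exactly Lemma~\ref{lem:aff-master}, which the paper proves (via Lemma~\ref{lem:aff-planar}) with the same iterated finite differences along directions annihilating the other ridges and $A_j,B_j$, followed by Fr\'echet's theorem; the matching via \ref{affas:Lprofiles}, the global bijection, and the rescaling $\rho_k=a_k/\ta_{\pi(k)}$ are likewise as in the paper. One small correction to your closing remark: \ref{affas:supp} is needed only to guarantee that every $k$ is active on some axis, so that it acquires a partner at all, whereas the cross-axis consistency of $\pi$ comes from the within-model distinctness of classes in \ref{affas:Lprofiles}.
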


\begin{remark}[On \ref{affas:A1}: non-polynomiality]\label{rem:A1}
The exclusion of polynomial profiles is necessary: a polynomial $g_k$ can be absorbed into the additive part via the polynomial redistribution \ref{ind:poly}, leaving the interaction structure undetermined.
\end{remark}

\begin{remark}[On \ref{affas:A2}: median-zero affine warps]\label{rem:A2}
The median-zero normalization $h_k(\eps)=a_k\eps$ is the affine analogue of the
curved-regime condition and is imposed to fix the location freedom in $h_k$. 
\end{remark}

\begin{remark}[On \ref{affas:A3}: pairwise non-proportional augmented directions]%
\label{rem:A3}
After noise-normalization, \ref{affas:A3} is equivalent to pairwise distinctness of the $b_k$. It ensures that no two interactions are cofoliar across the full design, this makes per-axis slope profiles $\Pi_j$ well defined as $\cQ$-valued invariants
(Proposition~\ref{prop:aff-invariant}) and making the cross-axis matching in the proof
of Theorem~\ref{thm:aff-id} well posed. It does not exclude cofoliarity on a single axis (two terms with $b_{kj}=b_{k'j}$ for some fixed $j$ but $b_k\neq b_{k'}$).
\end{remark}

\begin{remark}[On \ref{affas:supp}: support size and distinctness]\label{rem:S}
The condition $|\Btr_k|\ge2$ ensures that each true index $k$ contributes a nonzero slope on at least two axes, which is used in Step~1 of the proof of Theorem~\ref{thm:aff-id} to certify that each profile class is witnessed. Support distinctness $\Btr_k\neq\Btr_l$ is part of the model identifiability convention and is preserved by all indeterminacies of Lemma~\ref{lem:aff-indet}.
\end{remark}

\begin{remark}[On \ref{affas:Lprofiles}: linear independence of $g$ classes]\label{rem:L}
Condition \ref{affas:Lprofiles} is imposed on the pair of compared parameterizations. Its role is to convert the per-axis equality of $\cQ$-valued sums \eqref{eq:aff-star} into equality of individual terms. The counterexamples in Proposition~\ref{prop:aff-cycle} satisfy \ref{affas:A1}--\ref{affas:supp} but violate \ref{affas:Lprofiles}. The condition is undesirable in that it constrains the two parameterizations jointly rather than each on its own: its second clause refers to the pooled profile classes $[g_k^\sharp]$ and $[\tg_l^\sharp]$ of both models at once, so it cannot be checked against the true parameterization alone.
\end{remark}

\begin{remark}[Residual polynomial indeterminacy]\label{rem:aff-poly}
Under the hypotheses of Theorem~\ref{thm:aff-id}, the $\cQ$-classes $[g_k^\sharp]$ are identified. The polynomial parts are determined only modulo $\cR$. One always has $\dim\cR\ge 2K-1$ (Lemma~\ref{lem:aff-indet}), with equality
when the directions are in \emph{polynomial general position}: for every $d\ge2$, the only
$(c_k)\in\R^K$ with $\sum_k c_k(\beta_k^\top x+a_k\eps)^d$ simultaneously $\eps$-free and
additive in $x$ on the slab is $c=0$. 
\end{remark}

\begin{lemma}[Indeterminacies]\label{lem:aff-indet}
Each of the following transformations transforms a parameterization to an observationally equivalent one on $\Xtr\times\R$, i.e., preserve the equality in \eqref{eq:aff-id}:
\begin{enumerate}[label=\textup{(\roman*)},leftmargin=2.4em,itemsep=1pt]
\item\label{ind:perm} permutation of the $K$ interactions;
\item\label{ind:scale} per-term rescaling
  $(\beta_k,a_k,g_k)\mapsto(\rho_k\beta_k,\rho_k a_k,g_k(\cdot/\rho_k))$, $\rho_k\neq0$;
\item\label{ind:poly} polynomial redistribution $g_k\mapsto g_k+q_k$,
  $f_j\mapsto f_j-r_j$, for any $(q_k)\in\cR$ and the induced $r_j\in\cP$, where
  \[
    \cR:=\Bigl\{(q_k)\in\cP^{K}:\
      \textstyle\sum_{k}q_k(\beta_k^\top x+a_k\eps)
      =\sum_j r_j(x_j)
      \ \text{on }\Xtr\times\R
      \ \text{for some }(r_j)\in\cP^p
    \Bigr\}.
  \]
\end{enumerate}
The space $\cR$ contains the constants $(q_k\equiv c_k)$ and the linear tuples
$(q_k(s)=\lambda_k s)$ with $\sum_k\lambda_k a_k=0$; these span a subspace of dimension
at least $2K-1$.
\end{lemma}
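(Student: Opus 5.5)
The plan is to verify each transformation directly against the pointwise identity \eqref{eq:aff-id}, and then to exhibit the claimed subspace of $\cR$ and bound its dimension from below.

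For \ref{ind:perm}, relabelling the interactions only reorders the finite sum $\sum_k g_k^\sharp(b_k^\top x+\eps)$. The left-hand side of \eqref{eq:aff-id} is therefore unchanged pointwise on $\Xtr\times\R$.

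For \ref{ind:scale}, I substitute $(\rho_k\beta_k,\rho_k a_k,g_k(\cdot/\rho_k))$ and get
\[
g_k\bigl((\rho_k\beta_k^\top x+\rho_k a_k\eps)/\rho_k\bigr)=g_k(\beta_k^\top x+a_k\eps).
\]
So each summand is preserved pointwise. The normalized quantities are also invariant: $b_k=\rho_k\beta_k/(\rho_k a_k)$ is unchanged, and $g_k^\sharp(s)=g_k(\rho_k a_k s/\rho_k)$ is unchanged as well.

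For \ref{ind:poly}, take $(q_k)\in\cR$ with induced $(r_j)$. The new left-hand side equals the old one plus
\[
\sum_k q_k(\beta_k^\top x+a_k\eps)-\sum_j r_j(x_j).
\]
This vanishes on $\Xtr\times\R$ by the definition of $\cR$. Since the two sides agree pointwise for every $\eps$, the laws agree on $\Xtr$. This is the converse direction of Lemma~\ref{initial_step_lemma}, which is trivial.

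For the subspace claim, I handle the two families separately.
\begin{itemize}
\item Constants $q_k\equiv c_k$ lie in $\cR$ by taking $r_1\equiv\sum_k c_k$ and $r_j\equiv0$ for $j\ge2$. This family gives a $K$-dimensional space.
\item For linear tuples $q_k(s)=\lambda_k s$, the sum is $\sum_k\lambda_k\beta_k^\top x+\bigl(\sum_k\lambda_k a_k\bigr)\eps$. When $\sum_k\lambda_k a_k=0$ this equals $\sum_j\bigl(\sum_k\lambda_k\beta_{kj}\bigr)x_j$, which is additive with $r_j(x_j)=(\sum_k\lambda_k\beta_{kj})x_j\in\cP$. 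The constraint is a single linear equation with $a\neq0$ by \ref{affas:A2}, so these $\lambda$ form a $(K-1)$-dimensional space.
\end{itemize}

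The two families intersect trivially, because a tuple that is both constant and linear is zero. Their sum therefore has dimension $K+(K-1)=2K-1$. There is no substantial obstacle here. The only point needing care is the dimension count: the constant and degree-one components must be independent in $\cP^K$, and the constraint $\sum_k\lambda_k a_k=0$ must be nontrivial, which it is since $a\neq0$.
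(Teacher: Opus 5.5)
Your proof is correct and follows the paper's argument essentially line by line. You handle (i)--(ii) by direct substitution, (iii) via the defining identity of $\cR$, and the subspace claim by a degree-$0$/degree-$1$ count giving $K+(K-1)=2K-1$. Your choice $r_1\equiv\sum_k c_k$ for the constant tuples is in fact slightly more careful than the paper's, which writes $r_j\equiv0$ and so tacitly needs $\sum_k c_k=0$.
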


\begin{lemma}[Planar uniqueness modulo polynomials]\label{lem:aff-planar}
Let $U\subseteq\R^2$ be open and connected, let $\xi_1,\dots,\xi_r\in\R\setminus\{0\}$ be
distinct, and let $\Psi_1,\dots,\Psi_r,\Psi_0,\Psi_\infty$ be measurable with
\[
  \sum_{i=1}^{r}\Psi_i(\xi_i t+\eps)=\Psi_0(t)+\Psi_\infty(\eps)
  \qquad\text{on }U.
\]
Then each $\Psi_i$ coincides on the interval $I_i:=\{\xi_i t+\eps:(t,\eps)\in U\}$ with a
polynomial of degree at most $r$. On a strip $U=(a,A)\times\R$ one has $I_i=\R$,
so each $\Psi_i$ is (globally) a polynomial; equivalently, the classes $[\Psi_i]\in\cQ$
are uniquely determined.
\end{lemma}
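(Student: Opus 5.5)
The plan is to annihilate every term except one by finite differences along suitable directions in the $(t,\eps)$-plane. This turns the identity into a Fréchet-type functional equation for a single $\Psi_i$, whose measurable solutions are known to be polynomials.

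For $v=(v_1,v_2)\in\R^2$ write $\Delta_v F(z):=F(z+v)-F(z)$. A term $\Psi_i(\xi_i t+\eps)$ is constant along the direction $w_i:=(1,-\xi_i)$. The term $\Psi_0(t)$ is constant along $e_\eps:=(0,1)$, and $\Psi_\infty(\eps)$ is constant along $e_t:=(1,0)$.

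Fix the index, say $i=1$. Apply the product operator
\[
\Delta_{s_0 e_t}\,\Delta_{s_\infty e_\eps}\prod_{i\neq 1}\Delta_{s_i w_i}
\]
with small real steps $s_0,s_\infty,s_i$ to the identity. This is legitimate at any $z\in U$ for which all $2^{r+1}$ shifted points stay in $U$, which holds for $z$ in a small disc and steps small, since $U$ is open. Each $\Psi_i$-term with $i\neq1$, the $\Psi_0$ term and the $\Psi_\infty$ term is killed by its own factor.

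On the $\Psi_1$-term the factors act as one-dimensional differences in the argument $u=\xi_1t+\eps$. The steps are $(\xi_1-\xi_i)s_i$ for $i\neq1$, $\xi_1 s_0$ and $s_\infty$. All of these coefficients are nonzero because the $\xi_i$ are distinct and $\xi_1\neq0$. Therefore, near every $u_0\in I_1$ we get
\[
\Delta_{\tau_1}\cdots\Delta_{\tau_{r+1}}\Psi_1(u)=0
\]
for all $u$ near $u_0$ and all sufficiently small independent $\tau_1,\dots,\tau_{r+1}$. These are exactly $r+1$ difference operators.

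Next I would invoke the classical Fréchet theorem, as found in \cite{aczel1966} and \cite{szekelyhidi1991}: a measurable function whose $(r+1)$-st differences with arbitrary small steps vanish on an interval is a polynomial of degree at most $r$ there. This gives a local polynomial representation of $\Psi_1$ on a neighbourhood of each $u_0\in I_1$.

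Since $U$ is connected and $(t,\eps)\mapsto\xi_1t+\eps$ is continuous, $I_1$ is an interval. Two polynomials that agree on a nonempty open overlap coincide, so a chaining argument along $I_1$ glues the local polynomials into a single polynomial of degree at most $r$ on all of $I_1$.

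On a strip $(a,A)\times\R$, the range of $\xi_1t+\eps$ is already all of $\R$ because $\eps$ is free, so $I_1=\R$ and $\Psi_1\in\cP$. This means $[\Psi_1]=0$ in $\cQ$. By linearity, if two families of profiles satisfy the identity, their differences satisfy it as well, which gives uniqueness of the classes $[\Psi_i]$. The same argument applies to every $i$.

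The main obstacle is the regularity step: the local, measurable version of Fréchet's theorem with only small steps. If the citation does not cover the local form directly, I would first upgrade measurability to continuity via the Steinhaus-type argument used for Cauchy's equation. The alternative is to mollify in $u$, which commutes with the difference operators, to get smooth solutions. Then $\Delta_{\tau_1}\cdots\Delta_{\tau_{r+1}}\Psi=0$ for all small $\tau$ forces $\Psi^{(r+1)}=0$. One then passes back, noting that the mollified functions converge to $\Psi_1$ a.e., so "coincides" should be read a.e. or after choosing the continuous representative.
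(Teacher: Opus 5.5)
Your proposal is correct and is essentially the paper's proof. The paper likewise annihilates every term except the $i$-th by differences along directions orthogonal to $a^m=(\xi_m,1)$, $(1,0)$, $(0,1)$ (exactly your $w_m$, $e_\eps$, $e_t$); it differs only in scaling the steps so that each shifts the argument of $\Psi_i$ by a common $h$, which gives $\Delta_h^{r+1}\Psi_i=0$, before applying the same local Fréchet theorem and gluing the local polynomials along $I_i$.

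Your worry about an a.e.\ conclusion is unnecessary: the difference identity holds at every point, so the cited theorem gives the polynomial representation everywhere. The mollification fallback is the weaker option in any case, since it would also need local integrability of $\Psi_1$, which measurability alone does not provide.
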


\begin{lemma}[Master relation]\label{lem:aff-master}
Assume \ref{affas:A1}--\ref{affas:supp}.
For every coordinate $j$ and every $\xi\in\R\setminus\{0\}$,
\begin{equation}\label{eq:aff-star}
  \sum_{\substack{k\in S_j\\ b_{kj}=\xi}}[\,g_k^\sharp\,]
  \;=\;
  \sum_{\substack{l\in\tS_j\\ \tb_{lj}=\xi}}[\,\tg_l^\sharp\,]
  \qquad\text{in }\cQ. \tag{$\star_{j,\xi}$}
\end{equation}
\end{lemma}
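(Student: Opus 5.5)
We restrict the pointwise identity \eqref{eq:aff-id} to the $j$-th axis and then apply Lemma~\ref{lem:aff-planar}, grouping terms by slope. Fix a coordinate $j$ and set $x=te_j$ with $t\in(a_j,A_j)$. Every interaction $k\notin S_j$ contributes $g_k^\sharp(\eps)$, which depends on $\eps$ alone; the same holds for $l\notin\tS_j$ on the tilde side. The additive parts contribute $f_j(t)-\tf_j(t)$ plus constants. This gives, on the strip $U_j=(a_j,A_j)\times\R$,
\[
  \sum_{k\in S_j} g_k^\sharp(b_{kj}t+\eps)-\sum_{l\in\tS_j}\tg_l^\sharp(\tb_{lj}t+\eps)=\Psi_0(t)+\Psi_\infty(\eps),
\]
with $\Psi_0,\Psi_\infty$ measurable.

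Next, let $\xi_1,\dots,\xi_r$ be the distinct nonzero values in $\{b_{kj}:k\in S_j\}\cup\{\tb_{lj}:l\in\tS_j\}$. Grouping by slope, we define
\[
  \Psi_i:=\sum_{k\in S_j,\,b_{kj}=\xi_i} g_k^\sharp-\sum_{l\in\tS_j,\,\tb_{lj}=\xi_i}\tg_l^\sharp .
\]
The identity then takes exactly the form $\sum_{i=1}^r\Psi_i(\xi_it+\eps)=\Psi_0(t)+\Psi_\infty(\eps)$ on $U_j$, which is open and connected. The slopes $\xi_i$ are distinct and nonzero by construction, and each $\Psi_i$ is measurable because the profiles are continuous by \ref{affas:A1}.

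On a strip, Lemma~\ref{lem:aff-planar} gives $I_i=\R$, so each $\Psi_i$ is globally a polynomial and $[\Psi_i]=0$ in $\cQ$. Unpacking the definition of $\Psi_i$ yields $(\star_{j,\xi_i})$ for each $i$. If $\xi$ is not one of the $\xi_i$, both sides of $(\star_{j,\xi})$ are empty sums and the relation holds trivially.

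The main point requiring care is verifying the hypotheses of Lemma~\ref{lem:aff-planar} rather than any computation. First, terms sharing a slope on axis $j$ are genuinely cofoliar there, since the affine warps have all been normalised to $\eps$. They must therefore be merged into a single $\Psi_i$ rather than treated separately, which is exactly why the statement sums over $b_{kj}=\xi$. Second, the $\eps$-only terms coming from inactive interactions must be absorbed into $\Psi_\infty$, not into the $\Psi_i$.

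A second subtlety is that the reparametrisation $g_k^\sharp(s)=g_k(a_ks)$ requires $a_k\neq0$, which \ref{affas:A2} guarantees. The passage from equality in distribution to the pointwise identity \eqref{eq:aff-id} relies on Lemma~\ref{initial_step_lemma}, which needs monotonicity of the interaction part in $\eps$; this is built into $\mathcal M$.

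Assumptions \ref{affas:A3}--\ref{affas:supp} are not needed for the master relation itself; they enter only later, in the cross-axis matching.
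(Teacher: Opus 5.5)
Your proof is correct and is essentially the paper's argument. You restrict \eqref{eq:aff-id} to $x=te_j$, absorb the additive terms and the inactive ($\eps$-only) interactions into the separable right-hand side, group the active ridges of both models by their normalised slope $\xi$, and apply the strip case of Lemma~\ref{lem:aff-planar} to conclude that each grouped difference is a polynomial. Your extra remarks are also accurate and consistent with the paper: a slope absent from both models gives a trivially true relation, and \ref{affas:A3}--\ref{affas:supp} are not used at this step.
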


\begin{proposition}[Identifiable invariants]\label{prop:aff-invariant}
Assume \ref{affas:A1}--\ref{affas:supp}.
For each coordinate $j$, define the slope profile
\[
  \Pi_j:\R\setminus\{0\}\to\cQ,\qquad
  \Pi_j(\xi):=\sum_{\substack{k\in S_j\\ b_{kj}=\xi}}[\,g_k^\sharp\,].
\]
Any two observationally equivalent (reproducing the same law on $\mathcal{X}_{\mathrm{tr}}$) parameterizations satisfy $\Pi_j=\widetilde\Pi_j$ for
all $j$; each $\Pi_j$ is determined by the conditional law of the response on the strip
$U_j$ alone. 
\end{proposition}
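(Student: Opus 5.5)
The plan is to derive the statement directly from the axis-wise identity. Essentially, this amounts to re-running the argument behind the master relation of Lemma~\ref{lem:aff-master} while keeping track of which data are used. Fix a coordinate $j$. Take two parameterizations that reproduce the same law of the response at every point $te_j$, $t\in(a_j,A_j)$. Lemma~\ref{initial_step_lemma} then turns this equality of laws into the pointwise identity \eqref{eq:aff-id} at $x=te_j$, for all $t\in(a_j,A_j)$ and all $\eps\in\R$. Only the conditional law on the $j$-th axis enters here, which is the source of the claim that $\Pi_j$ is determined by the law on the strip $U_j$ alone.

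Next I will sort the terms of this axis identity. For $k\notin S_j$ the term $g_k^\sharp(b_k^\top te_j+\eps)=g_k^\sharp(\eps)$ depends on $\eps$ only, and the same holds on the tilde side. These terms are moved into a function $B_j(\eps)$. The main effects give $f_j(t)-\tf_j(t)$ plus constants, which are collected into $A_j(t)$. The remaining active terms are grouped by their slope $\xi\in\R\setminus\{0\}$. For each $\xi$ that occurs as some $b_{kj}$ or $\tb_{lj}$, set
\[
  \Psi_\xi:=\sum_{k\in S_j,\ b_{kj}=\xi} g_k^\sharp-\sum_{l\in\tS_j,\ \tb_{lj}=\xi}\tg_l^\sharp .
\]
The identity on $U_j$ then reads $\sum_\xi \Psi_\xi(\xi t+\eps)=A_j(t)+B_j(\eps)$, where the slopes $\xi$ are finitely many, distinct and nonzero. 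All functions involved are continuous by \ref{affas:A1}, hence measurable.

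Now Lemma~\ref{lem:aff-planar} applies on the strip $U_j=(a_j,A_j)\times\R$. Since each range $I_\xi$ equals $\R$, every $\Psi_\xi$ is a global polynomial, so $[\Psi_\xi]=0$ in $\cQ$. This is exactly \eqref{eq:aff-star}, that is, $\Pi_j(\xi)=\widetilde\Pi_j(\xi)$ for every slope occurring on either side. For a $\xi$ that occurs on neither side, both sums are empty and both sides vanish. Hence $\Pi_j=\widetilde\Pi_j$.

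It remains to check that $\Pi_j$ is a genuine invariant of the law and not an artifact of the chosen representation. This means verifying invariance under the indeterminacies of Lemma~\ref{lem:aff-indet}.
\begin{itemize}
\item Permutations only reorder the summands.
\item The rescaling $(\beta_k,a_k,g_k)\mapsto(\rho\beta_k,\rho a_k,g_k(\cdot/\rho))$ leaves both $b_k=\beta_k/a_k$ and $g_k^\sharp$ unchanged.
\item Polynomial redistribution changes each $g_k^\sharp$ by a polynomial, so the class $[g_k^\sharp]$ is unchanged.
\end{itemize}

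The only step needing care is the bookkeeping that isolates, on axis $j$, the inactive interactions as functions of $\eps$ alone. This matters because the planar lemma requires exactly the separable right-hand side $\Psi_0(t)+\Psi_\infty(\eps)$ and pairwise distinct nonzero slopes. Once the terms are grouped by slope as above, the rest is immediate.
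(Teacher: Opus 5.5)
Your proof is correct and follows the paper's argument. You apply Lemma~\ref{initial_step_lemma} at $x=te_j$, absorb the inactive interactions into $B_j(\eps)$, group both models' active terms by pooled slope, and use the strip case of Lemma~\ref{lem:aff-planar} to obtain $\Pi_j(\xi)=\widetilde\Pi_j(\xi)$ for every $\xi$. Your closing check of invariance under the indeterminacies of Lemma~\ref{lem:aff-indet} is harmless but redundant: those transformations already yield observationally equivalent parameterizations, so the first part of your argument covers them.
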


\begin{example}[Non-identifiability]\label{ex:aff-fail}
There exist two parameterizations satisfying \ref{affas:A1}--\ref{affas:supp} that are observationally equivalent on $\Xtr$
yet have distinct interaction supports. In particular they are not related by the
indeterminacies \textup{\ref{ind:perm}}--\textup{\ref{ind:poly}} of
Lemma~\ref{lem:aff-indet}.
\end{example}

\begin{proposition}[Cyclic non-identifiability family]\label{prop:aff-cycle}
For every $m\ge2$, setting $p=2m$ and $K=\tK=m$, there exist two parameterizations
satisfying \ref{affas:A1}--\ref{affas:supp},
with equal additive parts and a common profile $\phi$, that are observationally equivalent
on $\Xtr$ and whose interaction supports are the two cyclic perfect matchings of
$\{1,\dots,2m\}$: $\{\{1,2\},\{3,4\},\dots,\{2m-1,2m\}\}$ and
$\{\{2,3\},\{4,5\},\dots,\{2m,1\}\}$.
\end{proposition}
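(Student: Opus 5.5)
The plan is to build the two parameterizations explicitly, with every slope equal to one and a single shared profile. Then on each coordinate axis the two models both contain exactly one active interaction, with the same slope and the same profile. Observational equivalence on $\Xtr$ will follow by direct pointwise verification of \eqref{eq:aff-id}, and Lemma~\ref{initial_step_lemma} is not needed in the converse direction.

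\emph{Construction.} Fix $m\ge2$, set $p=2m$, and read indices modulo $2m$, so that $2m+1\equiv 1$. Take affine warps $h_k(\eps)=\tilde h_k(\eps)=\eps$ (so $a_k=\tilde a_k=1$) and directions
\[
\beta_k=b_k=e_{2k-1}+e_{2k},\qquad \tb_k=e_{2k}+e_{2k+1},\qquad k=1,\dots,m.
\]
Use the common profile $\phi(u)=e^{u}$ for all $g_k$ and $\tg_k$, and set $f_j=\tf_j\equiv0$. Then:
\begin{itemize}
\item \ref{affas:A1} and \ref{affas:A2} hold, since $\phi$ is continuous and non-polynomial and the warps are affine with median zero.
\item \ref{affas:A3} holds, since the $b_k$ have disjoint supports and are therefore pairwise distinct; the same is true of the $\tb_k$.
\item \ref{affas:supp} holds, since every support has exactly two elements and the supports within each model are disjoint; for $m=2$ the second matching is $\{\{2,3\},\{4,1\}\}$.
\item Strict monotonicity of $G(x,\cdot)$, required for membership in $\mathcal M$, holds because $\phi$ is strictly increasing.
\end{itemize}

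\emph{Verification of equivalence.} Take $x=te_j\in\Xtr$. In the first model exactly one index $k$ has $j\in\Btr_k$, and its coefficient is $b_{kj}=1$. The left side of \eqref{eq:aff-id} is therefore
\[
\phi(t+\eps)+(m-1)\phi(\eps).
\]
In the second model exactly one index $l$ has $j\in\tB_l$, again with coefficient $1$, so the right side is the same expression. This holds at every point of every axis, including the origin, where both sides equal $m\phi(\eps)$. Hence the two parameterizations agree pointwise on $\Xtr\times\R$, and so induce the same conditional law on $\Xtr$.

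\emph{Distinct supports.} The supports are the two cyclic perfect matchings of $\{1,\dots,2m\}$, and these matchings are disjoint. The only delicate point is arguing that no indeterminacy from Lemma~\ref{lem:aff-indet} relates the two models. Permutation and rescaling leave the family of supports unchanged, and polynomial redistribution does not touch the $\beta_k$ at all. Since the families of supports differ, the two models are not related by these indeterminacies. This is where \ref{affas:Lprofiles} fails: the profile classes $[g_k^\sharp]$ are not pairwise distinct within either model. That failure is consistent with Remark~\ref{rem:L}.
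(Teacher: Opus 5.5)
Your proposal is correct and follows the paper's proof: the paper sets $w_r=e_r+e_{r+1}$ (indices mod $2m$), gives the odd $w_r$ to one model and the even ones to the other, with unit slopes, affine warps and a common profile $\phi$, and checks that both sides of \eqref{eq:aff-id} reduce to $\phi(t+\eps)+(m-1)\phi(\eps)$ on every axis. Your additions are correct but not needed by the statement: you fix $\phi=e^{u}$, which makes strict monotonicity of $G(x,\cdot)$ explicit, and you note why the indeterminacies of Lemma~\ref{lem:aff-indet} cannot relate the two models.
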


\subsubsection*{Proof Theorem \ref{thm:aff-id} and with the auxiliary Lemmas}

\begin{proof}[Proof of Lemma~\ref{lem:aff-indet}]
Parts \ref{ind:perm} and \ref{ind:scale} are immediate from the definitions. For
\ref{ind:poly}: adding $q_k$ to $g_k$ shifts the interaction sum by
$\sum_k q_k(\beta_k^\top x+a_k\eps)$, which equals $\sum_j r_j(x_j)$ on $\Xtr\times\R$
by the definition of $\cR$; the compensating substitution $f_j\mapsto f_j-r_j$ restores
\eqref{eq:aff-id}. For the two explicit families in $\cR$: the constants $(q_k\equiv c_k)$
contribute $\sum_k c_k$ (a constant), so $r_j\equiv0$. For a linear tuple with
$\sum_k\lambda_k a_k=0$, evaluating on $x=te_j$ gives
\[
  \sum_k\lambda_k(\beta_{kj}t+a_k\eps)
  =\Bigl(\sum_k\lambda_k\beta_{kj}\Bigr)t+\Bigl(\sum_k\lambda_k a_k\Bigr)\eps,
\]
which is $\eps$-free and equals $\sum_j(\sum_k\lambda_k\beta_{kj})x_j$ across the axes;
hence $(\lambda_k s)\in\cR$ with $r_j(t)=(\sum_k\lambda_k\beta_{kj})t$. Constants occupy
degree $0$ and admissible linear forms degree $1$, so the two families span $\R^K\oplus\R^{K-1}$,
of dimension $2K-1$.
\end{proof}

\begin{proof}[Proof of Lemma~\ref{lem:aff-planar}]
Put $a^i:=(\xi_i,1)$ for $i=1,\dots,r$, $a^{r+1}:=(1,0)$, $a^{r+2}:=(0,1)$, and
$\Psi_{r+1}:=-\Psi_0$, $\Psi_{r+2}:=-\Psi_\infty$, so that the hypothesis reads
$\sum_{m=1}^{r+2}\Psi_m(\langle a^m,z\rangle)=0$ for $z=(t,\eps)\in U$. The directions
$a^1,\dots,a^{r+2}$ are pairwise linearly independent: $a^i,a^{i'}$ ($i\neq i'\le r$) by
$\xi_i\neq\xi_{i'}$; $a^i,a^{r+1}$ since their second coordinates are $1$ and $0$;
$a^i,a^{r+2}$ since $\xi_i\neq0$; and $a^{r+1},a^{r+2}$ trivially.

Fix $i\in\{1,\dots,r\}$. Let $\delta_v F(z):=F(z+v)-F(z)$. For each $m\neq i$ fix
$n_m\perp a^m$ (in the sense $\langle a^m,n_m\rangle=0$), $n_m\neq0$; then
$\theta_m:=\langle a^i,n_m\rangle\neq0$ by independence of $a^i$ and $a^m$. Given $h\in\R$
set $v_m:=(h/\theta_m)n_m$, so $\langle a^m,v_m\rangle=0$ and $\langle a^i,v_m\rangle=h$.
The composite $\prod_{m\neq i}\delta_{v_m}$ annihilates $\Psi_m$ for each $m\neq i$ and
reduces the $i$-th term to $(\Delta_h^{r+1}\Psi_i)(\langle a^i,\cdot\rangle)$; applying it
to $\sum_m\Psi_m(\langle a^m,\cdot\rangle)=0$ yields
\[
  (\Delta_h^{r+1}\Psi_i)(\langle a^i,z\rangle)=0.
\]
This holds for every $z$ such that the $2^{r+1}$ translates $z+\sum_{m\in I}v_m$,
$I\subseteq\{m\neq i\}$, lie in $U$. By openness there exists $\eta>0$ such that this
holds for all $z$ in a neighbourhood of any fixed $z_0\in U$ and all $|h|<\eta$; as $z$
ranges over this neighbourhood, $u=\langle a^i,z\rangle$ covers an open subinterval of
$I_i$. Hence $\Delta_h^{r+1}\Psi_i\equiv0$ on an open subinterval of $I_i$ for all
$|h|<\eta$. By Fr\'echet's theorem (measurable functions annihilated by all sufficiently
small $\Delta_h^{r+1}$ on an interval coincide there with a polynomial of degree $\le r$;
see \cite{aczel1966}, Ch.~3, and \cite{szekelyhidi1991}, Thm.~2.2), $\Psi_i$ is locally
polynomial on $I_i$. Since $I_i$ is an interval (image of the connected set $U$ under a
linear functional), these local polynomials patch to a single polynomial of degree $\le r$
on $I_i$. The uniqueness statement follows by applying the first part to
$\Psi_i-\widetilde\Psi_i$.
\end{proof}

\begin{proof}[Proof of Lemma~\ref{lem:aff-master}]
Substitute $x=te_j$ in \eqref{eq:aff-id}. Terms with $b_{kj}=0$ (resp.\ $\tb_{lj}=0$)
become functions of $\eps$ alone and are absorbed into $B_j(\eps)$; the remaining terms
yield, on the strip $U_j=(a_j,A_j)\times\R$,
\begin{equation}\label{eq:aff-axis}
  \sum_{k\in S_j}g_k^\sharp(b_{kj}t+\eps)-\sum_{l\in\tS_j}\tg_l^\sharp(\tb_{lj}t+\eps)
  =A_j(t)+B_j(\eps).
\end{equation}
Group the left side by slope: for each $\xi$ in the finite set $\Xi_j$ of nonzero slopes
appearing, set $\Phi_{j,\xi}:=\sum_{k\in S_j,\,b_{kj}=\xi}g_k^\sharp
-\sum_{l\in\tS_j,\,\tb_{lj}=\xi}\tg_l^\sharp$. Then
$\sum_{\xi\in\Xi_j}\Phi_{j,\xi}(\xi t+\eps)=A_j(t)+B_j(\eps)$ on $U_j$, with the $\xi$
distinct and nonzero. Lemma~\ref{lem:aff-planar} (strip case) forces each $\Phi_{j,\xi}$
to be a polynomial, i.e.\ $[\Phi_{j,\xi}]=0$, which is \eqref{eq:aff-star}.
\end{proof}

\begin{proof}[Proof of Proposition~\ref{prop:aff-invariant}]
\emph{Reduction to a per-slope identity.} By Lemma~\ref{initial_step_lemma}, observational
equivalence gives \eqref{eq:aff-id} on $\Xtr\times\R$. Put $x=te_j$, so $b_k^\top x=b_{kj}t$,
and let $\Xi_j:=\{b_{kj}:k\in S_j\}\cup\{\tb_{\ell j}:\ell\in\tS_j\}$ be the pooled set of
active slopes. Grouping both models' interactions by slope, the difference of the two responses
on $U_j$ is
\[
  \sum_{\xi\in\Xi_j}\Bigl(\Phi_\xi-\widetilde\Phi_\xi\Bigr)(\xi t+\eps)
  =A_j(t)+B_j(\eps),
  \qquad
  \Phi_\xi:=\!\!\sum_{k\in S_j:\,b_{kj}=\xi}\!\!g_k^\sharp,\quad
  \widetilde\Phi_\xi:=\!\!\sum_{\ell\in\tS_j:\,\tb_{\ell j}=\xi}\!\!\tg_\ell^\sharp,
\]
where an empty sum is $0$; the slopes in $\Xi_j$ are distinct and nonzero, and the noise
coefficient is $1$. The slopes in $\Xi_j$ are distinct and nonzero, and each ridge carries noise coefficient $1$,
so Lemma~\ref{lem:aff-planar} applies to the identity on the strip $U_j$: each summand
$\Phi_\xi-\widetilde\Phi_\xi$ agrees, on the line $I_\xi=\{\xi t+\eps:(t,\eps)\in U_j\}=\R$,
with a polynomial. Being a polynomial on $\R$ is exactly membership in $\cP$, so
$\Phi_\xi-\widetilde\Phi_\xi\in\cP$, i.e.\ $[\Phi_\xi]=[\widetilde\Phi_\xi]$ in
$\cQ=\mathrm{Meas}(\R)/\cP$, for every $\xi\in\Xi_j$.
Writing $\Pi_j(\xi):=[\Phi_\xi]$ (and $0$ off
$\Xi_j$), this is $\Pi_j(\xi)=\widetilde\Pi_j(\xi)$ for every $\xi$. In particular, at a slope
active in only one parameterization the other's class is $0$, so that model's non-polynomial
content at that slope vanishes; the active slopes need not coincide, but they may differ only
where the interaction content is polynomial. As \eqref{eq:aff-id} follows from the conditional
law, so does $\Pi_j=\widetilde\Pi_j$, and each $\Pi_j(\xi)$ is recovered from the law on $U_j$
alone.
\end{proof}

\begin{proof}[Proof of Proposition~\ref{prop:aff-cycle}]
Set $w_r:=e_r+e_{r+1}$ with indices modulo $2m$. The true model uses the $m$ odd directions
$w_1,w_3,\dots,w_{2m-1}$, the alternative uses the $m$ even directions
$w_2,w_4,\dots,w_{2m}$, all with entries $1$ and profile $\phi$. Within each model the
supports $\{r,r+1\}$ are pairwise disjoint and distinct, and the augmented vectors
$(w_r,1)$ have distinct supports, hence are pairwise non-proportional; so \ref{affas:A1}--\ref{affas:supp} hold. Coordinate $j$ belongs to exactly $w_{j-1}$ and $w_j$; on axis $j$
exactly one direction from each model is active (with slope $1$ and argument $t+\eps$)
while the other $m-1$ terms contribute $\phi(\eps)$. Both sides of \eqref{eq:aff-id}
therefore restrict to $\phi(t+\eps)+(m-1)\phi(\eps)$ on each axis and to $m\phi(\eps)$
at the origin, establishing observational equivalence. For $m=2$ this recovers the
construction of Examples in Proposition~\ref{prop:aff-cycle}.
\end{proof}

\begin{proof}[Proof of Theorem~\ref{thm:aff-id}]
Work in normalized coordinates. Each interaction $k$ (of either model) has a profile class
$[g_k^\sharp]\in\cQ$; let $Q_1,\dots,Q_M$ be the distinct values these classes take across both
models, so that every $[g_k^\sharp]$ and every $[\tg_l^\sharp]$ equals some $Q_m$. By
\ref{affas:Lprofiles} the $Q_1,\dots,Q_M$ are linearly independent in $\cQ$.

Fix $j$ and $\xi\neq0$. By Lemma~\ref{lem:aff-master}, \eqref{eq:aff-star}
reads, in $\cQ$,
\[
  \sum_{k\in S_j:\,b_{kj}=\xi}[g_k^\sharp]
  =\sum_{l\in\tS_j:\,\tb_{lj}=\xi}[\tg_l^\sharp].
\]
Collecting equal classes, each side is a combination $\sum_{m}n_m Q_m$ in which the coefficient
$n_m$ counts the interactions of that model with slope $\xi$ whose class is $Q_m$; explicitly,
$n_m=\#\{k\in S_j:b_{kj}=\xi,\,[g_k^\sharp]=Q_m\}$ on the left, and likewise on the right. By
\ref{affas:Lprofiles} the classes within a single model are distinct, so no two interactions
share a $Q_m$ and each $n_m\in\{0,1\}$. Linear independence of $Q_1,\dots,Q_M$ makes the
representation $\sum_m n_m Q_m$ unique, so the two sides are equal only if their coefficients
agree termwise: for every class $Q$,
\begin{equation}\label{eq:aff-count}
  \#\{k\in S_j: b_{kj}=\xi,\,[g_k^\sharp]=Q\}
  =\#\{l\in\tS_j: \tb_{lj}=\xi,\,[\tg_l^\sharp]=Q\}.
\end{equation}

Fix a true index $k$ and set
$Q=[g_k^\sharp]$. Choose $j\in\Btr_k$ and
$\xi=b_{kj}\neq0$. The left count in \eqref{eq:aff-count} is $1$, so the right count is
$1$: some $\tilde\imath$ satisfies $[\tg_{\tilde\imath}^\sharp]=Q$. By symmetry the two class sets are equal.
Since neither model repeats a class, $K=M=\tK$, and
\[
  \pi(k):=\text{the unique $\tilde{\imath}$ with }[\tg_{\pi(k)}^\sharp]=[g_k^\sharp]
\]
is a well-defined bijection.

Fix $k$ and $Q=[g_k^\sharp]=[\tg_{\pi(k)}^\sharp]$.
For every $j\in\Btr_k$, set $\xi=b_{kj}\neq0$; the left count in \eqref{eq:aff-count} is
$1$, so the right count is $1$, realised by $\pi(k)$; hence $\tb_{\pi(k)j}=b_{kj}$. As
this holds for all $j\in\Btr_k$, we have $\Btr_k\subseteq\tB_{\pi(k)}$ and slopes agree
on $\Btr_k$. Applying the argument to $\pi^{-1}$ at $\pi(k)$ gives
$\tB_{\pi(k)}\subseteq\Btr_k$; hence $\Btr_k=\tB_{\pi(k)}$ and $b_k=\tb_{\pi(k)}$.

From $b_k=\tb_{\pi(k)}$, i.e.\
$\beta_k/a_k=\tb_{\pi(k)}/\ta_{\pi(k)}$, set $\rho_k:=a_k/\ta_{\pi(k)}\neq0$; then
$\beta_k=\rho_k\tb_{\pi(k)}$ and $a_k=\rho_k\ta_{\pi(k)}$. From
$[g_k^\sharp]=[\tg_{\pi(k)}^\sharp]$ there exists $p_k\in\cP$ with
$g_k^\sharp=\tg_{\pi(k)}^\sharp+p_k$; unfolding via $g_k^\sharp(s)=g_k(a_ks)$ gives
$g_k(u)=\tg_{\pi(k)}(u/\rho_k)+q_k(u)$ with $q_k(u)=p_k(u/a_k)\in\cP$.
\end{proof}




\end{document}